\begin{document}

\title{Bounded Simultaneous Messages}
\author[1]{Andrej Bogdanov}
\author[2]{Krishnamoorthy Dinesh}
\author[3]{Yuval Filmus}
\author[3]{Yuval Ishai}
\author[3]{Avi Kaplan}
\author[4]{Sruthi Sekar}
\affil[1]{School of EECS, University of Ottawa}
\affil[2]{Dept. of Computer Science and Engineering, Indian Institute of Technology, Palakkad}
\affil[3]{The Henry and Marylin Taub Faculty of Computer Science, Technion}
\affil[4]{University of California, Berkeley}

\maketitle

\begin{abstract}
We consider the following question of {\em bounded simultaneous messages} (BSM) protocols: Can computationally unbounded Alice and Bob evaluate a function $f(x,y)$ of their inputs by sending polynomial-size messages to a computationally bounded Carol? The special case where $f$ is the mod-2 inner-product function and Carol is bounded to AC$^0$ has been studied in previous works. The general question can be broadly motivated by applications in which distributed computation is more costly than local computation, including secure two-party computation.

In this work, we initiate a more systematic study of the BSM model, with different functions $f$ and computational bounds on Carol. In particular, we give evidence against the existence of BSM protocols with polynomial-size Carol for naturally distributed variants of NP-complete languages. 
\end{abstract}

\section{Introduction} \label{sec:introduction}

The simultaneous messages model 
is a model for evaluating a function $f(x, y)$ of two inputs: Alice, who holds $x$ and Bob, who holds $y$, simultaneously send messages to a referee Carol, who outputs the value $f(x, y)$.  This model was introduced by Yao~\cite{Yao79} and further studied by Babai \textit{et al.}~\cite{BGKL03} in the context of communication complexity, where the interest is to determine the minimum length of Alice's and Bob's messages required to specify $f$.\footnote{Babai \textit{et al.} and others~\cite{BGKL03,NW93,PRS97} more generally study a multi-party variant of this model.}

We are primarily interested in Carol's \emph{computational} complexity, assuming Alice and Bob are unbounded.  Specifically, our objective is to understand which functions require large \emph{bounded} simultaneous messages (BSM) protocols.  As the BSM complexity of any function can only be smaller than its standalone computational complexity, explicit constructions of provably hard functions for this model are subject to the usual barriers~\cite{RR94,AW08}.  Nevertheless, owing to its connections to other complexity notions~\cite{PRS88, Juk06, Lok09},\footnote{In this literature $f$ is the adjacency function of a bipartite graph and BSM complexity is called \emph{bipartite graph complexity}.} explicit almost-cubic lower bounds have been obtained for depth-3 AND-OR formulas~\cite{Lok03}, and almost-quadratic lower bounds for arbitrary AND-OR formulas~\cite{Tal17}, both witnessed by the ``inner product modulo two'' (IP) function.  It remains a challenge to prove BSM lower bounds for other models in which circuit lower bounds are known such as depth-3 AND-OR circuits~\cite{PSZ00} and parities of DNF~\cite{CS16}.

Motivated by cryptographic applications, Rothblum~\cite{Rot12} conjectures that IP has exponential-size bounded-depth unbounded-fan-in BSM complexity. In particular, IP does not admit a BSM protocol with Carol in AC$^0$. Filmus \textit{et al.}~\cite{FIKK20} confirmed the conjecture assuming Alice's message is not much longer than her input.

\paragraph*{BSM and secure two-party computation}
A primary motivation for revisiting the BSM model stems from its relevance to cryptographically secure two-party computation.  In this model, the BSM model is the ``ideal model'' in which evaluation is carried out by a trusted intermediary Carol.  A crowning achievement of modern cryptography is the development of secure two-party protocols that dispense off the need for a trusted Carol~\cite{Yao86,GMW87,BGW88,CCD88}. However, the complexity of the ideal model is of interest for the following reasons.

First, despite tremendous progress in the design of two-party protocols for secure function evaluation, the overhead remains substantial.  Moreover, certain features of trusted evaluation, such as fairness, are inherently lost by any two-party implementation.  In practice, a trusted Carol is often preferred over a two-party protocol, and Carol's computational cost needs to be compensated. BSM complexity captures Carol's minimal computational cost incurred under the best-possible preprocessing of Alice's and Bob's inputs.

Second, in the design of two-party protocols, it can be desirable to preprocess the inputs so as to minimize the complexity of the interactive phase.  The complexity measure to be optimized depends on the type of protocol.  For instance, in protocols based on oblivious transfer~\cite{GMW87,Kil88,IPS08}, the communication complexity and round complexity of the protocol are proportional to the circuit size and circuit depth of $f$, respectively.  In protocols that rely on fully homomorphic encryption~\cite{Gen09,BV14} and avoid an expensive ``bootstrapping'' technique, the computational cost is governed by the degree of $f$ as a polynomial over some underlying finite field. The degree also determines the fraction of corrupted servers that can be tolerated in non-interactive secure computation protocols that employ a set of untrusted servers. In this context, BSM complexity captures the best-possible savings that can be obtained by endowing Alice and Bob with unbounded computational power in the preprocessing phase.  Conversely, a lower bound in the BSM model indicates limits of preprocessing in a two-party protocol design.

With these motivations in mind, we study the BSM complexity of natural classes of functions that may arise in two-party computation, including Boolean circuits, arithmetic circuits, low-degree polynomials, and time-unbounded halting Turing machines.  As some of these classes do not admit explicit lower bounds, any evidence of their hardness must be conditional.  One of our contributions is the development of hardness reductions and completeness results for BSM complexity.

\paragraph*{Generic constructions and non-explicit bounds}
Our primary focus is on Carol's computational complexity, with Alice's and Bob's message size (i.e., communication complexity) as a secondary parameter.  Clearly every function $f\colon \{0, 1\}^n \times \{0, 1\}^n \to \{0, 1\}$ has  a regular circuit size of at most $2^{2n}$ and therefore BSM circuit size at most $2^{2n}$ with message size $n$.   We are interested in the regime where Alice's and Bob's messages have a length of at least $n$.

It is also known that every function has a BSM  circuit (parity of ANDs) of size $O(2^n)$ with message size $2^n$.  This is close to optimal as most functions require BSM circuits of size $\Omega(2^n)$.  In contrast, most functions' circuit size is $\Omega(2^{2n}/n)$.
Although BSM complexity can be much smaller than circuit complexity, every BSM of circuit size $s$ and total message length $\ell$ can be simulated by a circuit of size $\ell \cdot 2^n + s$.

\subsection{Our Results}\label{sec:our results introduction}

\paragraph*{Conditional impossibility of efficient BSM for NP}
As our first result, we study a distributed variant of SAT, which we refer to as \emph{split SAT}:
\[ \SSAT := \set{(\alpha, \beta) : \alpha, \beta \text{ are CNFs and } \alpha \wedge \beta \text{ is satisfiable}}. \]
 In Section~\ref{ssec:split-SAT-and-NP-languages} we show that if $\SSAT$ has BSM circuit complexity $\poly(\abs{\alpha} + \abs{\beta})$ then every language in $\NP$ can be decided by circuit families of size $O(\poly(n)\cdot 2^{n/2})$.  In contrapositive form, no efficient $\BSM$ for $\SSAT$ exists unless $\NP$ has circuits of size $\poly(n)\cdot 2^{n/2}$.  

We conjecture that the following two $\NP$-languages meet this stringent lower bound:
\begin{enumerate}
	\item Turing Machine Acceptance:  The language $ATM_c$ consists of all pairs $(M, x)$ where $M$ is a nondeterministic Turing Machine that accepts $x$ in $\abs{x}^c$ steps.  Here $c$ is a fixed constant; the conjecture is plausible for any $c > 1$.  This language is complete in the sense that if $\mathrm{NTIME}(n^c)$ requires circuit size $s(n)$ then $ATM_c$ requires circuit size $s(n - O(1))$.
	\item Succinct Subset Sum: Here an $n$-bit input $x$ implicitly defines an instance of the {\em subset sum} problem of size $N\gg n$.\footnote{This is a modified variant of the succinct subset sum candidate from the original version of this paper. In the original candidate, an input $x$ of length $n$ defined a subset sum instance consisting of $n$ elements in $\F_{3^m}$. It was speculated that the $O(2^{n/2})$-time subset sum algorithms in the RAM-model~\cite{HS74, SS81, NW21} do not imply circuits of comparable size. However, as pointed out by Ryan Williams, these algorithms can in fact be implemented by circuits of size $\poly(n)\cdot 2^{n/2}$. Our current candidate uses a bigger implicit subset sum instance to avoid these attacks.  } In contrast to the previous ATM candidate, the current one can be conjectured to be {\em hard on average}. More concretely, let $G$ be a ``pseudorandom'' mapping from an $n$-bit input $x$ to a subset sum instance $G(x)=(x_1,...,x_N)$, where $N=10n$, and each $x_i$ is in $\Z_{2^N}$. The yes-instances are those $x$ for which there exist coefficients $a_0, a_1, \cdots, a_N \in \{0, 1\} \subseteq \Z_{2^N}$ such that $a_0 + a_1x_1 + \ldots + a_nx_N = 0$ and not all $a_i$ are 0.
    The function $G$ can be heuristically instantiated in many ways. To give a concrete proposal: let the output of $G(x)$ consist of the sequence of $N^2$ quadratic characters of $x+1,x+2,\ldots,x+N^2$ modulo the first $(n+1)$-bit prime.

\end{enumerate}

As a consequence, in Corollary~\ref{cor:inversion} we show BSM hardness for $\NP$ follows from the existence of very strong worst-case one-way functions.  The required inversion complexity is as large as $\poly(n) \cdot 2^{n/2}$ under a plausible hardness assumption and $\poly(n) \cdot 2^{2n/3}$ unconditionally (for length-preserving functions).  The best-known algorithms for generic function inversion have conjectured worst-case complexity $\poly(n) \cdot 2^{2n/3}$ in the RAM model with preprocessing~\cite{Hel80}. The best unconditional upper bound is $\poly(n) \cdot 2^{3n/4}$~\cite{FN00}.  
This bound was recently adapted to the circuit model in~\cite{MazorP23a,HIW23}.

\paragraph*{BSM and Instance Hiding} 
We consider the notion of instance hiding introduced by Beaver and Feigenbaum~\cite{BF90}. An instance hiding (IH) scheme consists of a primary actor, called Henry, with query access to independent oracles.  To compute a function on a given input, Henry may query the oracles, but in such a way that each oracle learns nothing about the input, meaning that the query distribution each oracle sees depends only on the input length. Beaver and Feigenbaum allow interactive oracle queries to multiple oracles;  we specialize to non-interactive queries and two oracles. 

Fortnow and Szegedy~\cite{FS91} asked whether languages in $\NP$ can have an instance hiding scheme where Henry is a polynomial-sized circuit. They showed that $\SAT$ cannot have $\poly(n)$-sized IH scheme where the two oracles return $1$-bit answers under a standard complexity-theoretic assumption. However, a similar statement with oracles returning $2$ or more bits is unknown. The hardness assumption about IH which we make here is that languages in $\NP$ do not have a $\poly(n)$-sized instance hiding scheme.

In Proposition~\ref{prop:SAT-split-hide-reduces-to-SSAT} we show that if BSM for SSAT has circuit complexity $\poly(\abs{\alpha} + \abs{\beta})$ then every language in $\NP$ has a polynomial-size instance hiding scheme.  We interpret this as additional evidence against the existence of efficient BSM for $\NP$.
The same holds for the split variants of 3-coloring (Proposition~\ref{prop:3COL-split-hide-reduces-to-SCOL}) and partition (Proposition~\ref{prop:PARTITION-split-hide-reduces-to-SPRT}):
\begin{gather*}
\SCOL := \set{(A, B) : A, B \text{ are graphs on the same vertex set, } A \cup B \text{ is 3-colorable}}, \\
\SPRT := \set{(A, B) : \begin{matrix} A, B \text{ are sets of non-negative integers, } \\ A \cup B \text{ can be partitioned into two sets of equal sum} \end{matrix}}.
\end{gather*}

We achieve this via a notion of a reduction that we call \textit{split-hide} reduction (Definition~\ref{def:split-hide-reduction}). For a language $A$, we define a language $B$ which is a ``split variant'' of $A$, such that if $A$ split-hide reduces to $B$ and  $B$ has a BSM protocol with polynomial sized Carol, then $A$ has IH schemes where Henry is polynomial sized (Lemma~\ref{lem:split-hide-reduction-BSM-to-IH}). In addition, if $A$ is $\NP$-complete, then every language in $\NP$ has polynomial sized IH schemes (Corollary~\ref{cor:split-hide-reduction-IH-NP})

Having established this relation between instance hiding and BSM (in Section~\ref{ssec:instance-hiding}), we also obtain a connection between instance hiding and Private Information Retrieval~\cite{CGKS95}  and use this connection (in the form of reduction between the models) to obtain universal lower bounds on IH schemes (Appendix~\ref{ssec:Universal-lower-bound-for-the-instance-hiding-model}). This, in turn, gives a two-way reduction chain between BSM and locally decodable codes (Appendix~\ref{ssec:from-BSM to-LDC-and-vice-versa}). Both of these connections could be of independent interest. 

\paragraph*{BSM and algebraic polynomials}
We now move on to the setting where Carol is an algebraic polynomial over some ring. As explained before in the introduction, this setting is motivated by the design of secure two-party protocols that rely on fully homomorphic encryption~\cite{Gen09,BV14} and avoids expensive bootstrapping (where the computational cost is dependent on the degree of Carol). Every function $f \co \zo^n \times \zo^n \to \zo$ can be represented by a polynomial of degree at most $2n$, and this is also tight. We ask whether this bound can be reduced if we allow preprocessing. To this end, we define $d_f(n, m)$ to be the minimal degree of a polynomial over $\F_2$ such that there exists a BSM protocol for $f$ with preprocessing length of $m$ bits and Carol computing the polynomial, and we define \[ d(n, m) := \max\set{d_f(n, m) \co f \co \zo^n \times \zo^n \to \zo}. \]
We prove a tight bound of $\Theta(n/\log n)$ for the case $m = \poly(n)$.  The upper bound was obtained by Beaver et al.~\cite{BFKR97}.  We show a lower bound for the equality function (Proposition~\ref{prop:degree-reduction-equality-lower-bound}). 

Going ahead, we ask if there are any savings (in terms of degree) in computing the equality function if we allow Carol to be a polynomial over $\Z_k$ where $k$ is composite, followed by a Boolean decision predicate $P:\Z_k\to\{0,1\}$. We show that this is indeed the case for $k=6$ (Proposition~\ref{prop:BSM-via-matching-vector}) using known constructions of matching vectors~\cite{Gro00}. In particular, we give a constant degree Carol, but with superpolynomial ($\exp(\wt{O}(\sqrt{n}))$) message length over $\Z_6$ which should be contrasted with the BSM protocol over $\F_2$ (Proposition~\ref{prop:universal-degree-reduction-upper-bound}) of polynomial message length but with near linear degree.

\paragraph*{BSM, Matrix multiplication and RE languages}
We now turn to the setting where Carol is an arithmetic circuit as well as a Turing machine. We start with the setting of an arithmetic circuit. The problem of interest is the matrix multiplication over reals. A remarkable result of Strassen~\cite{S69} showed a non-trivial algorithm for matrix multiplication of two $n\times n$ matrices over reals in $O(n^{\log_2 7})$ time. Since then, there has been a flurry of improvements (notably~\cite{CW81,DS13,AW21}) where the best value of the exponent in the runtime is denoted by $\omega$. These results also imply an arithmetic circuit performing $O(n^{\omega})$ additions and multiplications over reals computing the matrix product. A natural question is: can there be a BSM protocol where Carol is an arithmetic circuit of size $o(n^{\omega})$ such that the preprocessing can have a non-trivial saving in size of Carol?  We answer this question in negative in Theorem~\ref{thm:matrix-multiplication-with-preprocessing}.

Finally, we turn to the setting where Carol is a Turing machine. In Proposition~\ref{prop:computability-with-preprocessing-RE}, we show that any recursively enumerable $f\in{\sf RE}$ (viewed as a language) has a BSM protocol where Carol computes a decidable language (in ${\sf R}$).  

\paragraph*{Summary of results} The results are summarized in Table~\ref{tbl:result-summary} according to Carol's computational power.
\begin{itemize}
	\item $\BSM(m, s)$ is the class of all functions $f \co \zo^n \times \zo^n \to \zo$ that have a two-party SM protocol of size $s$ (namely, Carol is implemented by a circuit of size $s$) and preprocessing of length $m$ (namely, Alice and Bob send messages of length $m$; we place a ``$\cdot$'' symbol in cases where the result is independent of the corresponding parameter).	
	\item $\IH(a, s)$ is the class of all functions $f \co \zo^n \to \zo$ that have a two-oracle Instance Hiding scheme in which Henry's circuit size is $s$ and each oracle answer is of length $a$.
\end{itemize}
We use $\RE$ and $\mathsf{R}$ to denote the recursively enumerable and recursive languages respectively.

\renewcommand{\arraystretch}{1.25}
\begin{table}[ht]
	\caption{Summary of Results} \label{tbl:result-summary}
	\centering
	\begin{tabular}{l l l}
		\toprule
		{\textbf{Bound}} & {\textbf{Result}} & {\textbf{Reference}} \\
		\toprule
		{Size} & {$\SSAT \in \BSM(\cdot, \poly(n))$ $\implies$ $\NP \seq \SIZE(\poly(n) \cdot 2^{n/2})$} & Thm.~\ref{thm:BSM-for-split-SAT-implies-upper-circuit-bound-for-NP} \\
		& {\phantom{$\SSAT \in \BSM(\cdot, \poly(n))$} $\implies$ No $\wt{O}(2^{2n/3})$-secure length preserving OWF} & Cor.~\ref{cor:inversion} \\
		& {$\SSAT/\SCOL/\SPRT \in \BSM(\cdot, \poly(n))$ $\implies$ $\NP \seq \IH(\cdot, \poly(n))$} & Thm.~\ref{thm:SSAT-SCOL-SPRT-NP-IH} \\
		& {$g(x \oplus y) \in \BSM(\cdot, s)$ $\implies$ $g \in \SIZE((s/n) \cdot 2^{n/2})$} & Prop.~\ref{prop:BSM-to-circuit-upper-bound} \\
		& {$g(x \vee y), g(x \wedge y) \in \BSM(2^{0.729n}, \cdot)$ for all functions $g \co \zo^n \to \zo$} & Thm.~\ref{thm:BSM-upper-bound-for-all-bitwise-combined-functions} \\
		\midrule
		{Arithmetic} & {Matrix multiplication with preprocessing requires $\Omega(n^\omega)$ operations} & Thm.~\ref{thm:matrix-multiplication-with-preprocessing} \\
	      \midrule
		{Degree} & {$d(n,m) = O(n/\log(m/n))$ for $n < m$}  & Prop.~\ref{prop:universal-degree-reduction-upper-bound} \\
		& {$d_\EQ(n, m) = \Omega(n/\log m)$ for $n \leq m$} & Prop.~\ref{prop:degree-reduction-equality-lower-bound} \\
		& {$d(n,n^{C+1}) = \Theta(n/\log n)$ ($C>0$ constant)} & Cor.~\ref{cor:poly-degree-reduction-tight-bound} \\
		& {$d_\EQ(n, m) = 2$ for $m = 2^{\wt{\Theta}(\sqrt{ n})}$} (over $\Z_6$) & Prop.~\ref{prop:BSM-via-matching-vector} \\
		\midrule
		{Recursive} & $\RE \seq \BSM(2n+1, \mathsf{R})$ & Prop.~\ref{prop:computability-with-preprocessing-RE} \\
		\bottomrule
	\end{tabular}
\end{table}

\subsection{BSM protocols implicit in literature}
In Appendix~\ref{ssec:bitwise-combined-functions} we discuss BSM protocols for a subclass of Boolean functions that can be obtained readily from known results in the literature. In particular, we consider the bitwise-combined functions which are functions of the form $f(x,y) = g(x * y)$ where $g$ is a Boolean function on $n$ bits and $* \co \zo^n \times \zo^n \to \zo^n$ is a \textit{combiner function} that takes in two $n$ bit strings and output an $n$ bit string. Babai \textit{et.al}~\cite{BGKL03} considers SM protocols, where $*$ is the bitwise XOR combiner function, with a preprocessing length of $O(2^{0.92n})$. Ambainis and Lokam~\cite{AL00} improved this to obtain SM protocols for bitwise XOR combined functions where the preprocessing length is bounded by $O(2^{0.729n})$. A standard counting argument shows that most XOR combined functions require a preprocessing length of $\Omega(2^{0.5n})$.  (Proposition~\ref{prop:lower-bound-for-most-bitwise-XOR combined-functions}). The XOR combined functions are interesting since an efficient BSM protocol computing $f$ would give an efficient IH scheme for $g$ (Proposition~\ref{prop:bitwise-XOR-BSM-to-IH}). Using this connection (in Section~\ref{sssec:combiners-and-structural-complexity}), we could rule out bitwise-XOR combined version of $\NP$-hard languages from having $\poly(n)$ sized BSM protocols with message complexity of $n+1$ unless the polynomial hierarchy collapse and bitwise-XOR combined version $\PSPACE$-hard languages having $\poly(n)$ sized BSM protocols with a message complexity of $n+O(\log n)$ unless $\PSPACE \subseteq \PP^{\NP}/\poly$. These conclusions rely on structural complexity results about instance hiding due to Fortnow and Szegedy~\cite{FS91} and Tripathi~\cite{Tri10}.

We continue the study where the bitwise combiner function is a bitwise OR as well as bitwise AND function and obtain SM protocols with a preprocessing length of $O(2^{0.729n})$ matching the result of Ambainis and Lokam (Theorem~\ref{thm:BSM-upper-bound-for-all-bitwise-combined-functions}) for XOR combined functions~\cite{AL00}. 

\subsection{Paper organization}
The overview for the rest of the paper is as follows. We study the BSM model under computational restrictions on Carol. In Section~\ref{sec:BSM-for-NP-problems}, we restrict the Carol to be a polynomial-sized circuit, and show that the Split-SAT problem ($\SSAT$) is unlikely to have polynomial sized BSM protocols and we extend the same to two other distributed variants of $\NP$-complete problems. In Section~\ref{sec:arithmetic-circuits-of-bounded-size}, we restrict Carol to be an arithmetic circuit and in Section~\ref{sec:polynomials-of-bounded-degree}, we restrict Carol to be an algebraic polynomial over $\F_2$. In Section~\ref{sec:computability-with-preprocessing}, we discuss the setting where Carol is a Turing machine and show that it is possible to decide any $\RE$ language (including the Halting problem) with preprocessing.

In Appendix~\ref{sec:related-models}, we provide connections of BSM to related models. In particular, we exploit this connection to argue that most Boolean functions do not have efficient instance hiding schemes.
In Appendix~\ref{sec:implicit-BSM-bounds}, we discuss a few BSM protocols that are implicit in the existing literature.

\paragraph*{Acknowledgments}
We thank Rahul Ilango, Ivan Mihajlin, Hanlin Ren, and Ryan Williams for helpful discussions.
A. Bogdanov was supported by an NSERC Discovery Grant.
Y. Filmus was supported by the European Union's Horizon 2020 research and innovation programme under grant agreement No 802020-ERC-HARMONIC.
Y. Ishai was supported by ERC Project NTSC (742754), ISF grant 2774/20, and BSF grant 2018393.
A. Kaplan was supported by the European Union's Horizon 2020 research and innovation programme under grant agreement No 802020-ERC-HARMONIC, and ERC Project NTSC (742754).
S. Sekhar was supported by DARPA under Agreement No. HR00112020026, AFOSR Award FA9550-19-1-0200, NSF CNS Award 1936826, and research grants by the Sloan Foundation, and Visa Inc.

\section{Bounded Simultaneous Messages for NP problems} 
\label{sec:BSM-for-NP-problems}

In this section we show that the existence of BSM for split variants of certain NP complete problems yields unlikely consequences, including better than brute-force circuits for all NP languages (Section~\ref{ssec:split-SAT-and-NP-languages}), one-way functions (Section~\ref{ssec:owf}), and instance hiding schemes for all of NP (Section~\ref{ssec:instance-hiding}).

\subsection{Split SAT and NP languages} \label{ssec:split-SAT-and-NP-languages}

    \begin{theorem} \label{thm:BSM-for-split-SAT-implies-upper-circuit-bound-for-NP}
        If $\SSAT$ has a $2$-party BSM protocol of size $s(n)$ and message length $\ell(n)$ then every $L$ in $\NP$ has circuit size $O(\ell(p(n)) \cdot 2^{n/2}) + s(p(n))$ for some polynomial $p$.
    \end{theorem}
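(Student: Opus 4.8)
The plan is to combine the Cook--Levin reduction with a ``meet in the middle'' packaging of the input: one half of the input is absorbed into Alice's CNF and the other half into Bob's CNF, so that the two messages of the assumed $\SSAT$ protocol can be precomputed into two lookup tables of size roughly $2^{n/2}$ each, after which Carol's circuit finishes the job.

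\emph{Step 1: a split reduction from $L$ to $\SSAT$.} Fix $L\in\NP$ with a polynomial-time verifier $V$ and witness length $q(n)$. By the Cook--Levin theorem there is, for each $n$, a CNF $\Phi_n$ of size $\poly(n)$ over variables $z_1,\dots,z_n$ (carrying the input), $q(n)$ witness variables, and $\poly(n)$ tableau variables, such that for every $x\in\zo^n$ the formula $\Phi_n|_{z=x}$ is satisfiable iff $x\in L$. Given an input $x$, write $x=x_1x_2$ with $|x_1|=\lceil n/2\rceil$ and $|x_2|=\lfloor n/2\rfloor$, and put
\[
 \alpha_{x_1} := \Phi_n \wedge \bigwedge_{i\le |x_1|} c_i(x_1), \qquad
 \beta_{x_2} := \bigwedge_{j\le |x_2|} c'_j(x_2),
\]
where $c_i(x_1)$ is the unit clause $(z_i)$ or $(\overline{z_i})$ pinning $z_i$ to the $i$-th bit of $x_1$, and $c'_j(x_2)$ similarly pins $z_{|x_1|+j}$ to the $j$-th bit of $x_2$. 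Then $\alpha_{x_1}$ and $\beta_{x_2}$ are CNFs, $\alpha_{x_1}$ depends only on $x_1$ (and on $n$ and $L$), $\beta_{x_2}$ depends only on $x_2$, and the size of the pair $(\alpha_{x_1},\beta_{x_2})$ is a fixed polynomial $p(n)$ that does not depend on $x$ (the pinning clauses have the same shape regardless of the bit values). Crucially, $\alpha_{x_1}\wedge\beta_{x_2}$ is satisfiable iff $\Phi_n|_{z=x}$ is satisfiable iff $x\in L$; that is, $(\alpha_{x_1},\beta_{x_2})\in\SSAT$ exactly when $x\in L$.

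\emph{Step 2: from the $\SSAT$ protocol to a circuit for $L$.} Instantiate the assumed BSM protocol on $\SSAT$ instances of size $p(n)$: let $A$ and $B$ be Alice's and Bob's (deterministic) encoders, each with output length $\ell(p(n))$, and let $C$ be Carol's circuit of size $s(p(n))$, so that on all such instances $C(A(\alpha),B(\beta))$ outputs $1$ iff $\alpha\wedge\beta$ is satisfiable. Build a circuit $D_n$ on $n$-bit inputs: (i) a multiplexer driven by the control bits $x_1$ that outputs the hardwired constant $A(\alpha_{x_1})$, one of $2^{\lceil n/2\rceil}$ possibilities --- size $O(\ell(p(n))\cdot 2^{n/2})$; (ii) a symmetric multiplexer driven by $x_2$ outputting the hardwired constant $B(\beta_{x_2})$ --- size $O(\ell(p(n))\cdot 2^{n/2})$; (iii) a copy of $C$ applied to these two messages --- size $s(p(n))$. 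By Step 1, $D_n(x)=C(A(\alpha_{x_1}),B(\beta_{x_2}))$ equals $1$ iff $\alpha_{x_1}\wedge\beta_{x_2}$ is satisfiable iff $x\in L$, and $|D_n|=O(\ell(p(n))\cdot 2^{n/2})+s(p(n))$, which is the claimed bound.

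\emph{Where the difficulty lies.} The only real content is Step 1: the reduction must be arranged so that the ``$\alpha$ part'' depends on the first half of $x$ only and the ``$\beta$ part'' on the second half only --- this is exactly what makes each lookup table of size $2^{n/2}$ rather than $2^{n}$, and it works because all the structural Cook--Levin clauses can be parked inside $\alpha$ while $\beta$ merely pins down the remaining input bits. Everything else is routine bookkeeping: the $\SSAT$ instance has polynomial size, so $\ell$ and $s$ are evaluated at $p(n)$; a width-$w$ multiplexer with $2^k$ data lines has size $O(w\cdot 2^k)$. If the BSM model were allowed to be randomized, one would first amplify the error below $2^{-n}$ (a $\poly(n)$ overhead in $\ell$ and $s$) and then hardwire a good random string, which affects only the hidden constant and the polynomial $p$.
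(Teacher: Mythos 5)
Your proposal is correct and follows essentially the same route as the paper: a Cook--Levin formula split so that Alice's CNF depends only on the first half of the input and Bob's only on the second, with the protocol's messages hardwired into two $2^{n/2}$-entry lookup tables feeding Carol's circuit. The only difference is cosmetic --- the paper places a copy of the Cook--Levin formula (with the respective half substituted) together with equality clauses on \emph{both} sides, whereas you park all structural clauses in $\alpha$ and let $\beta$ consist of unit clauses pinning the remaining input bits, which works just as well.
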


    \begin{proof}
    We will assume without loss of generality that $L$ rejects all inputs of odd length.  This can be enforced using the encoding
    \[ L = \{0x\colon \text{$x$ is a yes instance of odd length}\} \cup \{11x\colon \text{$x$ is a yes instance of even length}\}. \]
    Consider the following algorithm for $L$:

\medskip
    \noindent{\bf Preprocessing}  Given an even input length $n$:
	\begin{enumerate}
	\item Construct the Cook-Levin CNF $\phi$ which accepts $(x, y, z)$ for some $z$  if and only if 
$xy \in L$ with $\abs{x} = \abs{y} = n/2$.
	\item For every $x$ with $\abs{x} = n/2$, construct the CNF  
	\[ \alpha_x(u, v, z) = \text{``}(x = u) \wedge \phi(x, v, z)\text{''}, \qquad\text{where $\abs{u} =  \abs{v} = n/2$}, \]
	then determine and store the message $a_x$ that Alice sends in the BSM SSAT protocol when her input is $\alpha_x$. 
	
	\item For every $y$ with $\abs{y} = n/2$, construct the CNF  
	\[ \beta_y(u, v, z) = \text{``}(y = v) \wedge \phi(u, y, z)\text{''}, \qquad\text{where $\abs{u} = \abs{v} = n/2$}, \]
	then determine and store the message $b_y$ that Bob sends in the BSM SSAT protocol when his input is $\beta_y$. 
\end{enumerate}
 
\noindent{\bf Execution}  On input $xy$ with $\abs{x} = \abs{y} = n/2$, simulate Carol on messages $a_x$ and  $b_y$ and output her answer.

\medskip\noindent
	Under the assumptions of the theorem this algorithm decides $L$ because $\alpha_x \wedge \beta_y$ is satisfiable if and only if $\phi(x, y, \cdot)$ is.  If $\phi(x, y, z)$ accepts then so does $\alpha_x(x, y, z) \wedge \beta_y(x, y, z)$, while if $\alpha_x(u, v, z) \wedge \beta_y(u, v, z)$ accepts it must be that $x = u$ and $y = v$ so $\phi(x, y, z)$ must also accept.
	
	The circuit representation of this algorithm consists of two tables of size $\ell(p(n)) \cdot 2^{n/2}$, a proportional amount of hardware to select the messages $a_x$ and $b_y$, and a copy of Carol's circuit, giving the bound on circuit size.
    \end{proof}

    \begin{corollary} \label{cor:super-hard-NP-language-implies-no-poly-BSM-for-SSAT}
        If there exists a language in $\NP$ that is not computable by any circuit family of size $\wt{O}(2^{n/2})$, then $\SSAT$ doesn't have a $2$-party, $\poly(n)$-size BSM protocol.
    \end{corollary}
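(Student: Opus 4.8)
The plan is to establish the contrapositive: assuming $\SSAT$ has a $2$-party BSM protocol in which Carol is realized by a circuit family of size $s(n) = \poly(n)$, I will show that every language in $\NP$ is decidable by circuits of size $\wt{O}(2^{n/2})$, which contradicts the hypothesis.

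First I would pin down the two parameters of the protocol. By definition a $\poly(n)$-size BSM protocol has $s(n)=\poly(n)$; moreover we may assume the message length satisfies $\ell(n)\le s(n)$, since Carol's circuit reads at most $s(n)$ input bits, and any longer portion of Alice's and Bob's messages is never consulted and can simply be discarded. Hence both $s$ and $\ell$ are polynomially bounded.

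Next I would apply Theorem~\ref{thm:BSM-for-split-SAT-implies-upper-circuit-bound-for-NP} verbatim: it provides, for each $L\in\NP$, a polynomial $p$ such that $L$ has circuit size $O(\ell(p(n))\cdot 2^{n/2})+s(p(n))$. Plugging in $\ell(m),s(m)=\poly(m)$ and $m=p(n)=\poly(n)$ turns this into $O(\poly(n)\cdot 2^{n/2})$, and since a polynomial factor in $n$ is a polylogarithmic factor in $2^{n/2}$ (whose bit length is $n/2$), this is $\wt{O}(2^{n/2})$. Thus every $L\in\NP$ has $\wt{O}(2^{n/2})$-size circuits, contradicting the assumed existence of an $\NP$ language that does not.

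I do not expect any genuine obstacle: the statement is exactly the contrapositive of the preceding theorem combined with the trivial bound $\ell\le s$. The only point requiring a word of care is matching the $\wt{O}(\cdot)$ convention in the corollary with the $O(\poly(n)\cdot 2^{n/2})$ bound produced by the theorem, which is the polylog-versus-poly observation above.
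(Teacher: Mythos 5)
Your proof is correct and is essentially the paper's (implicit) argument: the corollary is just the contrapositive of Theorem~\ref{thm:BSM-for-split-SAT-implies-upper-circuit-bound-for-NP}, with the observation that a $\poly(n)$-size Carol only consults $O(s)$ message bits, so $\ell$ may be taken polynomial and the bound $O(\ell(p(n))\cdot 2^{n/2})+s(p(n))$ becomes $\poly(n)\cdot 2^{n/2}=\wt{O}(2^{n/2})$.
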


\subsection{BSM hardness from worst-case one-way functions} \label{ssec:owf}

We derive consequences of Corollary~\ref{cor:super-hard-NP-language-implies-no-poly-BSM-for-SSAT} to the worst-case insecurity of one-way functions.

A function $f\colon \{0, 1\}^* \to \{0, 1\}^*$ is length-preserving if $\abs{f(x)} = \abs{x}$ for all sufficiently long $x$.  It is non-poly-shrinking if $\abs{f(x)} \geq \abs{x}^\epsilon$ for all sufficiently long $x$ and some constant $\epsilon$.

\begin{proposition}
If every $\NP$ problem has circuit size $\poly(n) \cdot 2^{n/2}$ then every efficient  length-preserving function can be inverted on every output in size $\poly(n) \cdot 2^{2n/3}$.
\end{proposition}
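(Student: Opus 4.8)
The plan is to realize inversion as a search-to-decision reduction against the assumed $\NP$ circuit, but to stop the decision phase after a third of the bits and brute-force the rest; the split at $n/3$ is exactly what equalizes the two costs. Throughout, fix an efficient length-preserving $f$ together with a $\poly(n)$-size circuit computing it on length-$n$ inputs, and work with the language
\[ L = \{\, (y,p) : |p| \le |y|/3,\ \exists\, x \in \{0,1\}^{|y|} \text{ with } f(x) = y \text{ and } p \text{ a prefix of } x \,\}, \]
which lies in $\NP$ (take $x$ itself as the witness; verifying $f(x)=y$ is polynomial time). Encoding a pair $(y,p)$ with $|y|=n$ and $|p|\le n/3$ takes $N = 4n/3 + O(\log n)$ bits, so the hypothesis yields circuits $C_N$ for $L$ of size $\poly(N)\cdot 2^{N/2} = \poly(n)\cdot 2^{2n/3}$, and there are only $O(n)$ relevant lengths $N$.

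On input $y$ of length $n$ (short $y$, where length-preservation may fail, are handled by a constant-size table), the inverter would first test $(y,\varepsilon)\in L$ to see whether $y$ has any preimage, outputting $\bot$ if not; this costs $\poly(n)\cdot 2^{n/2}$. Assuming a preimage exists, it would then grow a prefix $x_1 \in \{0,1\}^{\lfloor n/3\rfloor}$ of some preimage one bit at a time: with a current extendable prefix $p$, query $(y,p0)\in L$, append $0$ if yes and $1$ otherwise (the extendability invariant guarantees one branch succeeds). Each query invokes some $C_N$ with $N \le 4n/3 + O(\log n)$, and there are $O(n)$ of them, so this phase costs $\poly(n)\cdot 2^{2n/3}$. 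Finally, holding $x_1$, the inverter would enumerate all $w\in\{0,1\}^{n-\lfloor n/3\rfloor}$, evaluate $f(x_1w)$ with the hardwired $f$-circuit, and output the first $x_1w$ mapping to $y$; such a $w$ exists by the prefix invariant, and this phase costs $2^{\lceil 2n/3\rceil}\cdot\poly(n)$. Summing the three phases, the control logic, and the $O(n)$ hardwired copies of $C_N$ together with the $f$-circuits, everything fits in $\poly(n)\cdot 2^{2n/3}$, as claimed.

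I expect the only real obstacle to be the input-length bookkeeping for the $\NP$ circuit. Search-to-decision applied naively — asking for preimages consistent with longer and longer prefixes — eventually feeds $C_N$ an input of length close to $2n$, at which point $2^{N/2}$ degrades to $2^n$ and the whole approach collapses; it is precisely to keep $N \le 4n/3$ that the decision phase must halt after $n/3$ bits, forcing the brute-force suffix. One should also double-check the balance: recovering $\beta n$ bits by decision queries costs about $2^{(1+\beta)n/2}$ while brute-forcing the remaining $(1-\beta)n$ bits costs $2^{(1-\beta)n}$, and these meet at $\beta = 1/3$ with exponent $2n/3$. No step beyond this accounting seems delicate.
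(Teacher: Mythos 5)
Your proposal is correct and follows essentially the same route as the paper: the same prefix-of-a-preimage language, the same input-length accounting giving circuits of size $\poly(n)\cdot 2^{2n/3}$ when the prefix is capped at $n/3$, and the same search-to-decision reduction recovering the first $n/3$ bits iteratively before brute-forcing the remaining $2n/3$ bits. The only cosmetic difference is that you build the $|p|\le |y|/3$ cap into the language itself, whereas the paper leaves it implicit by only ever querying short prefixes; this does not change the argument.
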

\begin{proof}
Let $L$ be the language
\[ L = \{(y, a)\colon \text{there exists $x$ such that $f(x) = y$ and $a$ is a prefix of $x$}\}. \]
By the assumption $L$ has circuits $C$ of size $\poly(\abs{y} + \abs{a}) \cdot 2^{(\abs{y} + \abs{a})/2}$, which is at most $\poly(n) \cdot 2^{2n/3}$ assuming $\abs{y} = n$ and $\abs{a} \leq n/3$.

Given $y$ in the image of $f$, the circuit $C$ can be applied iteratively to find the first $n/3$ bits of a preimage $x$.  Once those are found, the other $2n/3$ bits can be computed by brute-force search.  The resulting preimage finder has circuit size $\poly(n) \cdot 2^{2n/3}$.
\end{proof}

\begin{proposition}
If every $\NP$ problem has circuit size $\poly(n) \cdot 2^{n/2}$ but $\mathsf{E}^{\NP}_{\parallel}$requires exponential size nonadaptive SAT-oracle circuits then every efficient non-poly-shrinking function can be inverted on every output of length $n$ in size $\poly(n) \cdot 2^{n/2}$.
\end{proposition}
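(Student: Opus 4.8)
The plan is to invert $f$ on a given output $y$, $\abs{y}=n$, by recovering a \emph{single} preimage via Valiant--Vazirani isolation, reading that preimage off bit by bit using \emph{nonadaptive} $\NP$-oracle queries whose inputs have length $n+O(\log n)$, and answering those queries with the assumed $\poly(n)\cdot 2^{n/2}$-size circuits for $\NP$. Insisting on short queries is essential: unlike in the preceding proposition we can no longer afford to brute-force even a constant fraction of the bits of a preimage. Note that since $f$ is efficient and non-poly-shrinking, every $x$ with $f(x)=y$ has length at most $n^{1/\epsilon}=\poly(n)$, so all the loops below stay polynomial in $n$; I will also loop over the candidate preimage length, so fix $\abs{x}=\lambda\le\poly(n)$.

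Consider, for a seed $\sigma$, a level $k\in\{0,\dots,\lambda\}$, an index $i\in[\lambda]$ and a bit $b$, the $\NP$ language
\[ A_f=\bigl\{(\sigma,k,i,b,y)\ :\ \exists x\in\{0,1\}^\lambda\ \ f(x)=y\ \wedge\ h_\sigma(x)=0^k\ \wedge\ x_i=b\bigr\}, \]
where $\{h_\sigma\}$ is a hash family to be chosen. If $h_\sigma$ \emph{isolates} $f^{-1}(y)\cap\{0,1\}^\lambda$ at level $k$ --- i.e.\ there is a unique such $x^\star$ with $h_\sigma(x^\star)=0^k$ --- then the answers to the $2\lambda$ queries $(\sigma,k,i,b,y)$, $i\in[\lambda]$, $b\in\{0,1\}$, are consistent and spell out $x^\star$, which we then test against $f$. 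For a pairwise-independent $h_\sigma$ and the (unknown) level $k$ with $2^{k-2}\le\abs{f^{-1}(y)\cap\{0,1\}^\lambda}\le 2^{k-1}$, isolation happens with constant probability over $\sigma$. So the inversion circuit tries all $\poly(n)$ short seeds, all $\poly(n)$ lengths $\lambda$, and all $\lambda+1$ levels $k$, and outputs any recovered $x^\star$ passing the test $f(x^\star)=y$; if $y$ is in the image, some $(\sigma,\lambda,k)$ succeeds. Provided $\sigma$ has only $O(\log n)$ bits, each query to $A_f$ has input length $n+O(\log n)$, so the hypothesis $\NP\subseteq\SIZE(\poly(n)\cdot 2^{n/2})$ gives it a circuit of size $\poly(n)\cdot 2^{n/2}$, and the $\poly(n)$ queries total $\poly(n)\cdot 2^{n/2}$.

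The heart of the matter is making $\sigma$ this short. A genuinely pairwise-independent hash on $\lambda$-bit inputs needs an $\Omega(\lambda)=\poly(n)$-bit seed, far too long to ride inside a length-$(n+O(\log n))$ query. Instead we generate the hash from a logarithmic seed by a Nisan--Wigderson generator $G$ built from the assumed hardness of $\mathsf{E}^{\NP}_{\parallel}$ against nonadaptive $\SAT$-oracle circuits, taking $h_\sigma$ to be, say, the Toeplitz hash whose matrix is $G(\sigma)$. The distinguisher this has to defeat --- ``does the hash described by this string isolate $f^{-1}(y)\cap\{0,1\}^\lambda$ at level $k$'' --- is, for fixed efficient $f$ (a hard-wired polynomial-size circuit) and fixed $y,k,\lambda$, computed by a $\poly(n)$-size circuit making two \emph{nonadaptive} $\SAT$ queries: one for ``at least one such $x$'' and one (a $\mathsf{coNP}$ query) for ``at most one such $x$''. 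Since a uniformly random Toeplitz matrix isolates at the correct level with constant probability, so does $G(\sigma)$ for some short $\sigma$, and because $G$ fools the distinguisher for every $y,k,\lambda$, the procedure succeeds on all images $y$. The main obstacle is to make the generator fit: $h_\sigma$ must remain polynomial-time computable from the short seed so that $A_f$ genuinely lies in $\NP$ and the circuit-size hypothesis applies to it, and evaluating $G$ must inflate neither the query length beyond $n+O(\log n)$ nor the overall circuit size beyond $\poly(n)\cdot 2^{n/2}$; getting the generator's own complexity to cooperate (and explaining why hardness against merely \emph{nonadaptive} $\SAT$-oracle circuits suffices, since the isolation test asks the two questions above independently) is where the real work lies.
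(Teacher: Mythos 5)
Your plan follows the same route as the paper's proof: derandomize the Valiant--Vazirani isolation step using the $\mathsf{E}^{\NP}_{\parallel}$ hardness assumption, and then read the unique isolated preimage off bit by bit via $\NP$ queries of length $n+O(\log n)$, answered by the assumed $\poly(n)\cdot 2^{n/2}$-size circuits (your $A_f$ is, up to packaging, the paper's language $L=\{(y,i,j)\colon \exists x \text{ with } V_i(x) \text{ accepting and } x_j=1\}$, and your closing ``test against $f$'' and accounting of query length and circuit size match the paper). The difference is how the isolation is obtained. The paper does not construct a generator at all: it invokes Klivans and van Melkebeek~\cite{KvM02} (with the discussion in~\cite{SU06}) as a black box, in exactly the form needed -- under the second hypothesis there is an \emph{efficient deterministic} algorithm that, on input $y$, outputs circuits $V_1,\dots,V_{\poly(n)}$ at least one of which accepts a unique preimage of $y$. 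You instead sketch how such an object would be built (NW generator from the hard function, Toeplitz hashes, the two-nonadaptive-$\SAT$-query isolation test) and then explicitly defer the construction and analysis as ``where the real work lies.''

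As written, that deferral is a genuine gap: the step you leave open is the entire technical content imported from~\cite{KvM02}, and the difficulty you flag is real. With the hard function only guaranteed to lie in $\mathsf{E}^{\NP}_{\parallel}$, the naive generator is computable only in $\mathsf{P}^{\NP}_{\parallel}$ rather than in $\mathsf{P}$, which threatens the membership of your $A_f$ in $\NP$ (and the alternative of placing the expanded hash description into the query would push the input length far beyond $n+O(\log n)$, destroying the $2^{n/2}$ bound). Your identification of why \emph{nonadaptive} $\SAT$-oracle hardness is the right notion (the isolation test is one $\NP$ and one $\mathsf{coNP}$ query asked in parallel) is correct and is indeed the reason the hypothesis is phrased this way; but to turn the proposal into a proof you should either cite the Klivans--van Melkebeek derandomization of Valiant--Vazirani in the form the paper uses, or actually carry out and verify the generator construction you postpone.
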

\begin{proof}
Klivans and van Melkebeek~\cite{KvM02} (see also the discussion in~\cite{SU06}) show that under the second assumption there is an efficient deterministic algorithm that on input $y$ of length $n$ produces a list of circuits $V_1, \cdots, V_{\poly(n)}$ such that at least one of the circuits $V_i$ accepts a \emph{unique} $x$ for which $f(x) = y$, assuming $y$ is in the image of the function $f$.  Let 
\[ L = \{(y, i, j)\colon \text{there exists $x$ for which $V_i(x)$ accepts and $x_j = 1$}\}. \]
As $L$ is an $\NP$ language, by the first assumption, it has circuits $C$ of size $\poly(\abs{y} + \abs{i} + \abs{j}) \cdot 2^{(\abs{y} + \abs{i} + \abs{j})/2} = \poly(n) \cdot 2^{n/2}$.  Using $C$, a preimage of $y$ can be found among the rows of the table $x_{ij} = C(y, i, j)$.
\end{proof}

\begin{corollary}
\label{cor:inversion}
If SSAT has a 2-party, $\poly(n)$-size BSM protocol then every efficiently computable function can be inverted on length-$n$ outputs in (1) size $\poly(n) \cdot 2^{2n/3}$ assuming it is length-preserving; (2) size $\poly(n) \cdot 2^{n/2}$ assuming it is non-poly-shrinking and $\mathsf{E}^{\NP}_{\parallel}$requires exponential size nonadaptive SAT-oracle circuits.
\end{corollary}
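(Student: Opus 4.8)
The plan is to obtain Corollary~\ref{cor:inversion} by composing Theorem~\ref{thm:BSM-for-split-SAT-implies-upper-circuit-bound-for-NP} with the two inversion propositions proved just above. First I would instantiate Theorem~\ref{thm:BSM-for-split-SAT-implies-upper-circuit-bound-for-NP} with $s$ and $\ell$ both polynomial: a $\poly(n)$-size BSM protocol for $\SSAT$ has, in particular, polynomial message length, so every $L \in \NP$ acquires circuit size $O(\ell(p(n)) \cdot 2^{n/2}) + s(p(n))$, which collapses to $\poly(n) \cdot 2^{n/2}$ because $\ell$, $s$ and the Cook--Levin blow-up $p$ are all polynomial while the exponential factor is in the \emph{original} input length $n$ rather than in $p(n)$. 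This is precisely the hypothesis ``every $\NP$ problem has circuit size $\poly(n) \cdot 2^{n/2}$'' that both propositions assume.

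For part~(1) I would then feed this conclusion straight into the first proposition, which yields that every efficiently computable length-preserving $f$ can be inverted on every output in size $\poly(n) \cdot 2^{2n/3}$. For part~(2) I would additionally carry along the hypothesis that $\mathsf{E}^{\NP}_{\parallel}$ requires exponential-size nonadaptive SAT-oracle circuits and invoke the second proposition, which upgrades the conclusion to inversion in size $\poly(n) \cdot 2^{n/2}$ for every efficiently computable non-poly-shrinking $f$. Since length-preserving functions are in particular non-poly-shrinking, part~(2) subsumes the quantitative savings of part~(1) whenever the extra derandomization hypothesis is available; I would still state the two cases separately to match the phrasing of the corollary.

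The argument is essentially bookkeeping, so I do not expect a genuine obstacle; the one point deserving care is checking that the polynomial factors survive the composition. Concretely, one should verify that $\ell(p(n))$ and $s(p(n))$ remain polynomial in $n$ — immediate, as polynomials are closed under composition — and that the $2^{n/2}$ factor in Theorem~\ref{thm:BSM-for-split-SAT-implies-upper-circuit-bound-for-NP} is not inflated to $2^{p(n)/2}$; this holds because the preprocessing tables there are indexed by the $2^{n/2}$ half-inputs $x$, not by the (polynomially longer) CNFs $\alpha_x$. Granting this, parts~(1) and~(2) are direct restatements of the two propositions under the derived circuit-size bound, and no new construction or estimate is needed.
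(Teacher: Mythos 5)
Your proposal is correct and is exactly the paper's (implicit) argument: the corollary is obtained by instantiating Theorem~\ref{thm:BSM-for-split-SAT-implies-upper-circuit-bound-for-NP} with polynomial $s$ and $\ell$ to get $\NP \subseteq \SIZE(\poly(n)\cdot 2^{n/2})$, and then invoking the two inversion propositions of Section~\ref{ssec:owf} for parts (1) and (2) respectively. Your bookkeeping remarks (polynomials closed under composition, the $2^{n/2}$ factor depending on the original input length, and truncating messages so that $\ell$ is effectively bounded by Carol's size) are the right minor checks and raise no issues.
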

    
\subsection{Instance hiding and NP languages}
\label{ssec:instance-hiding}

A two-query \emph{instance hiding (IH) scheme}~\cite{BF90}  consists of size circuit (Henry) that receives an input and its output using queries to two oracles (Alice and Bob) such that the distribution of each query depends only on Henry's input length.

The instance hiding scheme is efficient if the size of Henry is polynomial in its input length.  Our main result in this section is the following:

    \begin{theorem} \label{thm:SSAT-SCOL-SPRT-NP-IH}
        If there is an efficient BSM protocol to one of the languages $\SSAT$, $\SCOL$, $\SPRT$, then every language in $\NP$ has an efficient IH scheme.
    \end{theorem}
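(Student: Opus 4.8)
The plan is to obtain the theorem from the general \emph{split-hide reduction} framework (Definition~\ref{def:split-hide-reduction}, Lemma~\ref{lem:split-hide-reduction-BSM-to-IH}, Corollary~\ref{cor:split-hide-reduction-IH-NP}) together with one problem-specific randomized reduction for each of the three split languages. The framework works as follows. If a language $A$ split-hide reduces to a split language $B$ and $B\in\BSM(\cdot,\poly(n))$, then $A$ has an efficient two-oracle IH scheme: on input $w$, Henry runs the randomized polynomial-time reduction to produce a pair $(u,v)$ with $(u,v)\in B \iff w\in A$, sends $u$ to the first oracle and $v$ to the second, receives back the messages that Alice, respectively Bob, would send in the BSM protocol for $B$ on these inputs, and outputs Carol's verdict, which equals $B(u,v)=A(w)$. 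Henry is a polynomial-size circuit --- the reduction circuit plus a copy of Carol --- and the hiding requirement is precisely the defining property of a split-hide reduction: the marginal distribution of $u$, and that of $v$, depends only on $|w|$, so each oracle's query distribution depends only on the input length. Since efficient IH schemes are closed under polynomial-time many-one reductions (Henry performs the reduction locally before invoking the scheme, and $|R(w)|$ is determined by $|w|$), and since $\SAT$, $3$-coloring, and $\mathrm{PARTITION}$ are $\NP$-complete, the desired conclusion is an instance of Corollary~\ref{cor:split-hide-reduction-IH-NP}.

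It therefore suffices to construct split-hide reductions from $\SAT$ to $\SSAT$ (Proposition~\ref{prop:SAT-split-hide-reduces-to-SSAT}), from $3$-coloring to $\SCOL$ (Proposition~\ref{prop:3COL-split-hide-reduces-to-SCOL}), and from $\mathrm{PARTITION}$ to $\SPRT$ (Proposition~\ref{prop:PARTITION-split-hide-reduces-to-SPRT}). For $\SPRT$ I would use the cleanest construction: given a multiset $S=\{s_1,\dots,s_n\}$, fix a modulus $M$ exceeding every sum that can arise, draw $r_1,\dots,r_n$ uniformly and independently from $\{0,\dots,M-1\}$, and set $a_i:=r_i$ and $b_i:=(s_i-r_i)\bmod M$; output $A=\{a_1,\dots,a_n\}\cup P_A$ and $B=\{b_1,\dots,b_n\}\cup P_B$, where $P_A,P_B$ are fixed padding collections (copies of $M$ together with a few calibration values) designed to absorb the $+M$ overflow that occurs whenever $a_i+b_i=s_i+M$, to force any equal-sum partition of $A\cup B$ to respect the pairing $\{a_i,b_i\}$, and to align the partition target with half of $\sum_i s_i$. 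Each $a_i$ and each $b_i$ is individually uniform on $\{0,\dots,M-1\}$, so the marginals of $A$ and of $B$ depend only on $n$. For $\SCOL$ I would randomize colors: applying an independent uniformly random permutation of the three colors at each vertex turns $3$-colorability of $G$ into an equivalent ``shift-constraint'' instance, and these shift constraints, together with a padding to a canonical edge pattern, would be realized by standard $3$-colorability gadgets wired so that each of the two graphs handed to the oracles is distributed independently of $G$. For $\SSAT$ I would rewrite $\phi$ over fresh auxiliary variables linked to $\phi$'s variables by random XOR (affine) gadgets and split the resulting clause set between $\alpha$ and $\beta$ so that the random gadget layer hides which clauses, and on which variables, each half actually constrains; equisatisfiability of $\alpha\wedge\beta$ with $\phi$ gives correctness.

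I expect the main obstacle to be proving that the two halves have \emph{exactly} instance-independent marginals (uniform, not merely pseudorandom). The difficulty is structural: $\SSAT$, $\SCOL$, and $\SPRT$ recombine their parts by conjunction, by union of edge sets, and by multiset union respectively --- none of which is a group operation --- so plain additive secret sharing is unavailable and the reduction must instead re-randomize the instance into a canonical-looking form while preserving the yes/no answer. Doing this while simultaneously maintaining correctness is where essentially all of the work in Propositions~\ref{prop:SAT-split-hide-reduces-to-SSAT}--\ref{prop:PARTITION-split-hide-reduces-to-SPRT} goes, and it is most delicate for $\SSAT$ and $\SCOL$, where --- unlike $\SPRT$ --- there is no ready-made modular structure to exploit; the padding and gadget design for those two is the crux.
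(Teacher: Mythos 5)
Your top layer coincides with the paper's: Theorem~\ref{thm:SSAT-SCOL-SPRT-NP-IH} is obtained there exactly as you say, from Lemma~\ref{lem:split-hide-reduction-BSM-to-IH} and Corollary~\ref{cor:split-hide-reduction-IH-NP} (including the point you correctly flag, that the $\NP$-hardness must be under length-regular reductions) together with split-hide reductions $\SAT \leq_p^{\mathsf{sh}} \SSAT$, $3\COL \leq_p^{\mathsf{sh}} \SCOL$, $\PARTITION \leq_p^{\mathsf{sh}} \SPRT$. The gap is that these three reductions \emph{are} the substance of the theorem, and your proposal does not establish them; you yourself identify the exact-marginals-plus-correctness requirement as the unresolved crux. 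Moreover, two of your sketched mechanisms hide the wrong object. For $\SSAT$, relabelling variables through random XOR gadgets hides nothing about \emph{which clauses of $\varphi$ are present}, which is precisely what each half must not reveal; the paper instead makes Alice's formula canonical by taking \emph{all} $N$ possible width-$3$ clauses, each disjoined with a distinct auxiliary variable $y_{\pi(i)}$ for a random injection $\pi$, and lets Bob's formula force false the $y$'s attached to the clauses of $\varphi$ (padded with dummies to a fixed count $N$ out of $2N$), so each half is a fixed pattern composed with a random injection. For $\SCOL$, a per-vertex random permutation of colors does not hide the edge set of $G$; the paper again canonicalizes: graph $A$ is an instance-independent family of diamond gadgets over all $\binom{n}{2}$ vertex pairs placed at random slots $j=\pi(v,v')\in[2N]$, and graph $B$ is a random $N$-out-of-$2N$ matching on the $(c_j,c'_j)$ pairs that encodes $E$ versus $\bar E$. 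Without this ``pad to all possible constraints and let the other half select'' idea, your gadget plans have no evident route to exactly instance-independent marginals.

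Your $\SPRT$ sketch is genuinely close to the paper's reduction (additive sharing of each $x_i$ modulo a large modulus, plus fixed padding for pairing enforcement and overflow absorption), but the correctness burden sits entirely in the unspecified ``calibration values.'' The paper's concrete mechanism is worth noting: sharing is done modulo $p=n2^n+1$, each share is scaled by $8^n$ and tagged with a base-$8$ digit ($Y_i$ gets $1\cdot 8^i$, $Z_i$ gets $2\cdot 8^i$, and a third element $W_i=3\cdot 8^i$ is added), so that matching the low-order base-$8$ digits forces $Y_i,Z_i$ to one side and $W_i$ to the other; the residual discrepancy, a multiple $k\,p8^n$ with $\abs{k}\le n$, is then cancelled by the fixed geometric set $E=\{2^j p8^n : 0\le j\le \log n\}$ given to Bob. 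Simply adding ``copies of $M$'' does not force the pairing, so as written your construction's correctness direction (an equal-sum partition of $A\cup B$ implies a partition of $S$) is not secured.
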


The connection between IH an BSM will be established using the following notion of reduction.
  
    \begin{definition} \label{def:split-hide-reduction}
        Let $L_1, L_2$ be a languages. A \emph{split-hide reduction from $L_1$ to $L_2$} is a pair of polynomial-time computable randomized mappings $a, b \co \zo^n \to \zo^{\poly(n)}$ satisfying
        \begin{description}
      \item \textit{Correctness:} $x \in L_1 \iff a(x)b(x) \in L_2$.

      \item \textit{Privacy:} The marginal distributions of $a(x)$ and $b(x)$ depend only on the length of $x$. 
        \end{description}
        If such a reduction exists, we say that \emph{$L_1$ split-hide reduces to $L_2$}, and denote $L_1 \leq_p^{\mathsf{sh}} L_2$.
    \end{definition}
    
    \begin{lemma} \label{lem:split-hide-reduction-BSM-to-IH}
        Let $L_1, L_2$ be two languages. If $L_1 \leq_p^{\mathsf{sh}} L_2$ and $L_2$ has an efficient BSM protocol, then $L_1$ has an efficient IH scheme with a single nonadaptive pair of queries.
    \end{lemma}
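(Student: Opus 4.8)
The plan is to compose the split-hide reduction $a, b$ with the BSM protocol for $L_2$ in the natural way: Henry, on input $x$, will internally run the randomized mappings $a$ and $b$ to produce $a(x)$ and $b(x)$, then ask Alice to play the role of the BSM sender holding $a(x)$ and Bob the BSM sender holding $b(x)$, and finally simulate Carol on the two returned messages to obtain the answer. So Henry's circuit is: sample the randomness for $a$ and $b$, compute $a(x), b(x)$, hand them off as queries, receive Alice's and Bob's BSM messages, and run Carol's circuit on them.

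\textbf{Key steps.} First I would set up the oracles precisely: oracle~$A$ is the function mapping an alleged $L_2$-instance $\alpha$ to Alice's BSM message $a_{\mathrm{BSM}}(\alpha)$, and oracle~$B$ likewise maps $\beta$ to Bob's BSM message. Henry issues the single pair of nonadaptive queries $(a(x), b(x))$. Second, I would verify \emph{correctness} of the resulting IH scheme: by the correctness property of the split-hide reduction, $x \in L_1$ iff $a(x)b(x) \in L_2$, and by correctness of the BSM protocol, Carol on inputs $a_{\mathrm{BSM}}(a(x))$ and $b_{\mathrm{BSM}}(b(x))$ outputs $L_2(a(x)b(x))$; chaining these gives that Henry outputs $L_1(x)$. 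Third, I would verify \emph{privacy}: the query oracle~$A$ sees is exactly $a(x)$, whose marginal distribution depends only on $|x|$ by the privacy property of the reduction (and symmetrically for $B$); crucially the reduction guarantees this for the \emph{marginal} of each of $a(x), b(x)$ separately, which is exactly what the two-oracle IH definition demands (each oracle individually learns nothing). Fourth, I would bound Henry's size: the mappings $a, b$ are polynomial-time computable, hence have polynomial-size (randomized) circuits, and Carol's circuit is polynomial-size by efficiency of the BSM protocol, so Henry is polynomial-size; the queries are a single nonadaptive pair as claimed.

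\textbf{Main obstacle.} The one subtlety worth care is the matching of the privacy notions. The BSM model does not impose any hiding requirement on Alice's and Bob's messages — a priori $a_{\mathrm{BSM}}(\alpha)$ could reveal all of $\alpha$ — so the hiding in the IH scheme comes entirely from the split-hide reduction randomizing $a(x)$ and $b(x)$, and the oracles in the IH scheme are given the \emph{raw} reduction outputs $a(x), b(x)$ as their queries, not the post-BSM-encoding messages. I would emphasize that this is why the reduction's privacy condition is stated as ``the marginal of $a(x)$ (resp.\ $b(x)$) depends only on $|x|$'': oracle~$A$ sees $a(x)$ and oracle~$B$ sees $b(x)$, and each marginal being input-independent is precisely the two-oracle instance-hiding guarantee. (If one instead wanted a notion where the oracles only see the BSM messages, no hiding would follow; but the IH model as defined lets Henry choose what to query, so he queries the already-randomized $a(x)$.) Everything else is bookkeeping: the ``single nonadaptive pair of queries'' claim is immediate since Henry queries $(a(x), b(x))$ once, in parallel, before doing any further computation.
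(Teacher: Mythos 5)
Your proposal is correct and follows exactly the paper's argument: Henry queries the raw reduction outputs $a(x)$ and $b(x)$ to the oracles (who act as the BSM Alice and Bob), forwards their replies to Carol, and outputs her answer, with correctness from chaining the reduction's correctness with the BSM protocol's, and privacy from the input-independence of the marginals of $a(x)$ and $b(x)$. Your extra remarks on why privacy comes from the reduction rather than the BSM encoding are a faithful elaboration, not a departure.
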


    \begin{proof}
        Suppose that $L_2$ has a $2$-party efficient BSM protocol implemented by Alice, Bob, and Carol, and that $L_1 \leq_p^{\mathsf{sh}} L_2$ using mappings $a$ and $b$.  In the IH for $L_1$, on input $x$, Henry submits queries $a(x)$ and $b(x)$ to Alice and Bob, respectively, forwards their answers to Carol, and produces her output. 
    \end{proof}

    \begin{corollary} \label{cor:split-hide-reduction-IH-NP}
        Let $L_1, L_2$ be a languages, and suppose that $L_1$ is $\NP$-hard. If $L_1 \leq_p^{\mathsf{sh}} L_2$ and $L_2$ has an efficient BSM protocol, then every language in $\NP$ has an efficient IH scheme with a single nonadaptive pair of queries.
    \end{corollary}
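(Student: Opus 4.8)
The plan is to deduce this almost immediately from Lemma~\ref{lem:split-hide-reduction-BSM-to-IH} by composing the split-hide reduction with an ordinary Karp reduction. First I would invoke the lemma: since $L_1 \leq_p^{\mathsf{sh}} L_2$ and $L_2$ has an efficient BSM protocol, $L_1$ itself has an efficient IH scheme in which Henry is of size $\poly(n)$ and makes a single nonadaptive pair of queries to the two oracles. Next, fix an arbitrary language $L \in \NP$. Because $L_1$ is $\NP$-hard, there is a polynomial-time computable many-one reduction $g$ with $w \in L \iff g(w) \in L_1$. The IH scheme for $L$ then works as follows: on input $w$, Henry internally computes $z = g(w)$ and simulates the IH-Henry for $L_1$ on $z$, sending the resulting queries to Alice and Bob and outputting the value it computes. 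Correctness is immediate from $w \in L \iff z \in L_1$ together with correctness of the $L_1$-scheme; efficiency holds since $g$ runs in polynomial time and the $L_1$-Henry has polynomial size; and the ``single nonadaptive pair of queries'' property is inherited verbatim.

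The one point that needs care — and the step I would treat as the only real obstacle — is privacy. The IH privacy guarantee for $L_1$ says each oracle's query distribution depends only on $|z|$. For the composed scheme we need it to depend only on $|w|$, so we must ensure that $|g(w)|$ is a function of $|w|$ alone; otherwise the length of $g(w)$ could itself leak information about $w$. I would handle this by the standard padding argument: replace $g$ by a length-regular reduction $g'$, where $g'(w)$ is $g(w)$ padded (via a membership-preserving convention for $L_1$ instances, e.g. routing the reduction through $\SAT$ and using its trivial padding) to the maximum output length $N(|w|)$ attained by $g$ on inputs of length $|w|$. With $g'$ in place of $g$, the query marginals depend only on $|g'(w)| = N(|w|)$, hence only on $|w|$, as required.

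Putting these pieces together yields an efficient IH scheme for $L$ with a single nonadaptive pair of queries, and since $L \in \NP$ was arbitrary, the corollary follows. I do not expect any heavy computation here; the content is entirely in chaining Lemma~\ref{lem:split-hide-reduction-BSM-to-IH} with $\NP$-hardness and being slightly careful about length-regularity of the Karp reduction so that privacy is preserved under composition.
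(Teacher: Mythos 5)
Your proposal is correct and takes essentially the same route as the paper: Henry applies the Karp reduction from an arbitrary $L \in \NP$ to $L_1$ and then runs the IH scheme for $L_1$ given by Lemma~\ref{lem:split-hide-reduction-BSM-to-IH}. The length-leakage subtlety you patch by padding is handled in the paper simply by convention — it stipulates right after the corollary that $\NP$-hardness is taken with respect to reductions mapping all length-$n$ instances to instances of a common length $m(n)$ — and note that your pad-through-$\SAT$ fix still implicitly requires the $\SAT$-to-$L_1$ reduction itself to be length-regular, which is exactly the assumption the paper makes explicit.
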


Here, NP-hardness is assumed to hold under reductions that map all instances of a given length $n$ into instances of the same length $m(n)$.

\begin{proof}
In the IH scheme for $L_1$, Henry implements the reduction from $L_1$ to $L_2$ and then runs the IH for $L_2$ from  Lemma~\ref{lem:split-hide-reduction-BSM-to-IH}.
\end{proof}
    
Theorem~\ref{thm:SSAT-SCOL-SPRT-NP-IH} follows from 
Corollary~\ref{cor:split-hide-reduction-IH-NP} and the propositions below which show that $\SAT$, $3\COL$, and $\PARTITION$ each split-hide reduce to their split variant.

    \begin{proposition} \label{prop:SAT-split-hide-reduces-to-SSAT}
        $\SAT \leq_p^{\mathsf{sh}} \SSAT$.
    \end{proposition}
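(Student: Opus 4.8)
The plan is to route $\SAT$ into $\SSAT$ by \emph{additively secret sharing} the encoding of the input formula between Alice and Bob, and then letting the conjunction $\alpha \wedge \beta$ reassemble the shares and run a universal CNF‑evaluator inside the satisfiability question; this way each of $\alpha$ and $\beta$ becomes a function of a fresh uniformly random string and nothing else. First I would normalize the instance: assume it is a string $\phi \in \zo^n$ that encodes a CNF with at most $n$ variables, and let $e = \phi \in \zo^n$ be this encoding. Depending only on $n$, I fix a polynomial-size Boolean circuit $V$ that on inputs $s_1, s_2, x \in \zo^n$ decodes $s_1 \oplus s_2$ as a CNF (outputting $0$ if the string is not a legal encoding) and outputs $1$ exactly when the assignment $x$ satisfies that CNF; since this is a polynomial-time computation, such a circuit exists. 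Applying the Tseitin transformation to $V$ yields a CNF $\Phi_V(s_1, s_2, x, w)$ over polynomially many auxiliary variables $w$ and depending only on $n$, with the property that for every $s_1, s_2, x$ there is a $w$ with $\Phi_V(s_1, s_2, x, w) = 1$ if and only if $V(s_1, s_2, x) = 1$.

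The reduction then draws shared randomness $\rho \in \zo^n$ uniformly and outputs $a(\phi) = \alpha$ and $b(\phi) = \beta$, where
\[ \alpha := \Phi_V \ \wedge \ \bigwedge_{i=1}^{n} \big[\, s_{1,i} = \rho_i \,\big], \qquad \beta := \bigwedge_{i=1}^{n} \big[\, s_{2,i} = e_i \oplus \rho_i \,\big], \]
each bracket denoting the unit clause that pins the indicated variable to the indicated bit, and both $\alpha, \beta$ regarded as CNFs over the common variable set (the $s_1$-, $s_2$-, $x$-, and $w$-variables) determined by $n$. Both $a$ and $b$ are computable in polynomial time. For \textbf{correctness}: in $\alpha \wedge \beta$ the $s_1$-variables are pinned to $\rho$ and the $s_2$-variables to $e \oplus \rho$, so $s_1 \oplus s_2$ is forced to equal $e$; hence $\alpha \wedge \beta$ is satisfiable iff there are $x, w$ with $\Phi_V(\rho, e \oplus \rho, x, w) = 1$, iff some assignment $x$ satisfies the CNF encoded by $e$, iff $\phi \in \SAT$ — and this equivalence holds for every value of $\rho$. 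For \textbf{privacy}: $\alpha$ is the fixed CNF $\Phi_V$ together with unit clauses pinning $s_1$ to the uniform string $\rho$, so the distribution of $a(\phi)$ depends only on $n$; and $\beta$ pins $s_2$ to $e \oplus \rho$, which is uniform on $\zo^n$ because $\rho$ is, so the distribution of $b(\phi)$ also depends only on $n$. Hence $(a, b)$ is a split-hide reduction, i.e.\ $\SAT \leq_p^{\mathsf{sh}} \SSAT$.

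I do not anticipate a genuine obstacle. The only mildly technical point is constructing the universal CNF evaluator $\Phi_V$ and bounding its size, which is routine (compile a polynomial-time predicate to a circuit, then to a CNF by Tseitin). The part worth stating carefully is the privacy analysis, and in particular why the more naive approach — splitting $\phi$'s clauses between Alice and Bob, possibly after a random relabeling of the variables — does not suffice: such a split still leaks structural invariants such as clause widths and variable co-occurrence patterns, whereas secret sharing the encoding and recombining it inside the existential satisfiability quantifier makes each party's formula a deterministic function of its own fresh uniform randomness.
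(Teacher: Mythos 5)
Your reduction is correct, but it takes a genuinely different route from the paper's. You secret-share the \emph{bit encoding} of the instance as $\rho$ and $e \oplus \rho$, and let the conjunction reassemble the shares inside a universal, Tseitin-compiled CNF evaluator $\Phi_V$, with Alice contributing $\Phi_V$ plus unit clauses pinning $s_1$ to $\rho$ and Bob contributing unit clauses pinning $s_2$ to $e \oplus \rho$; privacy is then immediate because each marginal is ``fixed formula plus unit clauses pinning a uniform string,'' and your use of shared randomness is fine since the definition (and the paper's own construction, which shares a random permutation $\pi$) permits it, only the marginals being constrained. The paper instead works combinatorially with the formula itself: after reducing to 3SAT, Alice sends the \emph{fixed} template $\bigwedge_{i\in[N]} (C_i \vee y_{\pi(i)})$ over all $N = 8\binom{n}{3}$ possible width-$\le 3$ clauses, each guarded by a switch variable with a randomly permuted label, and Bob sends negative unit clauses $\lnot y_{\pi(i)}$ that deactivate the switches exactly for the clauses of $\varphi$ (padded with dummy indices so that his set always has size $N$); privacy holds because each party's $y$-variables form a uniformly random $N$-subset of $2N$ labels. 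Your approach is more generic --- it is essentially the XOR-combiner/instance-hiding idea and would transfer to any language whose split variant can recombine a secret-shared encoding --- at the cost of invoking a universal circuit and Cook--Levin/Tseitin machinery, whereas the paper's construction is self-contained, stays close to the original formula's structure, and, notably, is a fix of exactly the naive ``split the clauses'' idea whose leakage you point out: rather than splitting $\varphi$'s clauses, it hands Alice an input-independent template of \emph{all} clauses and moves the input-dependence into Bob's uniformly distributed switch set.
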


    \begin{proof}
        By the standard reduction from SAT to 3SAT we can represent the input instance by a 3CNF $\varphi(x_1, \dots, x_n)$, where $n$ is chosen large enough to embed all SAT instances of a given size.   We describe the randomized mappings $a$ and $b$ that yield CNFs $\alpha$ and $\beta$, respectively, in variables $x_1, \dots, x_n$ and additional auxiliary variables $y_1, \dotsc, y_{2N}$, where $N = 8\binom{n}{3}$.   Let $C_1,\ldots,C_N$ be a list of all possible clauses of width at most~$3$ over the $n$ variables $x_1,\ldots,x_n$. Let $\pi$ be a random permutation of $[2N]$. We define
        \[
        \alpha = \bigwedge_{i \in [N]} (C_i \lor y_{\pi(i)}).
        \]
        Let $I = \{ i : C_i \in \varphi \}$ and $J = \{ N + i : C_i \notin \varphi \}$. We define
        \[
        \beta = \bigwedge_{i \in I \cup J} \lnot y_{\pi(i)}.
        \]
        \noindent {\it Correctness:} We argue that  $\varphi$ is satisfiable if and only if $\alpha \land \beta$ is satisfiable.
        \begin{itemize}
            \item Suppose that $\varphi$ is satisfiable. Then, we can satisfy $\alpha$ by taking the satisfying assignment of $\varphi$ to satisfy each clause $(C_i \lor y_{\pi(i)})$ for $i \in I$, and assign a true value to $y_{\pi(i)}$ for every $i \in [N] \setminus I$. To satisfy $\beta$, assign a false value to $y_{\pi(i)}$ for every $i \in I$, which will not affect $\alpha$'s satisfiability, and assign a false value to $y_{\pi(i)}$ for every $i \in J$.
            \item Suppose that $\alpha \land \beta$ is satisfiable. Since $\beta$ is satisfiable, it follows that the satisfying assignment assigns false to $y_{\pi(i)}$ for every $i \in I$, which implies that $\alpha$ can be satisfied by the same assignment only if $C_i$ is satisfied for every $i \in I$, namely $\varphi$ itself is satisfied.
        \end{itemize}

        \noindent {\it Privacy:} The $y$-variables in both $\alpha$ and $\beta$ are random length $N$ subsequences of $(y_1, \dots, y_{2N})$.
    \end{proof}

    \begin{proposition} \label{prop:3COL-split-hide-reduces-to-SCOL}
        $3\COL \leq_p^{\mathsf{sh}} \SCOL$.
    \end{proposition}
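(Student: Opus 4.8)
The plan is to follow the template of the $\SAT\leq_p^{\mathsf{sh}}\SSAT$ reduction of Proposition~\ref{prop:SAT-split-hide-reduces-to-SSAT}, with ``clause equipped with an escape literal'' replaced by ``edge equipped with a switchable gadget''. After a standard padding we may assume the $3\COL$ instance is a graph $G$ on a fixed vertex set $[n]$, with $n$ a fixed function of the input length; let $E_1,\dots,E_N$, $N=\binom{n}{2}$, enumerate all potential edges on $[n]$, and for each $j\in[2N]$ introduce a fresh \emph{control pair} of vertices $(z_j,z_j')$. Draw a uniformly random permutation $\pi$ of $[2N]$ as shared randomness for $a$ and $b$. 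The graph $A=a(G)$ will not depend on $G$ at all: for every $i\in[N]$ it contains a constant-size \emph{conditionally-active edge gadget} joining the endpoints of $E_i$ to the control pair $(z_{\pi(i)},z_{\pi(i)}')$, designed so that it imposes \emph{no constraint whatsoever} on the endpoints of $E_i$ unless the edge $\{z_{\pi(i)},z_{\pi(i)}'\}$ is present, in which case it forces those two endpoints to receive distinct colors. The graph $B=b(G)$, on the same vertex set, adds the ``switch-on'' edge $\{z_{\pi(i)},z_{\pi(i)}'\}$ for every $i$ with $E_i\in E(G)$, together with filler edges $\{z_{\pi(N+i)},z_{\pi(N+i)}'\}$ for every $i$ with $E_i\notin E(G)$. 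Thus $B$ places an edge exactly on the control pairs indexed by $\pi(S)$, where $S=\{\,i:E_i\in E(G)\,\}\cup\{\,N+i:E_i\notin E(G)\,\}$ picks one index from each pair $\{i,N+i\}$, so $|S|=N$ regardless of $G$, and gadget $i$ is switched on precisely when $E_i\in E(G)$.

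For the gadget itself, given $\{u,v\}$ and a control pair $(z,z')$, take fresh vertices $w_1,w_2$ with edges $\{w_1,u\},\{w_1,z\},\{w_2,v\},\{w_2,z\},\{w_1,w_2\}$, and a symmetric copy $w_3,w_4$ using $z'$ in place of $z$. Writing $c(\cdot)$ for colors, the one combinatorial point to verify is that a $3$-coloring of $\{u,v,z\}$ extends over $\{w_1,w_2\}$ if and only if it is not the case that $c(u)=c(v)\ne c(z)$: if $c(u)=c(v)\ne c(z)$ then $w_1$ and $w_2$ are each forced onto the unique color avoiding $\{c(u),c(z)\}$ yet must differ, while in every other case at least one of them has two legal colors. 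Hence the full gadget enforces exactly ``$c(u)\ne c(v)$, or $c(u)=c(z)=c(z')$''. If $B$ supplies the edge $\{z,z'\}$ the second disjunct dies and $c(u)\ne c(v)$ is forced; if it does not, $z$ and $z'$ are constrained only inside this one gadget, so taking $c(z)=c(z')=c(u)$ shows any coloring of $\{u,v\}$ extends, i.e.\ the gadget is inert. Since distinct gadgets share no internal vertex or control pair, the constraints that $A\cup B$ imposes on $[n]$ are exactly the inequalities across the edges of $G$; conversely, a short argument (color each untoggled control pair with the common color of its gadget's two endpoints, and each toggled, unused, or isolated pair arbitrarily) extends any proper $3$-coloring of $G$ to one of $A\cup B$. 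Therefore $A\cup B\in\SCOL$ iff $G$ is $3$-colorable, which is the correctness requirement.

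Privacy is then essentially the bookkeeping already carried out for $\SSAT$. The graph $A$ is a deterministic function of $n$ and $\pi$ alone, so the distribution of $a(G)$ depends only on the input length. The graph $B$ is the union of the independent edges $\{z_j,z_j'\}$ over $j\in\pi(S)$ with $|S|=N$ fixed; as $\pi$ is a uniformly random permutation of $[2N]$, $\pi(S)$ is a uniformly random $N$-subset of $[2N]$, whose law does not depend on which particular size-$N$ set $S$ we started from, so the distribution of $b(G)$ also depends only on the input length. Both maps are plainly polynomial-time computable. I expect the heart of the matter to be the design of the switchable edge gadget: it must collapse to a genuine edge $\{u,v\}$ when switched on, yet leave $u$ and $v$ entirely unconstrained when switched off, and the switch must be a single edge between two auxiliary vertices that themselves carry no information about which potential edge they serve (that linkage being hidden by $\pi$ inside $A$). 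Routing the switch through a \emph{pair} $(z,z')$ and toggling it via the edge $\{z,z'\}$ is what makes both regimes come out cleanly; once the gadget is in hand, the transversal/permutation argument of Proposition~\ref{prop:SAT-split-hide-reduces-to-SSAT} carries over.
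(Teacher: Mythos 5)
Your construction is correct and follows essentially the same route as the paper's proof: gadgets attached to \emph{all} $\binom{n}{2}$ potential edges so that $A$ is independent of $G$, toggle edges plus filler edges in $B$ placed on control pairs indexed by a random permutation of $[2N]$, and the identical privacy argument that $B$'s support is a uniform $N$-subset of the $2N$ pairs. The only difference is the local gadget: the paper attaches each endpoint $v$ to a triangle $(a_j,b_j,c_j)$ so that $c_j$ is forced to copy $v$'s color and the toggle edge $(c_j,c'_j)$ is then literally the inequality constraint, whereas your conditionally-active gadget enforces ``$c(u)\ne c(v)$ or $c(u)=c(z)=c(z')$'' via the $w$-vertices; your case analysis checks out, so both gadgets serve the same purpose.
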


    \begin{proof}
        Given a graph $G = (V, E)$ with vertices $V = \{1,\ldots,n\}$
        the reduction produces the following two graphs $A$ and $B$ on the same vertex set $\tilde{V}$.  Let $N = \binom{n}{2}$ and $\pi$ be a random permutation of $[2N]$. We identify $[N]$ with ordered pairs of vertices in $V$.
        \begin{itemize}
            \item Vertices $\tilde{V}$: For each $v \in V$ there is a vertex $v$ in $\tilde{V}$, and for each $j \in [2N]$ there are vertices $a_j,b_j,c_j,a'_j,b'_j,c'_j$.
            \item Edges in $A$: For each pair $(v < v') \in V$, let $j = \pi(v, v')$.  We connect $v$ to $a_j,b_j$, $v'$ to $a'_j,b'_j$, and add triangles $(a_j,b_j,c_j)$ and $(a'_j,b'_j,c'_j)$.
            \item The edges in $B$ are
        $\{(c_{\pi(e)},c'_{\pi(e)}) : e \in E \} \cup
        \{(c_{\pi(N + e)},c'_{\pi(N + e)}) : e \notin E \}$.
        \end{itemize}

    \noindent {\it Correctness:}  If $G$ is 3-colorable, the 3-coloring can be extended to a valid coloring of $A \cup B$ by coloring $c_j$ and $c'_j$ for $j = \pi(v, v')$ with the colors of $v$ and $v'$, respectively, and coloring $a_j, b_j, a'_j, b'_j$ with the remaining colors.  The vertices indexed by $j$ that are not in the image of $\pi([N])$ are assigned fixed consistent colors, e.g.,  $(a_j, b_j, c_j, a'_j, b'_j, c'_j) \to (\mathtt{R}, \mathtt{B}, \mathtt{G}, \mathtt{B}, \mathtt{G}, \mathtt{R})$.  Conversely, if $A \cup B$ is 3-colorable, $v$ and $c_j$ must have the same color and so must $v'$ and $c_{j'}$; by the definition of $B$ the coloring on $V$ is valid for $G$.
        
    \medskip\noindent {\it Privacy:}  The graph $A$ consists of $N$ pairs of ``diamonds'' $v, a_j, b_j, c_j$ and $v', a'_j, b'_j, c'_j$, one for each vertex-pair $v < v'$, where $j$ is randomly assigned one of $N$ distinct values in $[2N]$.  The graph $B$ is a matching chosen at random among those that match $N$ out of the $2N$ pairs $(c_1, c'_1), \dots, (c_{2N}, c'_{2N})$.  Both distributions depend on $G$ only through $n$.
    \end{proof}

    \begin{proposition} \label{prop:PARTITION-split-hide-reduces-to-SPRT}
        $\PARTITION \leq_p^{\mathsf{sh}} \SPRT$.
    \end{proposition}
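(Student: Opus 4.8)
The plan is to follow the template of Propositions~\ref{prop:SAT-split-hide-reduces-to-SSAT} and~\ref{prop:3COL-split-hide-reduces-to-SCOL}: Alice should get an input-independent scaffold, Bob a selector whose marginal is fixed once the input length is known, and correctness should follow from a standard gadget analysis. Since a $\PARTITION$ instance is a set of integers rather than an object over a polynomial-size universe, I would first unfold it. Apply the Cook--Levin reduction to a fixed polynomial-time verifier for $\PARTITION$ to obtain a $3$-CNF $\varphi$ over variables $x_1,\dots,x_n$ (all clauses of width exactly $3$, with $n$ depending only on the input length) that is satisfiable iff the input is a yes-instance. Enumerate all $N=\Theta(n^3)$ width-$3$ clauses as $C_1,\dots,C_N$, and also prepare $N$ distinct \emph{always-true} width-$3$ clauses $D_1,\dots,D_N$ (each containing a literal and its negation) over an enlarged variable set obtained by adding $n^2$ fresh variables --- there are $\Theta(n^4)\ge N$ such clauses. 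Let $\hat I=\{i:C_i\in\varphi\}$. Now run the textbook reduction chain $3\text{-SAT}\to\mathsf{SubsetSum}\to\PARTITION$, but place the gadget for the $j$-th of the $2N$ clauses $C_1,\dots,C_N,D_1,\dots,D_N$ at a randomly permuted slot $\pi(j)$, where $\pi$ is a uniform permutation of $[2N]$; this is the analogue of the guard variables $y_{\pi(i)}$ of the SAT proof.

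Alice's set $A$ is the scaffold, built over a large base $q$: a ``variable-choice'' number for each literal of each of the (at most $n+n^2$) variables --- a $1$ in a private block of digits, plus a $1$ in slot $\pi(j)$ for every clause-gadget $j$ whose clause contains that literal --- together with two baseline slack numbers in each of the $2N$ slots. The target $T$ has a $1$ in each private block and a $3$ in each of the $2N$ slots; since each of the $2N$ clauses has exactly three literals, every slot receives three literal-digits from the choice numbers, so hitting $T$ forces one of the two choices per variable and forces every slot that keeps only its two baseline slacks to receive a satisfied literal. Note $T$, the number of slots, and the total of all choice and baseline numbers are all \emph{fixed} given $n$. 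Bob's set $B$ is the selector: for each $i\in[N]$ one decision number, equal to the slack $q^{\pi(i)}$ if $i\notin\hat I$ (freeing slot $\pi(i)$, so $C_i$ imposes no constraint) and equal to $q^{\pi(N+i)}$ if $i\in\hat I$ (landing on the always-true slot of $D_i$, harmlessly); exactly as with the set $I\cup J$ of the SAT proof, Bob contributes precisely $N$ decision numbers for every $\hat I$. Bob also holds the two large balancing numbers of the $\mathsf{SubsetSum}\to\PARTITION$ step --- one a fixed bound on the grand total $\Sigma$, the other equal to that bound plus $2T-\Sigma$. Correctness is the usual clause-gadget argument: the only slots imposing a nonvacuous constraint are $\{\pi(i):i\in\hat I\}$, each forcing $C_i$ to be satisfied, so $A\cup B$ splits into two equal halves iff $\varphi$ is satisfiable iff the original input is a yes-instance.

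For privacy, Alice's numbers depend only on $\pi$ (the fixed list of $2N$ clauses placed at a random permutation of the slots) and not on $\hat I$ at all, so $A$'s distribution depends only on $n$. Bob's decision numbers are the slacks indexed by $\pi(\mathcal D)$ with $\mathcal D=\{i\in[N]:i\notin\hat I\}\cup\{N+i:i\in\hat I\}$, a set of size exactly $N$; hence, exactly as for $\beta$ in the SAT proof, their distribution is that of a uniformly random $N$-subset of the $2N$ possible slacks, independent of $\hat I$. The one point requiring care --- and the reason for the dummy-clause padding and the large base --- is the balancing numbers: the second one depends on $\Sigma$, but once all $2N$ slots are referenced uniformly and the target and baseline totals are input-independent, $\Sigma$ equals a fixed quantity plus the sum of Bob's $N$ decision numbers, so this balancing number is a deterministic function of Bob's own decision numbers and public data and carrying it adds nothing to $B$'s distribution. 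The remaining steps are routine: making all numbers distinct by attaching a fresh block of identifier digits in the standard way, and taking $q$ larger than the maximum possible digit sum so the digit blocks never interact.
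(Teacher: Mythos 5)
Your proposal is correct in outline but takes a genuinely different, and much heavier, route than the paper. The paper never leaves the world of numbers: it additively secret-shares each input integer $x_i$ modulo $p = n2^n+1$ as $y_i + z_i$ with $y_i$ uniform, tags the shares with base-$8$ indicator digits ($Y_i = y_i 8^n + 8^i$, $Z_i = z_i 8^n + 2\cdot 8^i$, $W_i = 3\cdot 8^i$) so that any modular partition must keep $\{Y_i,Z_i\}$ opposite $W_i$, and hands Bob a fixed geometric set $E = \{2^j p 8^n\}$ to absorb the resulting multiple of $p8^n$; privacy is immediate from the one-time-pad property of the shares, and the whole reduction is a few lines. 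You instead unfold the instance via Cook--Levin and run the 3SAT$\to$SubsetSum$\to$PARTITION chain, importing the random-permutation/dummy-clause masking from the $\SSAT$ proof so that Alice's scaffold is input-independent and Bob's selector is a uniformly random $N$-subset of slot slacks; your observations that each slot receives exactly three literal digits (so Alice's total is $\pi$-independent) and that the balancing number is a deterministic function of Bob's own numbers are exactly the points needed to make privacy go through, and the vacuous-slot/always-true-clause analysis for correctness is sound. What your route buys is genericity (it is really a split-hide reduction from any $\NP$ verifier into a subset-sum-shaped target, in the spirit of Theorem~\ref{thm:BSM-for-split-SAT-implies-upper-circuit-bound-for-NP}); what it costs is instance blow-up ($\Theta(n^3)$ slots and $\Theta(n^2)$ fresh variables versus the paper's $3n + O(\log n)$ numbers) and a layer of gadget bookkeeping. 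One detail does need repair: since $\SPRT$ instances are \emph{sets}, your two baseline slacks per slot and Bob's vacating slack all equal $q^{\pi(j)}$ and would collapse, and the ``fresh identifier digits'' fix does not work as stated, because the identifier digits of a selected subset cannot be matched against a fixed target. The standard remedy is to keep distinctness inside the slot digit itself, e.g.\ slot target $4$ with Alice's slacks $q^s$ and $2q^s$ and Bob's vacating number $4q^s$ (and similarly choosing dummy clauses so the two literals of a fresh variable have distinct occurrence sets); with that substitution your argument goes through unchanged.
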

\ignore{ 
    \begin{proof}
        The input to the problem is $n$ non-negative integers of width $n$ bits (by adding zeroes, we can ensure that the number of integers is exactly $n$), forming a set $S$.
We describe (in the following 3 steps) an efficient probabilistic protocol that converts $S$ into two sets of non-negative integers $U_A,U_B$ such that $S$ can be partitioned into two equal parts iff $U_A \cup U_B$ can be partitioned into two equal parts. Moreover, fixing $n$, the marginal distributions of $U_A$ and $U_B$ are independent of $S$. 

        \textbf{Step 1.} Let $p > n2^n$ be prime. \andrej{Choosing this $p$ takes randomness; the reduction will only run in expected poly-time. Does $p$ have to be prime? Can't it be a power of $3$?}
        \dinesh{Agree. Any $p > n2^n$ should work. }
        For any partition $S = A \sqcup B$, if $\sum A \equiv \sum B \pmod{p}$ then $\sum A = \sum B$. This follows from $-n2^n \leq \sum A - \sum B \leq n2^n$.

        \textbf{Step 2.} Denote the original numbers by $x_0,\ldots,x_{n-1}$, and let $y_i,z_i \in \{0,\ldots,p-1\}$ be such that $y_i + z_i \equiv x_i \pmod{p}$. Define
        \begin{align*}
            Y_i &= y_i \cdot 8^n + 1 \cdot 8^i, \\
            Z_i &= z_i \cdot 8^n + 2 \cdot 8^i, \\
            W_i &= 3 \cdot 8^i.
        \end{align*}

        Let $T = \{ Y_i,Z_i,W_i : 0 \leq i \leq n-1 \}$. Suppose that we partition $T = C \sqcup D$ so that $\sum C \equiv \sum D \pmod{8^np}$. Then we can extract a partition of $S$, and vice versa: every partition of $S$ translates to a ``modular partition'' of $T$, as we now show.

        Given a partition $S = A \sqcup B$, we create a similar partition of $T$ as follows. For each $i \in A$, we put $y_i,z_i$ into $C$ and $w_i$ into $D$. Similarly, for each $i \in B$, we put $y_i,z_i$ into $D$ and $w_i$ into $C$. Then $\sum C - \sum D$ is a multiple of $8^n$, and $(\sum C - \sum D)/8^n = \sum A - \sum B$.

        In the other direction, suppose that $\sum C \equiv \sum D \pmod{8^np}$. In particular, $\sum C \equiv \sum D \pmod{8^n}$. Taking $T$ modulo $8$, we get a bunch of zeroes together with $Y_0,Z_0,W_0 = 1,2,3$. The only way to partition these into two sets with equal sum modulo $8$ is by putting $Y_0,Z_0$ into one side and $W_0$ into the other. Continuing in this way, we see that $C,D$ correspond to a partition $S=A \sqcup B$ as in the preceding paragraph, and furthermore, $\sum C \equiv \sum D \pmod{8^np}$ implies $\sum A \equiv \sum B \pmod{p}$.

        \textbf{Step 3.} If $\sum C \equiv \sum D \pmod{8^np}$ then $\sum C - \sum D = \theta p 8^n$, where $|\theta| \leq n$. Let $m = \lceil \log_2 n \rceil$, and let $U$ consist of $T$ together with $U' = \{p8^n/2\} \cup \{ 2^j \cdot p 8^n/2 : 0 \leq j \leq m \}$.\footnote{We could also work with $n$ copies of $p8^n/2$, though this construction is more efficient.}

        On the one hand, $\sum U' = 2^{m+1} \cdot p8^n/2$. On the other hand, the possible sums of subsets of $U'$ are $k \cdot p8^n/2$ for $0 \leq k \leq 2^{m+1}$. Therefore if we partition $U'$ in all possible ways into two sets and compute the sum of the first set minus the sum of the second set, then we get $(2k - 2^{m+1}) \cdot p8^n/2 = (k-2^m) \cdot p8^n$ for $0 \leq k \leq 2^{m+1}$, that is, $k \cdot p8^n$ for $-2^m \leq k \leq 2^m$.

        Given a partition of $U$ into two parts with equal sum, we can extract a partition of $T$ into two parts with equal sum modulo $8^np$, and so a partition of $S$ into two parts with equal sum. Conversely, we can convert a partition of $S$ into two parts with equal sum into a partition of $T$ into two parts with equal sum modulo $8^np$, and by choosing an appropriate partition of $U'$, form a partition of $U$ into two parts with equal sum (since $2^m \geq n$). 

        Concluding, $U$ is a Yes instance of $\PARTITION$ if and only if $S$ is a Yes instance of $\PARTITION$. Note that the size of $U$ is polynomial in the size of $S$.

        \textbf{Construction.} For each $i$, choose $y_i$ at random from $\{0,\ldots,p-1\}$, and let $z_i \in 
\{0,\ldots,p-1\}$ be the unique residue modulo $p$ of $x_i - y_i$. Note that the marginal distributions of $y_0,\ldots,y_{n-1}$ and of $z_0,\ldots,z_{n-1}$ are independent of the instance (given $n$). We give $Y_0,\ldots,Y_{n-1}$ to Alice and the rest of $U$ to Bob.  \andrej{Show the construction first, then the proof of privacy, then the proof of correctness.} \dinesh{Have rewritten it below} \andrej{Thanks! It looks better. I made some small notational changes}
    \end{proof}

\textbf{Re-write of the above proof:}
}
\begin{proof}
	The input to $\PARTITION$ problem consists of a multi-set $S = \{x_1, \dots, x_n\}$ of $n$ non-negative integers (not necessarily distinct) of $n$-bits forming a multi-set $S$.
 	
Reduction: Set $p = n2^n + 1$. For each $i \in \{0,1,\ldots,n-1\}$, pick a $y_i$ uniformly at random from $\{0,1,\ldots,p-1\}$ and let $z_i$ be the unique residue $(x_i-y_i )\mod p$. Define 
	\begin{align*}
		Y_i & = y_i \cdot 8^n + 1 \cdot 8^i \\
		Z_i & = z_i \cdot 8^n + 2\cdot 8^i \\
		W_i & = 3\cdot 8^i		 
	\end{align*}
    Alice's and Bob's inputs are the sets
	\begin{align*}
	 A & = \{Y_0, \ldots, Y_{n-1}\} \\
	B & = \{Z_0, \ldots, Z_{n-1} \} \cup \{W_0, \ldots, W_{n-1}\} \cup E, \qquad E = \{2^j \cdot p8^n \mid 0 \le j \leq \log n \}.
	\end{align*}
		
\medskip\noindent {\it Correctness:}  We show that $U = A \cup B$ has a partition $\sum C = \sum D$ if and only if $S$ has a partition $\sum P = \sum Q$, where $\sum X$ is the sum of all elements in a multi-set $X$.  Let $U' = \{Y_i,Z_i,W_i \mid 0 \le i \le n-1\}$ so that $U' \cup E = U = A \cup B$. We show the following three statements are equivalent.  Correctness follows from the equivalence of 1 and 3.
    \begin{enumerate}
        \item $S$ has a partition $\sum P = \sum Q$
        \item $U'$ has a partition $\sum C' \equiv \sum D' \bmod p8^n$
        \item $U$ has a partition $\sum C = \sum D$.
    \end{enumerate}

    \medskip\noindent 1 $\to$ 2: Assuming $\sum P = \sum Q$, For each $x_i \in P$, place $Y_i$,  $Z_i$ in $C'$ and $W_i$ in $D'$. For each $x_i \in Q$, place $W_i$ in $C'$ and $Y_i, Z_i$ in $D'$. This ensures that $\sum C' - \sum D'$ is a multiple of $8^n$ and moreover $\sum C' - \sum D' = (\sum P - \sum Q)/8^n$. With $P$ and $Q$ of equal sums, $\sum C'-\sum D'\equiv 0 \mod p8^n$, we can conclude that $C',D'$ is the desired partition of $U'$.
	
    \medskip\noindent 2 $\to$ 1:  Assume $\sum C' \equiv \sum D' \bmod p8^n$.   Then for every $i$ (1) $Y_i, Z_i \in C'$ and $W_i \in D'$, or (2) $Y_i, Z_i \in D'$ and $W_i \in C'$, or (3) $Y_i, Z_i, W_i$ do not appear in $C'$ and $D'$.  In all other cases the $i$-th least significant entry in the base-8 representation of $\sum C'$ and $\sum D'$ cannot match.  Assign every number $y_i, z_i, w_i$ to $P$ or $Q$ depending on whether $Y_i, Z_i, W_i$ is in $C'$ or $D'$.  Then $\sum C' \equiv 8^n\sum P + \sum 3 \cdot 8^i$ and $\sum D' \equiv 8^n \sum Q + \sum 3 \cdot 8^i$ modulo $p8^n$, where the second summation is over those indices $i$ 
    that satisfy (1) or (2).
    Therefore $\sum P \equiv \sum Q$ modulo $p$. By our choice of $p$, $\sum P = \sum Q$.  	
    
    \medskip\noindent 2 $\to$ 3:  If $\sum C' \equiv \sum D' \bmod p8^n$, then $\sum C' - \sum D' = k p8^n$ for some $-n \leq k \le n$.  Let $E_k$ be the (unique) subset of $E$ that sums to $\abs{k}p8^n$.  If $k > 0$ set $C = C'$, $D = D' \cup E_k$. Otherwise, set $C = C' \cup E_k$, $D = D'$.  In either case $\sum C = \sum D$.

    \medskip\noindent 3 $\to$ 2: $C'$ and $D'$ are obtained from $C$ and $D$ by dropping the elements in $E$.

\medskip\noindent {\it Privacy:} The marginal distributions of $y_i$ as well as $z_i$ are independent of the input instance $S$ and depends only on $n$. The set $A$ consists of $Y_i$ and $B$  consists of $Z_i, W_i$ and some fixed additional elements none of which depend on the set $S$. Hence, the marginal distributions $A$ and $B$ are also independent of $S$ and depends only on $n$.
\end{proof}

\section{Arithmetic circuits for Matrix Multiplication} \label{sec:arithmetic-circuits-of-bounded-size}

An \emph{arithmetic BSM protocol} for a polynomial $p(X, Y)$ is a decomposition of the form $p(X, Y) = \mathrm{Carol}(\mathrm{Alice}(X), \mathrm{Bob}(Y))$.  The complexity of the protocol is the smallest possible arithmetic circuit complexity of Carol. 

We show that BSM protocols do not help for matrix multiplication.  Recall that the \emph{tensor rank} $R(n)$ of $n \times n$ matrix multiplication is the smallest possible number of terms in a decomposition of $XY$ as a sum of products of linear functions in $X$ and $Y$, respectively.  Asymptotically, $\log_n R(n)$ converges to the matrix multiplication exponent $\omega$.

\begin{theorem}
\label{thm:matrix-multiplication-with-preprocessing}
In any arithmetic BSM protocol for multiplying $n \times n$ matrices Carol must use at least $R(n)/2$ multiplication gates. 
\end{theorem}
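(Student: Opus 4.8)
The plan is to turn Carol's arithmetic circuit into a rank decomposition of the $n \times n$ matrix multiplication tensor using at most twice as many rank-one terms as Carol has multiplication gates; since by definition $R(n)$ is the least number of such terms, this yields $m \geq R(n)/2$, where $m$ is the number of multiplication gates of Carol. This is the BSM incarnation of the classical fact (Strassen) that the tensor rank of a system of bilinear forms is at most twice its multiplicative complexity — and the BSM structure is precisely what makes the argument go through.

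First I would unfold the protocol into a polynomial identity. Write $\mathrm{Alice}(X) = (A_1(X),\ldots,A_a(X))$ and $\mathrm{Bob}(Y) = (B_1(Y),\ldots,B_b(Y))$, tuples of polynomials in the entries of $X$ and of $Y$ respectively, and regard Carol's circuit as computing on formal inputs $u_1,\ldots,u_a,v_1,\ldots,v_b$. Substituting $u_i \mapsto A_i(X)$ and $v_j \mapsto B_j(Y)$ makes every gate $g$ of Carol compute a polynomial in the entries of $X,Y$; decompose it by $X,Y$-bidegree as $g = g_{00} + g_{10} + g_{01} + g_{11} + (\text{higher-order terms})$, where $g_{00}$ is a field constant, $g_{10}$ is linear in $X$ and free of $Y$, $g_{01}$ is linear in $Y$ and free of $X$, and $g_{11}$ is bilinear. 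The output gates, after substitution, compute the entries of $XY$, which are bilinear, so each output gate satisfies $g_{11} = (XY)_{ij}$.

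Then I would walk through Carol's gates in topological order, maintaining a list $\mathcal{L}$ of rank-one bilinear forms of the shape (linear form in $X$)$\cdot$(linear form in $Y$), together with, for every gate $g$, an expression of $g_{11}$ as a linear combination of members of $\mathcal{L}$, while appending at most two entries to $\mathcal{L}$ per multiplication gate. For an input gate $g_{11}=0$, since $A_i$ involves no $Y$-variable and $B_j$ no $X$-variable. Addition and scalar multiplication are immediate from $(g'+g'')_{11}=g'_{11}+g''_{11}$ and $(cg')_{11}=c\,g'_{11}$, with no new terms. For a multiplication gate $g = g'\cdot g''$, only the bidegree pairs $(0,0)+(1,1)$, $(1,1)+(0,0)$, $(1,0)+(0,1)$, $(0,1)+(1,0)$ can contribute to bidegree $(1,1)$ — all bidegrees are nonnegative, so the high-order components of $g',g''$ (which may well be present at internal gates, as $A_i,B_j$ need not be linear) never reach the bilinear part of the product — giving
\[
g_{11} = g'_{00}\,g''_{11} + g'_{11}\,g''_{00} + g'_{10}\,g''_{01} + g'_{01}\,g''_{10};
\]
the first two summands are scalar multiples of forms already in hand, while $g'_{10}\,g''_{01}$ and $g'_{01}\,g''_{10}$ are two new rank-one bilinear forms to add to $\mathcal{L}$. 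Hence $|\mathcal{L}| \leq 2m$. (Under the convention that only non-scalar multiplications count, a multiplication by a field constant contributes nothing new, since then $g''_{10}=g''_{01}=g''_{11}=0$.) Applying this to the output gates expresses every entry of $XY$ as a linear combination of the $\leq 2m$ products in $\mathcal{L}$, i.e. a decomposition of $n\times n$ matrix multiplication of rank at most $2m$; therefore $R(n) \leq 2m$ and $m \geq R(n)/2$. The only delicate point is the bidegree bookkeeping in the multiplication step described above; beyond that I do not anticipate an obstacle.
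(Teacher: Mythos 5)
Your proposal is correct and follows essentially the same route as the paper's proof: both take the contrapositive, split each gate's polynomial into constant, $X$-linear, $Y$-linear, and bilinear parts, observe that addition costs nothing and that a multiplication gate contributes only the two new rank-one terms $g'_{10}g''_{01}$ and $g'_{01}g''_{10}$ to the bilinear part, and conclude a tensor-rank decomposition of $XY$ with at most $2m$ terms. Your explicit remark that higher-bidegree components of internal gates can never feed into the bilinear part is a point the paper leaves implicit, but the argument is the same.
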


As matrix multiplication can be realized in complexity $O(R(n))$ without preprocessing, BSM does not offer any savings in this setting.

\begin{proof}
We prove the contrapositive:  A BSM protocol in which Carol uses $t$ multiplication gates yields a representation of $XY$ of tensor rank at most $2t$.

For a polynomial $P$ in the entries of both $X$ and $Y$, let $c(P)$ be the constant part, $a(P)$ be the linear part involving entries from $X$, $b(P)$ be the linear part involving entries from $Y$, and $\mathit{ab}(P)$ be the bilinear part, where each monomial is a product of an entry of $X$ and an entry of $Y$.  

We construct a circuit that computes the low-degree part $\ell(C) = (c(C), a(C), b(C), ab(C))$ using at most $2t$ multiplication gates inductively over the size of Carol's circuit $C$.  For the base case, Carol has size zero its output must come either from Alice of from Bob, so $ab$ must be zero and $q(C)$ is a linear function of $X$ or $Y$ requiring no multiplications.  For the inductive step, the following rules show that computing $\ell(P + Q)$ from $\ell(P)$ and $\ell(Q)$ takes no extra multiplications, while computing $\ell(PQ)$ takes at most two extra multiplications (underlined):
\begin{center}
\begin{tabular}{l | l}
$c(P+Q) = c(P) + c(Q)$ & $c(PQ) = c(P) c(Q)$ \\
$a(P+Q) = a(P) + a(Q)$ & $a(PQ) = c(P) a(Q) + a(P) c(Q)$ \\
$b(P+Q) = b(P) + b(Q)$ & $b(PQ) = c(P) b(Q) + b(P) c(Q)$ \\
$\mathit{ab}(P+Q) = \mathit{ab}(P) + \mathit{ab}(Q)$ &  $\mathit{ab}(PQ) = c(P) \mathit{ab}(Q) + \mathit{ab}(P) c(Q) + \underline{a(P) \cdot b(Q)} + \underline{b(P) \cdot a(Q)}$
\end{tabular}
\end{center}
As Carol's output is some linear combination of its multiplication gates,  $\ell(XY)$ can be computed using at most twice the number of multiplications used by Carol as desired.
\end{proof}

\section{Polynomials of bounded degree} \label{sec:polynomials-of-bounded-degree}

 Every Boolean function $f \co \zo^N \to \zo$ can be represented by a multilinear  polynomial of degree at most $N$ over the reals. In this section we ask whether we can reduce the degree by means of preprocessing.  Given a function $f \co \zo^{2n} \to \zo$, let $d_f(n, m)$ be the least degree of a polynomial $\mathrm{Carol}$ for which $f(x, y) = \mathrm{Carol}(\mathrm{Alice}(x), \mathrm{Bob}(y))$ and Alice, Bob output at most $m$ bits each.

This type of representation provides dramatic savings for $\AND(x, y) = x_1 \cdot \dotsb \cdot x_n \cdot y_1 \cdot \dotsb \cdot y_n$, which has degree $n$, yet $d_\AND(n, 1) \leq 2$.  This separation is best possible because the only functions with $d_f(n, m) = 1$ are direct sums.

Let $d(n, m) := \max\set{d_f(n, m) \co f \co \zo^{2n} \to \zo}$.  Beaver at al.~\cite{BFKR97} showed (in a different context) that for any constant $C > 0$,
    \[
    d(n,n^{C+1}) \leq \frac{2n}{C \log n}.
    \]
    We extend their argument to obtain the following bound.

    \begin{proposition} \label{prop:universal-degree-reduction-upper-bound}
        For $n < m < 2^n$, it holds that $d(n, m) = O(n/\log(m/n))$.
    \end{proposition}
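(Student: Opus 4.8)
The plan is to generalize the Beaver et al.\ construction, which handles $m = n^{C+1}$, to the full range $n < m < 2^n$ by choosing the ``block size'' of the preprocessing as a function of $m$ rather than fixing it. The starting point is the trivial identity $f(x,y) = \sum_{x', y'} f(x', y') \cdot \mathbb{1}[x = x'] \cdot \mathbb{1}[y = y']$, where $\mathbb{1}[x = x']$ as a polynomial in $x$ (with $x'$ a constant) has degree $n$; this gives degree $2n$ with no preprocessing. The idea is that Alice can precompute, for each of a small number of ``tests,'' a single bit, so that Carol only needs to combine a few such bits.

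First I would partition the $n$ input bits of $x$ into $k$ blocks of size roughly $n/k$ each (and similarly for $y$). For each block $i$ and each possible value $v \in \{0,1\}^{n/k}$ of that block, Alice includes in her message the bit $\mathbb{1}[x_{(i)} = v]$, where $x_{(i)}$ is the restriction of $x$ to block $i$; this costs her $k \cdot 2^{n/k}$ bits, and symmetrically for Bob. Carol then has, as inputs, these indicator bits, and can write $f(x,y)$ as a sum over all pairs $(x', y')$ of $f(x',y')$ times the product over the $2k$ blocks of the corresponding indicator bit that Alice or Bob supplied. Since each of those $2k$ indicator bits is a degree-$1$ input variable to Carol, this expresses $f$ as a polynomial of degree $2k$ in Carol's inputs. (One can be slightly more careful and note that only degree $k$ is actually needed per side, but degree $2k$ suffices for the asymptotics and matches the form $2n/(C\log n)$ when $k = n/(C \log n)$.)

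Next I would choose $k$ to make the message length $k \cdot 2^{n/k}$ at most $m$. Setting $n/k = \log(m/n) - O(1)$, i.e.\ $k = \Theta(n/\log(m/n))$, gives $k \cdot 2^{n/k} = \Theta(n/\log(m/n)) \cdot \Theta(m/n) = \Theta(m/\log(m/n)) \le m$ for an appropriate constant, so the preprocessing length is $O(m)$, which I would absorb into the claim (or shave the constant so it is genuinely at most $m$). This yields Carol-degree $d(n,m) = O(k) = O(n/\log(m/n))$, as desired. I would double-check the boundary regimes: when $m = \Theta(n)$, $\log(m/n) = \Theta(1)$ and the bound degrades to $O(n)$, consistent with the trivial upper bound; when $m$ is close to $2^n$, $k$ drops to $O(1)$ and a single block suffices (Alice sends the indicator of her whole input), matching the earlier observation $d_\AND(n,1) \le 2$ in spirit.

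The main obstacle is bookkeeping rather than a conceptual difficulty: I need to verify that Carol's polynomial genuinely has degree $O(k)$ and not more — in particular, that the sum over all $(x',y')$ of the monomials $f(x',y')\prod_i z_{i,x'_{(i)}} \prod_j w_{j,y'_{(j)}}$ is a legitimate multilinear polynomial in the $O(k \cdot 2^{n/k})$ variables $z_{i,v}, w_{j,v}$ of degree exactly $2k$ (one factor per block), and that this matches $f(x,y)$ on all valid settings of the indicator bits. I also want to confirm the constant in the exponent so the statement ``$O(n/\log(m/n))$'' is clean; getting the additive $O(1)$ inside the logarithm right is the only delicate point, and it only affects constants. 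This recovers the Beaver et al.\ bound when $m = n^{C+1}$ since then $\log(m/n) = C\log n$.
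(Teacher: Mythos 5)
Your construction is correct and is essentially the paper's proof: both partition the $2n$ input coordinates into blocks of size $t \approx \log(m/n)$, have Alice and Bob send all of the (roughly $2^t$) bits of block-level information about each of their own blocks, and observe that $f$ then becomes a polynomial in the transmitted bits of degree equal to the number of blocks, i.e.\ $O(n/\log(m/n))$, with message length about $(n/t)\cdot 2^t \le m$. The only difference is cosmetic: you transmit the $2^t$ block-value indicator bits, whereas the paper transmits the $2^t-1$ subset-products $Z_J=\prod_{i\in J} z_i$ of each block; these are two spanning sets for the same space of block functions, related by an invertible change of basis, and they yield identical degree and message-length parameters.
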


    \begin{proof}
        Let $f \co \zo^{2n} \to \zo$, and let $p$ be a multilinear polynomial of degree $2n$ that computes $f$:
        \[
        p(z_1, \dots, z_{2n}) = \sum_{S \seq [2n]} c_S \cdot z^S.
        \]

        Consider now the following strategy to reduce the degree of $p$: split the input $z$ into $n/t$ disjoints set of coordinates, each of size $t$, for some chosen $t \leq n$. For each coordinate subset $T \seq [2n]$ of size $T$, we define $2^t - 1$ variable as follows: for each $\emptyset \neq J \seq T$:
        \[
        Z_J = \prod_{i \in J} z_i.
        \]

        It follows that we can express $p$ as a polynomial of degree $\ceil{2n/t}$ by taking each monomial  $z^S$ and replacing it with a product of $Z_J$'s for proper choices of $J$'s.

        We can now establish a BSM protocol for $f$ with Carol computing a degree-$(2n/t)$ polynomial by letting Alice and Bob compute the $Z_J$'s and send the results to Carol. This requires preprocessing output length of at most $\ceil{n/t} \cdot (2^t-1)$. Choosing
        \[
        t = \ceil{\log(m/(4n))} \leq \log(m/(4n)) + 1 = \log(m/n) - 1 \leq \log(2^n/n) < n
        \]
        implies preprocessing of length at most
        \[
        \ceil{n/t} \cdot (2^t-1) \leq (n/t+1) \cdot 2^t \underset{t \leq n}{\leq} (2n/t) \cdot 2^t = (2n/t) \cdot 2^\ceil{\log(m/(4n))} \leq m/t \underset{t \geq 1}{\leq} m.
        \]
        Thus,
        \[
        d(n, m) = \ceil{2n/t} \leq 2n/t+1 = \frac{2n}{\ceil{\log(m/(4n))}} + 1 \leq \frac{2n}{\log(m/(4n))} + 1 = O(n/\log(m/n)). \qedhere
        \]
    \end{proof}

    The following proposition about the equality function shows that this bound is almost tight.

    \begin{proposition} \label{prop:degree-reduction-equality-lower-bound}
        For $n \leq m$, $d_\EQ(n, m) = \Omega(n/\log m)$.
    \end{proposition}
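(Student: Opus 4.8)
The plan is to establish a lower bound on $d_\EQ(n,m)$ by a counting/dimension argument. Fix a BSM protocol computing $\EQ \co \zo^n \times \zo^n \to \zo$ with Alice and Bob each sending $m$ bits and Carol computing a polynomial $P$ of degree $d$ in the $2m$ preprocessed variables. Since $\EQ(x,y)=P(\mathrm{Alice}(x),\mathrm{Bob}(y))$, for each fixed $x$ the function $y \mapsto \EQ(x,y)$ equals $y \mapsto P(\mathrm{Alice}(x), \mathrm{Bob}(y))$, which is a polynomial of degree at most $d$ in the $m$ bits $\mathrm{Bob}(y)$, hence (after substituting the Boolean map $\mathrm{Bob}$) a real-valued function of $y$ that lies in the span of monomials of degree $\le d$ in the $m$ coordinates of $\mathrm{Bob}(y)$. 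The key point: as $x$ ranges over all $2^n$ inputs, the functions $y\mapsto \EQ(x,y)$ are the $2^n$ distinct ``point indicator'' functions on $\zo^n$, which are linearly independent over $\R$. On the other hand they all live in the fixed real vector space $V$ spanned by $\{\prod_{j\in S} (\mathrm{Bob}(y))_j : |S|\le d\}$, a space of dimension at most $\binom{m}{\le d} = \sum_{i=0}^d \binom{m}{i}$. Therefore $2^n \le \sum_{i=0}^d \binom{m}{i}$.

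From here the bound is routine: using $\sum_{i=0}^d \binom{m}{i} \le (em/d)^d$ (valid for $d\le m$; one can also just use $\le (m+1)^d$), taking logarithms gives $n \le d\log(em/d) \le d\log(em) \le d\log(m) + O(d)$, i.e. $d = \Omega(n/\log m)$ provided $m \le 2^{o(n)}$ or so; in the regime $n \le m$ of interest one gets $d \ge n/\log(m+1) = \Omega(n/\log m)$ directly from $2^n \le (m+1)^d$. I would present the crude estimate $\sum_{i=0}^d\binom{m}{i}\le (m+1)^d$ to keep the calculation short, and note the matching of this lower bound with Proposition~\ref{prop:universal-degree-reduction-upper-bound} when $m=\poly(n)$.

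The one technical point to get right is the claim that the $2^n$ functions $y \mapsto P(\mathrm{Alice}(x),\mathrm{Bob}(y))$ span a space of dimension at most $\binom{m}{\le d}$ \emph{as functions of $y$}: since $\mathrm{Bob}(y)\in\zo^m$, every monomial in the $\mathrm{Bob}(y)$-variables of degree $> d$ would in principle appear only if $P$ had degree $>d$, so restricting $P$ to degree $d$ genuinely confines these functions to $V=\mathrm{span}\{y\mapsto \prod_{j\in S}(\mathrm{Bob}(y))_j : S\seq[m],\,|S|\le d\}$, and $\dim V \le |\{S : |S|\le d\}| = \sum_{i=0}^d\binom{m}{i}$. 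The main (and only real) obstacle is verifying linear independence of the point indicators inside $V$ — but this is immediate since the $2^n$ indicator functions $\mathbf{1}[y=x]$ are already a basis of $\R^{\zo^n}$, so any collection of functions equal to all of them is linearly independent. Hence the dimension comparison forces $2^n\le\dim V$, completing the argument.
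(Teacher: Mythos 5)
Your argument is correct in substance and is at heart the same linear-algebra argument as the paper's, just packaged more directly: the paper first treats polynomials of individual degree $1$, where it observes that any $m+2$ of the affine forms $Q_x(\cdot)=P(A(x),B(\cdot))$ are linearly dependent and derives a contradiction by substituting $y=x_0$ (this is exactly your ``point indicators are linearly independent'' fact, yielding $m\geq 2^n-1$), and then reduces general degree $d$ to that case by replacing the messages with all ANDs of up to $d$ message bits, giving $2^n-1\leq\binom{m}{\leq d}<(m+1)^d$ and hence $d\geq n/\log(m+1)$. Your one-step dimension count over the span of degree-$\leq d$ monomials in Bob's message reaches the same inequality and is arguably cleaner.

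The one point you must fix is the choice of field. In this paper $d_f(n,m)$ is defined for Carol computing a polynomial over $\mathbb{F}_2$, and the paper's proof is carried out over $\mathbb{F}_2$; your argument, as written, does the linear algebra over $\R$. For an $\mathbb{F}_2$-polynomial $P$ of degree $d$, the real-valued function $y\mapsto P(A(x),B(y))$ does \emph{not} in general lie in the real span of degree-$\leq d$ monomials in $B(y)$ --- e.g.\ the $\mathbb{F}_2$-linear form $w_1+\cdots+w_m$ is the parity function, which has real degree $m$ --- so the containment step fails over $\R$. The remedy is immediate: run the identical dimension comparison over $\mathbb{F}_2$, where $y\mapsto P(A(x),B(y))$ does lie in the $\mathbb{F}_2$-span of the multilinear monomials $\prod_{j\in S}B(y)_j$, $|S|\leq d$, and the $2^n$ point indicators are linearly independent over any field. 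With that substitution your proof goes through verbatim and matches the paper's bound $d\geq n/\log(m+1)$.
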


    \begin{proof}
        Let us start by considering the case in which the $\mathbb{F}_2$-polynomial has \emph{individual degree~$1$}, where individual degree-$1$ means that each variable taken from either Alice's message or Bob's message occurs at most once in any monomial computed by Carol. This means that it is of the form
        \[
        P(z,w) = \sum_{ij} c_{ij} z_i w_j + \sum_i d_i z_i + \sum_j e_j w_j + r.
        \]
        Now consider the $n$-bit equality function $\EQ$. Suppose that it can be calculated using preprocessing of length $m$. That is, there exist functions $A,B\colon \zo^n \to \zo^m$ and a polynomial $P$ as above such that for all $x,y \in \zo^n$:
        \[
        \EQ(x,y) = P(A(x),B(y)).
        \]

        For each $x \in \zo^n$, $Q_x(y) = P(A(x),B(y))$ is an affine form on $m$ variables. Any $m+2$ such forms must be linearly dependent. Hence if $m+2 \leq 2^n$, we can find $x_0,\ldots,x_m$ such that
        \[
        Q_{x_0}(y) = Q_{x_1}(y) + \cdots + Q_{x_m}(y).
        \]
        Substituting $y = x_0$, we obtain a contradiction, since the left-hand side should be~$1$, while the right-hand side is a sum of zeroes. We conclude that $m \geq 2^n - 1$. (This should be compared to the obvious upper bound in which $m = 2^n$.)

        Now suppose that $\EQ(x,y) = P(A(x),B(y))$, where $P$ is an arbitrary polynomial of degree~$d$, and $|A(x)| = |B(x)| = m$. Let $A'(x)$ consist of all ANDs of up to $d$ elements from $A(x)$, and define $B'(x)$ similarly. Thus, $|A'(x)| = |B'(x)| = \binom{m}{\leq d}$. We can find a polynomial $P'$ of individual degree~$1$ such that $P'(A'(x),B'(y)) = P(A(x),B(y))$. Therefore,
        \[
        2^n - 1 \leq \binom{m}{\leq d} < (m+1)^d \implies d \geq \frac{n}{\log(m+1)}. \qedhere
        \]
    \end{proof}

    Combining the last two propositions, we have a tight bound when $m$ is polynomial in $n$:

    \begin{corollary} \label{cor:poly-degree-reduction-tight-bound}
        For every constant $C > 1$, $d(n,n^C) = \Theta(n/\log n)$.
    \end{corollary}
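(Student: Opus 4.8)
The plan is to obtain this tight bound by simply combining Propositions~\ref{prop:universal-degree-reduction-upper-bound} and~\ref{prop:degree-reduction-equality-lower-bound} after the substitution $m = n^C$.

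For the upper bound, I would first check that the hypothesis $n < m < 2^n$ of Proposition~\ref{prop:universal-degree-reduction-upper-bound} is satisfied: since $C > 1$ is a constant, $n < n^C$ for all $n \geq 2$, and $n^C < 2^n$ for all sufficiently large $n$. Applying the proposition with $m = n^C$ then gives
\[
d(n, n^C) = O\left(\frac{n}{\log(n^C/n)}\right) = O\left(\frac{n}{(C-1)\log n}\right) = O(n/\log n),
\]
where the last step absorbs the positive constant $C-1$.

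For the matching lower bound, I would invoke Proposition~\ref{prop:degree-reduction-equality-lower-bound} with $m = n^C$, obtaining $d_\EQ(n, n^C) = \Omega(n/\log n^C) = \Omega(n/(C\log n)) = \Omega(n/\log n)$, again absorbing the constant $C$. Since $d(n,m) \geq d_\EQ(n,m)$ by definition of $d$ as a maximum over all $f\co\zo^{2n}\to\zo$, this yields $d(n, n^C) = \Omega(n/\log n)$. Combining the two estimates gives $d(n, n^C) = \Theta(n/\log n)$.

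There is essentially no obstacle here beyond routine bookkeeping of asymptotic constants; the one mild point requiring care is confirming the range condition $n < n^C < 2^n$ for all large $n$, which holds precisely because $C > 1$ is constant, so that the universal upper bound of Proposition~\ref{prop:universal-degree-reduction-upper-bound} is applicable.
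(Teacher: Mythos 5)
Your proposal is correct and matches the paper's intended argument: the corollary is stated as an immediate combination of Propositions~\ref{prop:universal-degree-reduction-upper-bound} and~\ref{prop:degree-reduction-equality-lower-bound} with $m = n^C$, exactly as you do. The bookkeeping (checking $n < n^C < 2^n$, absorbing the constants $C-1$ and $C$, and noting $d \geq d_\EQ$) is all that is needed.
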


\section{Computability with preprocessing} \label{sec:computability-with-preprocessing}

Language $L \subseteq \{0, 1\}^* \times \{0, 1\}^*$ is BSM-computable with message size $m(n)$ if there is a BSM protocol for $L$ in which 
\begin{enumerate}
\item Carol is a Turing Machine that receives $m(\abs{x})$ and $m(\abs{y})$-bit messages from Alice and Bob on input $(x, y)$
\item Carol halts on all inputs of the form $\mathrm{Alice}(x)$, $\mathrm{Bob}(y)$.
\end{enumerate}
In this setting Alice and Bob do not know each other's input length.  

\begin{proposition} \label{prop:computability-with-preprocessing-RE}
Every recursively enumerable language is BSM-computable with message size $2n + 1$.
\end{proposition}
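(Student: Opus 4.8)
The plan is to leverage the fact that Alice and Bob are computationally unbounded, so their messages may carry non-computable information, while Carol only needs to extract a \emph{finite time bound} for an underlying semi-decision procedure. Fix a Turing machine $M$ that accepts exactly $L$ (and may diverge on non-members), and for $(x,y)\in L$ let $t(x,y)$ be the number of steps $M$ takes to accept $(x,y)$. If Carol is given $x$, $y$, and any $N\ge t(x,y)$, she can simulate $M$ on $(x,y)$ for $N$ steps and output $1$ iff it accepts; on non-members this always yields $0$, so correctness only requires $N\ge t(x,y)$ \emph{when} $(x,y)\in L$. The difficulty is that $t(x,y)$ is not bounded by any computable function of $(|x|,|y|)$, and a message of length $2n+1$ is far too short to contain such a number literally.

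The key idea is that each party sends, besides its own input, a short \emph{pointer} to a hardest short partner rather than a time value. Define $y^*_x$ to be a string $y'$ with $|y'|\le |x|$ and $(x,y')\in L$ maximizing $t(x,y')$ (and a distinguished symbol $\bot$ if no such $y'$ exists); Alice sends $A(x)=x\cdot\mathrm{enc}(y^*_x)$, where $\mathrm{enc}$ injectively packs a string of length $\le n$ (or $\bot$) into exactly $n+1$ bits using a one-bit delimiter, e.g.\ $\mathrm{enc}(s)=s\cdot 1\cdot 0^{n-|s|}$ and $\mathrm{enc}(\bot)=0^{n+1}$, so $|A(x)|=2n+1$. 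Symmetrically Bob sends $B(y)=y\cdot\mathrm{enc}(x^*_y)$, where $x^*_y$ maximizes $t(x',y)$ over strings $x'$ with $|x'|\le |y|$ and $(x',y)\in L$. Being unbounded, Alice and Bob can determine these pointers despite their non-computability, and neither needs to know the other's input length.

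Carol, on receiving messages of lengths $2n+1$ and $2m+1$, reads off $n,m$, parses $x,y$ and the two encoded pointers, and uses the pointer from whichever party's input is at least as long as the other's (say Bob's, when $n\le m$). If that pointer decodes to $\bot$, then in particular $(x,y)\notin L$ (since $|x|\le m$), and Carol outputs $0$. Otherwise $(x^*_y,y)\in L$, so simulating $M$ on $(x^*_y,y)$ halts; let $N$ be its running time. Carol then simulates $M$ on $(x,y)$ for $N$ steps and outputs $1$ iff it accepts. Carol is total because the only potentially unbounded simulation is of $M$ on a pair already known to lie in $L$. For correctness: if $(x,y)\in L$ then $x$ is one of the candidates defining $x^*_y$, so $N=\max\{t(x',y):|x'|\le m,\, (x',y)\in L\}\ge t(x,y)$ and Carol outputs $1$; if $(x,y)\notin L$ then $M$ never accepts $(x,y)$ and she outputs $0$. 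The case $m<n$ is handled symmetrically with Alice's pointer $y^*_x$, which covers $(x,y)$ precisely because then $|y|\le n$.

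The main obstacle is conceptual rather than computational: recognizing that the time bound cannot be transmitted directly, that it suffices to transmit a \emph{pointer} to a hardest partner of length at most one's own input (which fits in the $\approx n$ spare bits and is computable by the unbounded sender), and that comparing $|x|$ with $|y|$ lets Carol always pick a pointer whose induced bound dominates $t(x,y)$. Once this is set up, checking totality, correctness, the $\bot$ edge cases (including $L=\emptyset$), and the message-length accounting ($n$ bits for the input plus $n+1$ for the delimited pointer) is routine.
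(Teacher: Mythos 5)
Your proof is correct, and it shares the paper's structural skeleton — each party sends its own input together with $n+1$ bits of non-computable advice summarizing all partners of length at most its own, and Carol uses the advice of whichever party's input is at least as long as the other's — but the advice and Carol's algorithm are genuinely different. The paper uses a census: Alice sends the \emph{number} of strings $y'$ with $|y'| \le |x|$ such that $(x,y') \in L$ (a value in $\{0,\dots,2^{|x|+1}-1\}$, hence exactly $n+1$ bits), and Carol dovetails $M$ on all these pairs, halting once the announced number of acceptances has appeared and accepting iff $(x,y)$ is among them. You instead send a \emph{pointer} to a slowest-accepting partner $y^*_x$ (or $\bot$), from which Carol extracts a time bound $N=t(x^*_y,y)$ and then runs a single $N$-step simulation of $M$ on $(x,y)$; your $\bot$ case plays the role of a zero census count. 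Both encodings are tight at $n+1$ bits (there are $2^{n+1}$ possible census values and $2^{n+1}$ possible pointers including $\bot$), and both hinge on the same two facts: the unbounded senders can compute the advice, and Carol only ever runs an unbounded simulation on instances already guaranteed to lie in $L$, which gives totality on valid message pairs. The census version hands Carol the full membership pattern of the bounded-length partners at termination; your pointer version isolates the busy-beaver-style time bound explicitly and reduces Carol's work to two simulations rather than an exponential dovetail. Your handling of the tie $|x|=|y|$, the $\bot$ edge cases (including $L=\emptyset$), the delimiter encoding into exactly $n+1$ bits, and Carol's recovery of $n,m$ from the odd message lengths are all sound, so the argument goes through as written.
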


As BSM-computability is closed under complement, the proposition extends to co-recursively enumerable languages.  

\begin{proof}
Let $M$ be a Turing Machine that recognizes $L$, i.e., it halts on yes-instances only.  The following BSM protocol computes $L$:
    \begin{itemize}
        \item On input $x$, Alice sends Carol $x$ and the number of pairs $(x, y')$ with $|y'| \leq |x|$ such that $xy' \in L$. On input $y$, Bob sends Carol $y$ and the number of pairs $(x', y)$ with $|x'| \leq |y|$ such that $x'y \in L$. 
        \item If $\abs{x} \leq \abs{y}$, Carol runs $M$ on all inputs $xy'$ for $|y'| \leq |x|$ in parallel and halts whenever the number of accepted instances  reaches the value sent by Alice. Carol accepts iff $(x, y)$ is among them.  If $\abs{x} <  \abs{y}$, Alice's and Bob's roles are reversed.\hfill \qedhere
    \end{itemize}
\end{proof}

If condition 2 in the definition of BSM-computability were replaced with the stronger requirement that Carol halts on all inputs, Proposition~\ref{prop:computability-with-preprocessing-RE} would no longer be true:  It is possible to construct a recursively enumerable language that is not BSM-computable in this stronger sense whenever $m(n) = o(2^{n/2})$ by diagonalizing against every possible Carol~\cite{O92}.

\section{Conclusion and Open Problems}
In this work, we initiate a systematic study of the BSM complexity of several natural classes of functions with different assumptions on Carol: size-bounded, degree-bounded, and arithmetic (summarized in Table \ref{tbl:result-summary}). Our work suggests several natural open problems.
\begin{itemize}
    \item \textit{Size-bounded Carol.} Can preprocessing help improve the matrix multiplication time when Carol is a Boolean circuit? Our negative result only considered an arithmetic Carol over the reals (Theorem \ref{thm:matrix-multiplication-with-preprocessing}). To argue against the possibility of BSM with polynomial-size Carol for arbitrary  $\NP$ languages, we suggested two explicit candidates for $\NP$ languages with the highest possible circuit complexity: Turing Machine Acceptance and Succinct Subset Sum (see Section \ref{sec:our results introduction} for the exact conjectures). Studying these conjectures and proposing other natural candidates for maximally hard $\NP$ languages may be of independent interest. A final open question is proving an analog of Theorem~\ref{thm:BSM-for-split-SAT-implies-upper-circuit-bound-for-NP} for $\SCOL$, $\SPRT$ or split-versions of some other $\NP$ complete languages, where the natural Cook-Levin approach does not seem to apply. 

    \item \textit{Depth-bounded Carol.} We do not have any results pertaining to a depth-bounded Carol. However, the prior result of \cite{FIKK20} which shows a negative result for a BSM protocol for the mod-$2$ inner product function by $\mathsf{AC}^0$ circuits, leaves a non-trivial open question to solve. Can we prove even a weak unconditional lower bound for the mod-$3$ inner product by $\mathsf{AC}^0$ circuits with mod-$2$ gates?
\end{itemize}

\bibliographystyle{alpha}
\bibliography{biblio}

\newpage
\appendix

\section{Related Models} \label{sec:related-models}

We already saw the relation between bounded simultaneous messages and instance hiding. In this section, we consider two other models that relate to the BSM model---\emph{private information retrieval (PIR)} and \emph{locally decodable codes (LDC)}. The relation is established in the form of reductions between models, which enable the translation of bounds from one model to another. In Section~\ref{ssec:Universal-lower-bound-for-the-instance-hiding-model}, we establish a relation between IH and PIR, and obtain along the way an IH lower bound that applies to most functions; and in Section~\ref{ssec:from-BSM to-LDC-and-vice-versa}, we present the reduction chains relating the four models.

\subsection{Universal lower bound for the instance hiding model} \label{ssec:Universal-lower-bound-for-the-instance-hiding-model}

    Here, we wish to show that efficient instance hiding schemes only capture a sparse set of functions, namely that most functions cannot be computed with an efficient instance hiding scheme. In fact, we show something even stronger, that most functions require Henry to compute a circuit of exponential size.\footnote{In the other direction, any function is computable by an IH scheme if Henry is allowed to compute a circuit of exponential size; indeed, if this is the case, then Henry can compute any function himself without the help of Alice and Bob.}

    To that end, let us consider private information retrieval, introduced in~\cite{CGKS95} and further explored in \cite{Amb96, IK99, BIKR02, BFG06, BIKO12}, mostly in the context of constructing such schemes or ruling out their existence unconditionally. We are interested in the information-theoretic variant defined below.
    \begin{definition}[Private information retrieval] \label{def:PIR}
        A $2$-server, $a$-bit reply PIR protocol for a (binary) database of length $d$ consists of two servers with access to a database of length $d$, and a client that is given as input a database entry of length $n$ and should output its content, while obeying the requirement that the target entry should not be revealed to the servers. The client's queries can be adaptive or non-adaptive.
    \end{definition}

    We also need the corollary of the following known lemma.
    \begin{theorem}[Perles--Sauer--Shelah Lemma~\cite{Sau72, She72}] \label{thm:perles-sauer-shelah-lemma}
        Let $\mathcal{F}$ be a family of subsets of $\zo^n$, satisfying $\size{\mathcal{F}} > \sum_{i=0}^{k-1} \binom{n}{i}$; then, $\mathcal{F}$ shatters a set of size $k$.
    \end{theorem}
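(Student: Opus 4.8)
The plan is to prove the following slightly stronger and cleaner statement (the Sauer--Shelah--Pajor inequality), from which the theorem is immediate: for \emph{every} family $\mathcal{F}$ of subsets of $[n]$ (identifying a subset with its characteristic vector in $\zo^n$), the number $s(\mathcal{F})$ of subsets of $[n]$ that are \emph{shattered} by $\mathcal{F}$ satisfies $s(\mathcal{F}) \ge |\mathcal{F}|$. Granting this, suppose toward a contradiction that $\mathcal{F}$ shatters no set of size $k$. Since every subset of a shattered set is itself shattered, $\mathcal{F}$ then shatters only sets of size at most $k-1$, so $s(\mathcal{F}) \le \sum_{i=0}^{k-1}\binom{n}{i}$; combined with $s(\mathcal{F}) \ge |\mathcal{F}|$ this contradicts the hypothesis $|\mathcal{F}| > \sum_{i=0}^{k-1}\binom{n}{i}$. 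Hence $\mathcal{F}$ shatters a set of size $k$, as claimed.

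To prove $s(\mathcal{F}) \ge |\mathcal{F}|$ I would induct on $n$. The base case $n = 0$ is trivial: then $\mathcal{F} \subseteq \{\emptyset\}$, and if $\mathcal{F}$ is nonempty it shatters $\emptyset$, so $s(\mathcal{F}) \ge |\mathcal{F}|$. For the inductive step, single out the element $n$ and split $\mathcal{F}$ accordingly: set $\mathcal{F}_0 = \{F \in \mathcal{F} : n \notin F\}$ and $\mathcal{F}_1 = \{F \setminus \{n\} : F \in \mathcal{F},\ n \in F\}$, both regarded as families of subsets of $[n-1]$. Then $|\mathcal{F}| = |\mathcal{F}_0| + |\mathcal{F}_1| = |\mathcal{F}_0 \cup \mathcal{F}_1| + |\mathcal{F}_0 \cap \mathcal{F}_1|$, and by the induction hypothesis applied on the ground set $[n-1]$ this is at most $s(\mathcal{F}_0 \cup \mathcal{F}_1) + s(\mathcal{F}_0 \cap \mathcal{F}_1)$. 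So it suffices to show $s(\mathcal{F}) \ge s(\mathcal{F}_0 \cup \mathcal{F}_1) + s(\mathcal{F}_0 \cap \mathcal{F}_1)$.

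This last inequality follows from two observations that map the shattered sets on the right into disjoint collections of shattered sets of $\mathcal{F}$. First, if $S \subseteq [n-1]$ is shattered by $\mathcal{F}_0 \cup \mathcal{F}_1$, then $S$ is shattered by $\mathcal{F}$: a member of $\mathcal{F}_0$ realizing a given trace on $S$ already lies in $\mathcal{F}$, while a member $G$ of $\mathcal{F}_1$ realizing that trace can be replaced by $G \cup \{n\} \in \mathcal{F}$, which has the same trace on $S$ since $n \notin S$. Second, if $S \subseteq [n-1]$ is shattered by $\mathcal{F}_0 \cap \mathcal{F}_1$, then $S \cup \{n\}$ is shattered by $\mathcal{F}$: for $G \in \mathcal{F}_0 \cap \mathcal{F}_1$ both $G$ and $G \cup \{n\}$ belong to $\mathcal{F}$, so from a set $G$ realizing trace $T$ on $S$ we realize $T$ on $S \cup \{n\}$ via $G$ and $T \cup \{n\}$ via $G \cup \{n\}$, covering all $2^{|S|+1}$ traces. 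The sets produced by the first observation are contained in $[n-1]$, those produced by the second all contain $n$, so the two collections are disjoint, giving $s(\mathcal{F}) \ge s(\mathcal{F}_0 \cup \mathcal{F}_1) + s(\mathcal{F}_0 \cap \mathcal{F}_1)$ and closing the induction.

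The only subtle point — and the one that needs to be got right — is the choice of auxiliary families. The naïve split alone fails, since $s(\mathcal{F}) \ge s(\mathcal{F}_0) + s(\mathcal{F}_1)$ is false in general; the trick is to pass to $\mathcal{F}_0 \cup \mathcal{F}_1$ and $\mathcal{F}_0 \cap \mathcal{F}_1$, whose sizes still sum to $|\mathcal{F}|$, so that the two shattering observations produce provably disjoint images. Everything else is routine bookkeeping. (An alternative is the linear-algebraic argument: assign to each $F \in \mathcal{F}$ the multilinear monomial $\prod_{i \in F} x_i$ and show that, modulo the ideal of polynomials vanishing on the characteristic vectors of $\mathcal{F}$, these monomials are spanned by the monomials supported on shattered sets, so $|\mathcal{F}| \le s(\mathcal{F})$; but the inductive proof above is shorter and entirely self-contained.)
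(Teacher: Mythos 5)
Your proof is correct. Note, though, that the paper does not prove this statement at all: it invokes the Perles--Sauer--Shelah Lemma as a classical result, citing Sauer and Shelah, and only uses it (and its corollary bounding $|\mathcal{F}|$ by $O(n^k)$ when the VC dimension is $k$) as a black box in the proof of Proposition~\ref{prop:no-IH-for-most}. What you give is the standard Pajor-style argument: you prove the stronger inequality that the number of sets shattered by $\mathcal{F}$ is at least $|\mathcal{F}|$, by induction on $n$ via the split into $\mathcal{F}_0$ and $\mathcal{F}_1$ and the passage to $\mathcal{F}_0 \cup \mathcal{F}_1$ and $\mathcal{F}_0 \cap \mathcal{F}_1$, whose shattered sets inject into those of $\mathcal{F}$ with disjoint images (sets avoiding $n$ versus sets containing $n$). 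All steps check out, including the reduction of the stated lemma to the shattered-set inequality via the fact that subsets of shattered sets are shattered; your stronger inequality also immediately yields Corollary~\ref{cor:perles-sauer-shelah-lemma} as used in the paper. So there is no gap; you have simply supplied a complete, self-contained proof of a result the paper cites from the literature.
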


    \begin{corollary} \label{cor:perles-sauer-shelah-lemma}
        If the $\VC$ dimension of a set family $\mathcal{F}$ is $k$, then $\size{\mathcal{F}} \leq \sum_{i=0}^k \binom{n}{i} = O(n^k)$.
    \end{corollary}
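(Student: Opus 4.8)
The plan is to derive this as the immediate contrapositive of Theorem~\ref{thm:perles-sauer-shelah-lemma}, applied with the parameter shifted up by one. First I would unwind the definition: saying that the $\VC$ dimension of $\mathcal{F}$ equals $k$ means in particular that $\mathcal{F}$ shatters no set of size $k+1$. Next, instantiate Theorem~\ref{thm:perles-sauer-shelah-lemma} with $k+1$ in place of $k$: it then asserts that if $\size{\mathcal{F}} > \sum_{i=0}^{k}\binom{n}{i}$, then $\mathcal{F}$ shatters some set of size $k+1$. Taking the contrapositive and combining with the previous observation, we get $\size{\mathcal{F}} \leq \sum_{i=0}^{k}\binom{n}{i}$, which is the first assertion of the corollary.

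For the asymptotic claim $\sum_{i=0}^{k}\binom{n}{i} = O(n^k)$, I would use the crude bound $\binom{n}{i} \leq n^i$ valid for $0 \leq i \leq n$, so that $\sum_{i=0}^{k}\binom{n}{i} \leq \sum_{i=0}^{k} n^i \leq (k+1)\,n^k$, which is $O(n^k)$ since $k$ is treated as a constant here. (If a cleaner constant were desired one could instead invoke the standard estimate $\sum_{i=0}^{k}\binom{n}{i} \leq (en/k)^k$, but it is not needed.)

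I do not expect any genuine obstacle; the one point requiring care is the index shift. Theorem~\ref{thm:perles-sauer-shelah-lemma} is stated with the threshold $\sum_{i=0}^{k-1}\binom{n}{i}$ forcing a shattered set of size $k$, so to rule out a shattered set of size $k+1$ one must feed the lemma the parameter $k+1$, which produces precisely the threshold $\sum_{i=0}^{k}\binom{n}{i}$ appearing in the corollary's statement.
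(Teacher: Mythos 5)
Your proposal is correct and matches the intended derivation: the paper states the corollary as an immediate consequence of Theorem~\ref{thm:perles-sauer-shelah-lemma}, and the contrapositive with the parameter shifted from $k$ to $k+1$ (plus the crude bound $\binom{n}{i}\leq n^i$ for the $O(n^k)$ estimate) is exactly the standard argument. You correctly identified the index shift as the only point requiring care.
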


    Another result we will make use of is the following.
    \begin{theorem}[Kerenidis--Wolf~\cite{KW03}] \label{thm:KW03}
        Any $2$-server, $a$-bit answer PIR protocol for a database of size $N$, requires at least $\Omega(N/2^{5a})$ bits of communication.
    \end{theorem}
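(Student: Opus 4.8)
The plan is to route through $2$-query locally decodable codes (LDCs): I would first reduce a communication-efficient $2$-server PIR protocol to a $2$-query LDC with comparable parameters, and then invoke the quantum lower bound for such codes --- which is really the heart of the Kerenidis--Wolf result --- to finish.

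For the reduction, suppose a $2$-server, $a$-bit reply PIR for databases $x \in \zo^N$ uses at most $c$ bits of communication, so each server's query has length $t \le c$. I would build a code $C \co \zo^N \to (\zo^a)^M$ with $M \le 2\cdot 2^{t}$ whose positions are indexed by pairs (server $\sigma \in \{1,2\}$, potential query $q \in \zo^{t}$), with $C(x)_{(\sigma,q)}$ the answer server $\sigma$ would give on database $x$ and query $q$. The decoder for bit $i$ samples the client's randomness $r$, reads positions $(1, q_1(i,r))$ and $(2, q_2(i,r))$, and runs the client's reconstruction; PIR correctness makes this output $x_i$ whenever no read position is corrupted. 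Privacy says the marginal position read in each server's block is independent of $i$, and after a standard smoothing step (discarding the heaviest $O(\delta)$-fraction of positions) this marginal puts mass $O(1/M)$ everywhere, so a $\delta$-fraction of errors fools the decoder with probability only $O(\delta)$. This yields a $(2,\delta,\tfrac12-O(\delta))$-LDC of length $M = 2^{O(c)}$ over the alphabet $\zo^a$, for a suitable constant $\delta$.

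The main work is then the quantum bound: a $(2,\delta,\epsilon)$-LDC $C \co \zo^N \to (\zo^a)^M$ with $\delta,\epsilon = \Omega(1)$ must have $\log M = \Omega(N/2^{O(a)})$, after which $c \ge \Omega(\log M)$ and bookkeeping the constants gives the stated $N/2^{5a}$. Following Kerenidis--Wolf, I would first apply the Katz--Trevisan argument to extract, for each $i$, a matching $\mathcal{M}_i$ of $\Omega(\delta M)$ disjoint position-pairs $\{j,k\}$ with a map recovering $x_i$ from $(C(x)_j, C(x)_k)$ at advantage $\Omega(\epsilon)$. Then I would consider the index-independent quantum encoding $\frac{1}{\sqrt M}\sum_j \lvert j\rangle \lvert C(x)_j\rangle$ on $O(\log M + a)$ qubits, use the Kerenidis--Wolf query-combining lemma to turn the two classical queries along $\mathcal{M}_i$ into a single quantum query to this state, and argue that $x_i$ is then recoverable with probability $\tfrac12 + \Omega(\delta\epsilon/2^{O(a)})$ --- the $2^{O(a)}$ loss being the price of distilling one bit from an $a$-bit answer register accessed in superposition. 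This gives a quantum random access code encoding $N$ bits into $O(\log M + a)$ qubits with per-bit advantage $\eta = \Omega(1/2^{O(a)})$, and Nayak's lower bound forces $O(\log M + a) \ge (1 - H(\tfrac12+\eta))N = \Omega(\eta^2 N)$; since $a$ is negligible against this in the nontrivial regime, $\log M = \Omega(N/2^{O(a)})$.

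The hard part will be the quantum step --- specifically the query-combining lemma, and even more the careful accounting of how the one-bit recovery advantage decays in the answer length $a$, since that decay is exactly what promotes an $N/2^{O(1)}$-type bound to the $N/2^{5a}$ claimed. A secondary subtlety is making the smoothing step in the PIR-to-LDC reduction preserve both the number of queries and a constant recovery advantage; the remaining ingredients (the encoding, the Katz--Trevisan matching extraction, and Nayak's bound) are essentially mechanical.
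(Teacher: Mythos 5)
This statement is not proved in the paper at all: it is imported verbatim as an external result from Kerenidis--Wolf \cite{KW03}, so there is no in-paper argument to compare against. Your sketch is, in essence, a faithful reconstruction of the route taken in the cited source: the PIR-to-code reduction you describe (positions indexed by server/query pairs, answers as symbols, length $2^{O(c)}$) is exactly the Goldreich--Karloff--Schulman--Trevisan-style reduction that this paper itself quotes separately in the ``From PIR to LDC'' paragraph of Appendix~\ref{ssec:from-BSM to-LDC-and-vice-versa}, and the quantum part (Katz--Trevisan matchings, combining the two classical queries into one quantum query, a random access code on $O(\log M + a)$ qubits, Nayak's bound) is the actual KW03 argument. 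Two caveats: first, the quantitative heart --- the query-combining lemma and the precise $2^{-\Theta(a)}$ decay of the one-bit recovery advantage, which is what produces the exponent $5a$ rather than some other $O(a)$ --- is invoked rather than proved, and you rightly flag it; as a self-contained proof this is the missing core, though it is precisely the content of the theorem being cited. Second, a small technical correction to your reduction: PIR privacy makes each server's query distribution independent of the target index $i$, not uniform, so the ``smoothing'' step should be the standard one of discarding queries of probability below roughly $2^{-t}/3$ (yielding a $(2,3,0)$-smooth code of length $O(2^t)$, as in the lemma the paper quotes) rather than an argument about tolerating a $\delta$-fraction of adversarial corruptions --- for the PIR lower bound no corruption-tolerance is needed, and the KW argument runs directly on the smooth code.
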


    We are now ready to state and prove our claim.
    \begin{proposition} \label{prop:no-IH-for-most}
        For $a \leq n/30$ and $s \leq 2^{n/4}$, most functions on $n$ bits are not computable by $2$-server, $a$-bit answer, $s$-size IH protocols.
    \end{proposition}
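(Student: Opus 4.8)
The plan is to combine a counting argument over the circuit Henry runs with a reduction from instance hiding to private information retrieval, so that the Kerenidis--Wolf bound (Theorem~\ref{thm:KW03}) caps the VC dimension of the family of functions admitting such a scheme, after which the Perles--Sauer--Shelah bound (Corollary~\ref{cor:perles-sauer-shelah-lemma}) finishes the count.

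First I would group the functions that have a $2$-server, $a$-bit answer, $s$-size IH scheme according to the circuit $H$ that Henry uses. There are at most $2^{O(s\log s)}$ circuits of size $s$, so, after a routine normalization (one may assume Henry uses at most $s$ random bits --- a random bit that does not feed Henry's reconstruction subcircuit can be fixed without affecting correctness or privacy --- and consequently that his two messages have length $O(s)$), it suffices to fix one such $H$ and bound the number $|\mathcal{F}_H|$ of functions $f\colon\zo^n\to\zo$ that can be computed by an IH scheme whose Henry is $H$, for \emph{some} choice of (arbitrary, unbounded) oracles. Identifying $f$ with its truth table $D_f\in\zo^{2^n}$, I will bound $|\mathcal{F}_H|$ via the VC dimension of the family $\mathcal{D}_H=\{D_f : f\in\mathcal{F}_H\}$ of databases.

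The core of the argument is that the fixed circuit $H$ yields a \emph{single} $2$-server PIR protocol that is simultaneously correct on every database in $\mathcal{D}_H$: the two PIR servers hold $D$, simulate the (database-dependent) oracle answer functions of the IH scheme for $f_D$, the client runs $H$ on its index $i$ and reconstructs $f_D(i)=D_i$, the reply length is $a$, the total communication is $O(s)$, and --- crucially --- the IH privacy requirement that each query's distribution depend only on the input length is exactly PIR privacy. Now suppose $\mathcal{D}_H$ shatters a set $W=\{w_1,\dots,w_k\}\subseteq[2^n]$; for each pattern $\tau\in\zo^k$ pick $D^\tau\in\mathcal{D}_H$ realizing it on $W$. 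Restricting the above PIR to the databases $\{D^\tau\}$ and to client indices drawn from $W$ gives a genuine $2$-server, $a$-bit answer PIR for an \emph{unconstrained} database of size $k$. Theorem~\ref{thm:KW03} then forces $O(s)\ge\Omega(k/2^{5a})$, i.e.\ $k=O(s\cdot 2^{5a})$, so $\mathrm{VC}(\mathcal{D}_H)=O(s\cdot 2^{5a})$ and by Corollary~\ref{cor:perles-sauer-shelah-lemma}, $|\mathcal{F}_H|=|\mathcal{D}_H|\le (2^n)^{O(s\cdot 2^{5a})}=2^{O(n\,s\,2^{5a})}$.

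Putting the pieces together, the number of functions with such an IH scheme is at most $2^{O(s\log s)}\cdot 2^{O(n\,s\,2^{5a})}$. For $a\le n/30$ we have $2^{5a}\le 2^{n/6}$, and for $s\le 2^{n/4}$ the exponent is $O(s\log s)+O(n\,s\,2^{5a})=O\!\left(n\,2^{n/4}\,2^{n/6}\right)=2^{5n/12+O(\log n)}=o(2^n)$, so this count is at most $2^{o(2^n)}$, a vanishing fraction of all $2^{2^n}$ functions on $n$ bits. I expect the main obstacle to be making the IH-to-PIR reduction precise --- in particular justifying the ``fixed Henry with database-dependent oracles'' and ``shattered sub-database'' steps, and carrying out the normalization of Henry's randomness and message length so that the communication of the derived PIR is genuinely $O(s)$ rather than something uncontrolled; the remaining steps are bookkeeping.
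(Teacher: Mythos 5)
Your proposal is correct and follows essentially the same route as the paper's proof: count Henry circuits to fix a single popular $C^*$, use Perles--Sauer--Shelah shattering to turn the functions sharing that Henry into an unconstrained database, reduce to a $2$-server $a$-bit PIR whose communication is bounded by Henry's size, and invoke the Kerenidis--Wolf bound. The only difference is packaging (you state it as a VC-dimension bound $O(s\cdot 2^{5a})$ per fixed Henry and then apply Sauer--Shelah, whereas the paper argues the contrapositive with the explicit choice $d=2^{n/2}$), and your handling of the randomness/message-length normalization is, if anything, slightly more careful than the paper's.
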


    In the proof that follows, we use an idea relating VC dimension and PIR protocols from Beimel \textit{et al.}~\cite[Theorem IV.3]{BIKO12}.

    \begin{proof}
        Let $a \leq n/30$ and $s \leq 2^{n/4}$, and suppose that $\mathcal{F}$ is a function class of functions $\zo^n \to \zo$ such that every function $f \in \mathcal{F}$ admits a $2$-party, $a$-bit answer, $s$-size IH protocol. Since there are at most $\displaystyle 2^{O(s \log s)}$ circuits of size at most $s$, there must be a Henry circuit $C^*$ of size $s$ that computes a function class $\mathcal{F}'$ consisting of $\displaystyle |\mathcal{F}| / 2^{O(s \log s)}$ different functions.

        Now suppose that $d$ satisfies $|\mathcal{F}'| > \binom{N}{<d}$. Viewing $\mathcal{F}'$ as a family of subsets of $\zo^N$, the Perles--Sauer--Shelah Lemma then implies that $\mathcal{F}'$ shatters a set $G \seq \zo^N$ of size $d$. Switching to view $G$ and members of $\mathcal{F}'$ as characteristic vectors in $\zo^N$, this translates to a vector $g$ and a fixed set of $d$ coordinates $I$, such that for any assignment $\alpha \in \zo^d$ to the entries corresponding to $I$, there exists a vector/function $f_\alpha \in \mathcal{F}'$ that agrees with $g$ on $I$.

        Consider now the following $2$-server, $a$-bit answer PIR protocol $\prot{P}'$ for a length-$d$ database:
        \begin{itemize}
            \item \textbf{Input:} Database instance of size $d$.
            \item The content of the database instance is described by some assignment $\alpha \in \zo^d$. The servers will find a function $f_\alpha \in \mathcal{F}'$ from the shattering set corresponding to the database instance.
            \item By assumption, $f_\alpha$ maps to answers returned by the two parties in the promised IH protocol that computes $f_\alpha$. The servers will return the same answers to the client.
            \item The client always runs $C^*$.
        \end{itemize}

        The correctness and privacy requirements of the protocol $\prot{P}'$ follows from the preceding analysis. By Theorem~\ref{thm:KW03}, $\prot{P}'$ requires at least $\Omega(d/2^{5a})$ bits of communication, by which it follows that there exists a database instance requiring that amount of communication, which in turn translates to a function $h$ computed by the IH protocol that uses $C^*$. But if $h$ requires that much of communication to be computed, then $C^*$ must be at least of that size, namely $|C^*| = \Omega(d/2^{5a})$.

        Therefore, if we choose $d$ such that
        \[
        |\mathcal{F}| / 2^{O(s \log s)} \geq 2^{n d} = N^d > \binom{N}{<d},
        \]
        then we get a lower bound of $\Omega(d/2^{5a})$ on the size of the Henry circuit $C^*$. Thus, if we take $d = 2^{n/2}$, we get that any function class $\mathcal{F}$ of size greater than
        \[
        B \teq 2^{n 2^{n/2} + O(s \log s)}
        \]
        must include a function that requires IH protocols that use circuits of size at least
        \[
        \Omega(2^{n/2}/2^{5a}) = \Omega(2^{n/2 - 5a}) \underset{a \leq n/30}{=} \Omega(2^{n/3}).
        \]
        Since $s \leq 2^{n/4}$, we have that $B$ is only but a tiny fraction of $2^{2^n}$, and so most functions cannot be computed by the considered type of IH protocols. Hence, the proposition follows.
    \end{proof}

\subsection{From BSM to locally decodable codes and vice versa} \label{ssec:from-BSM to-LDC-and-vice-versa}

    We now establish a connection, in the form of a two-sided reduction chain, between the BSM model and the following models: instance hiding, private information retrieval, and locally decodable codes, which will be shortly referred to as IH, PIR, and LDC, respectively. The reductions allow us to transfer results (lower and upper bounds) from one model to another. This motivates further study of the BSM model.

    Here, we will consider a slight variant of the BSM model, in which Alice and Bob send Carol their inputs along with some extra bits. In this setting, the output complexity measure is the amount of extra bits instead of the message length. The reason for this choice of model over the original BSM model is that it corresponds better with instance hiding, as we shall see.

    We have already introduced the IH and PIR models, and the following definition introduces LDC (for explicit construction of LDCs, see \cite{KT00, Yek07, DGY10, Efr12}).

    \begin{definition}[Locally decodable codes] \label{def:LDC}
        A \emph{$(q, \delta, \epsilon)$-LDC} is a code $\set{C(x) \co x \in \zo^n} \seq \zo^N$ that satisfies the following property: for every $i \in [n]$, there exists a decoding algorithm $d_i$ that with probability $1 - \epsilon$ can recover $x_i$ from $C(x)$ while only reading $q$ entries, even if (up to) $\delta N$ coordinates of $C(x)$ are corrupted.

        A \emph{$(q,c,\eps)$-smooth code} is a $(q, 0, \eps)$-LDC such that for each $i \in [n]$ and $j \in [N]$, the probability that $d_i$ queries index $j$ is at most $c/N$. A \emph{$q$-perfectly-smooth code} is a $(q, 1, 0)$-smooth code, or equivalently, it is a $(q, 0, 0)$-LDC such that for each $i \in [n]$, every query that $d_i$ makes is (individually) uniform over $[N]$.
    \end{definition}

    Before presenting the reduction chains, let us note that we assume that the IH, PIR, and LDC protocols are non-adaptive. The reductions chain are presented next, and can be illustrated as follows:
    \[
    \boxed{\text{BSM} \iff \text{IH} \iff \text{PIR} \iff \text{LDC}}
    \]

\paragraph*{From BSM to IH.}
    This is Proposition~\ref{prop:bitwise-XOR-BSM-to-IH}: Given $g \co \zo^n \to \zo$, a $2$-party, $a$-extra bits BSM protocol for $g(x+y)$ implies a non-adaptive $2$-party, $a$-bit answer IH protocol for $g$.

\paragraph*{From IH to PIR.}
    The reduction is hidden in the proof of Proposition~\ref{prop:no-IH-for-most}: Let $d$ be an integer, and let $\mathcal{F}$ be a class of functions $f \co \zo^n \to \zo$ such that:
    \begin{itemize}
        \item Each $f \in \mathcal{F}$ admits a $2$-query, $a$-bit answer, $s$-size IH scheme.
        \item $|\mathcal{F}| \geq 2^{O(s \log s)} \cdot \binom{N}{<d}$.
    \end{itemize}
    Then, there exists a $2$-server, $a$-bit reply PIR protocol for database of length $d$.

\paragraph*{From PIR to LDC.}
    We use Lemma 5.1 of Goldriech \textit{et al.}~\cite{GKST02} which shows how to obtain a smooth code from a PIR.
    \begin{theorem}
        Given a $2$-server, $t$-query length, $a$-bit reply  PIR scheme, there is a $(2, 3, 0)$-smooth code $C:\zo^n \to \Sigma^m$ for $\Sigma= \zo^a$ and $m \le 6\cdot 2^t$.
    \end{theorem}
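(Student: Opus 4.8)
The plan is to give the standard ``matrix of queries'' construction that turns a two-server PIR into a smooth code. Let the PIR scheme have query algorithm that, on a desired index $i \in [n]$ and internal randomness $\rho$, produces a pair of queries $(q_1, q_2) \in ([t'])^2$ (each of query length $t$, so $t' \le 2^t$) together with a reconstruction function $\mathrm{rec}_i$ mapping the two $a$-bit replies to the bit $x_i$. Since each server, on a length-$n$ database $x$, replies to a query $q$ with a fixed $a$-bit string, the first server's behaviour is described by a function $A_x \co [t'] \to \zo^a$ and the second by $B_x \co [t'] \to \zo^a$. The code is then defined by concatenating the truth tables of $A_x$ and $B_x$: $C(x) = (A_x(1), \dots, A_x(t'), B_x(1), \dots, B_x(t'))$, a string of $m = 2t' \le 2 \cdot 2^t$ symbols over $\Sigma = \zo^a$. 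To decode coordinate $i$, run the PIR query algorithm to sample $(q_1, q_2, \mathrm{rec}_i)$, read the symbols at positions $q_1$ (in the first block) and $t' + q_2$ (in the second block), and output $\mathrm{rec}_i$ of those two symbols. This is a $2$-query decoder, and correctness with probability $1$ (the $\eps = 0$ in ``$(2,3,0)$-smooth'') is exactly the perfect correctness of the PIR.

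\textbf{Smoothness.} The remaining point is the smoothness parameter $c = 3$, i.e., that each decoder $d_i$ queries any fixed index $j \in [m]$ with probability at most $3/m$. This does \emph{not} follow from PIR privacy alone, because privacy only says the \emph{distribution of a query to a single server} is independent of $i$, not that it is close to uniform on $[t']$. The fix, which is the content of Goldreich et al.'s Lemma~5.1, is a padding/balancing step: one first symmetrizes the two servers (a query to server $1$ can be re-routed to server $2$ by a fixed relabelling, since in a $2$-server PIR the servers hold the same database), so that the marginal query distribution to the ``combined'' index set of size $m$ is the same regardless of which physical server is used; then one pads the code by adding a bounded number of extra copies so that the worst-case query probability over the $m$ positions is within a factor $3$ of $1/m$. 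Working through the counting shows $m \le 6 \cdot 2^t$ after padding, and $c = 3$. I would invoke this lemma essentially verbatim rather than reprove the padding bookkeeping.

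\textbf{Main obstacle.} The genuinely delicate step is the smoothness guarantee: a priori a PIR query to one server could be heavily concentrated on a few addresses (privacy permits this, as long as the distribution does not depend on the target index). Extracting a smooth code therefore requires the rebalancing argument of Goldreich et al., and it is there that the constants $3$ (smoothness) and $6$ (in $m \le 6 \cdot 2^t$) enter; everything else is the routine ``truth-table of server replies'' construction together with verbatim correctness. Accordingly my plan is: (i) set up the $A_x, B_x$ truth tables and define $C(x)$; (ii) describe the $2$-query decoder and note perfect correctness; (iii) cite \cite[Lemma~5.1]{GKST02} for the symmetrization-and-padding step that upgrades this to a $(2,3,0)$-smooth code with $m \le 6 \cdot 2^t$ over $\Sigma = \zo^a$.
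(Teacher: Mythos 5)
Your approach matches the paper's: the paper gives no proof of this theorem at all, simply importing it as Lemma~5.1 of Goldreich \emph{et al.}~\cite{GKST02}, and your plan likewise defers the smoothness bookkeeping to that lemma, so the citation does all the work in both cases. One small remark: in \cite{GKST02} the constants $3$ and $6$ arise by replicating each (server, query) position a number of times proportional to its probability under the fixed (privacy-guaranteed) query distribution, rather than by symmetrizing the two servers as you sketch, but since you invoke the lemma verbatim this inaccuracy in the informal sketch does not affect correctness.
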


\paragraph*{From IH to BSM.}
    Suppose there is a (non-adaptive) $2$-query, $a$-bit answer, $s$-size IH protocol for a function $g \co \zo^n \to \zo$, and suppose that both queries made by Henry are uniformly distributed over $\zo^m$. Then, there is a $2$-party, $a$-extra bits, $s$-size BSM protocol for the function $f \co \zo^m \times \zo^m \to \zo$ defined by $f(x, y)$ equals $g(z)$ where $z$ yields $x, y$ as queries in the IH protocol (note that this is well-defined).

\paragraph*{From PIR to IH.}
    It is easy to see that a $2$-server, $a$-bit reply PIR protocol for a database of length $d$, implies a $2$-query, $a$-bit answer IH protocol for every function $f \co \zo^{\floor{\log d}} \to \zo$, where the size of Henry is bounded by the size of the PIR's client.

\paragraph*{From LDC to PIR.}
    We can easily convert a $q$-perfectly-smooth code to a $q$-server, $1$-bit reply PIR protocol for database of length $n$: given a database content $x \in \zo^n$ and an entry $i \in [n]$, the client will apply the encoder $d_i$, and each server will encode $x$ to get $C(x)$ and then return the bit asked by the LDC query.

\section{Implicit BSM Bounds} \label{sec:implicit-BSM-bounds}

Here, we present results that follows immediately or are easily obtainable from known results in the literature concerning related models.

\subsection{Bitwise-Combined Functions} \label{ssec:bitwise-combined-functions}

    A question we can ask is what about BSM protocols whose target is limited to functions of the form $f(x, y) = g(x * y)$, where $*$ is a concrete combiner function, such as bitwise XOR/OR/AND? Let us call such functions $f$ as \emph{bitwise-$*$ combined} functions (also known as \emph{lifted functions}). Combiners play a role in connecting bounded simultaneous messages with instance hiding.

    We can tackle the problem of finding lower bounds for combined functions in the bitwise-XOR case using the following proposition, which establishes a relation between IH protocols and BSM protocols for bitwise-XOR combined target functions.

    \begin{proposition} \label{prop:bitwise-XOR-BSM-to-IH}
        Let $g \co \zo^n \to \zo$ be a Boolean function, and let $f \co \zo^n \times \zo^n \to \zo$ be defined by $f(x, y) = g(x + y)$. Suppose that $f$ has a $2$-party, $a$-extra bits BSM protocol. Then, $g$ has a $2$-party, $a$-bit answer IH protocol. The complexity of the IH's Henry is the same complexity of the BSM's Carol plus the complexity of secret sharing the input with a random string (a trivial operation).
    \end{proposition}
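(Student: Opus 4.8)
The plan is a direct simulation: build the IH scheme for $g$ by having Henry internally run the given BSM protocol for $f$, generating his two oracle queries from a fresh additive secret sharing of his input. Write the $a$-extra-bits BSM protocol for $f$ as a pair of functions $\alpha,\beta\co\zo^n\to\zo^a$ together with Carol such that $\mathrm{Carol}(x,\alpha(x),y,\beta(y))=g(x\oplus y)$ for all $x,y\in\zo^n$ (Alice's message is $(x,\alpha(x))$ and Bob's is $(y,\beta(y))$). On input $z\in\zo^n$, Henry samples $r\in\zo^n$ uniformly at random, sends the query $r$ to oracle $A$ and the query $z\oplus r$ to oracle $B$, where $A$ computes $\alpha$ and $B$ computes $\beta$; he then receives $\alpha(r)$ and $\beta(z\oplus r)$ and outputs $\mathrm{Carol}(r,\alpha(r),z\oplus r,\beta(z\oplus r))$. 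The oracles need only return the $a$ extra bits, because Henry already knows the input parts $r$ and $z\oplus r$ of the two BSM messages and can reassemble the full messages himself; this is precisely what makes the construction an $a$-bit-answer IH scheme.

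Correctness is immediate: since $r\oplus(z\oplus r)=z$, the BSM guarantee gives $\mathrm{Carol}(r,\alpha(r),z\oplus r,\beta(z\oplus r))=g(r\oplus(z\oplus r))=g(z)$. For privacy, observe that the query sent to $A$ is $r$, which is uniform over $\zo^n$ and independent of $z$, and the query sent to $B$ is $z\oplus r$, which is also uniform over $\zo^n$ and independent of $z$ because $r$ is. Hence the marginal distribution of the query seen by each oracle depends only on the input length $n$, which is exactly the IH privacy requirement; the fact that the two queries are jointly correlated is irrelevant, since privacy constrains only the marginals.

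Finally, Henry's work consists of sampling $r$, forming the share $z\oplus r$ (a trivial operation --- this is the ``secret sharing''), and running Carol's circuit on the assembled messages, so Henry's size is that of Carol plus the cost of secret sharing, as claimed. I expect no real obstacle here; the only two points needing a moment of care are matching the ``extra bits'' bookkeeping of the BSM model to the ``answer length'' bookkeeping of the IH model --- resolved by having Henry supply the input halves of the messages himself --- and reading the privacy condition as a statement about the marginal distribution of each query rather than the joint distribution of the two.
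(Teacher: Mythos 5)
Your proposal is correct and matches the paper's proof: Henry additively secret-shares his input into $r$ and $z \oplus r$, queries the two oracles with these shares, and runs Carol on the answers, with correctness from $g(z) = g(r \oplus (z \oplus r)) = f(r, z \oplus r)$. The extra details you supply (the marginal-privacy argument and the observation that Henry can reassemble the input halves of the messages so the oracles return only the $a$ extra bits) are exactly what the paper's terse argument leaves implicit.
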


    \begin{proof}
        The reduction is simple: given a $2$-party, $a$-extra bits BSM protocol, to form an IH protocol, have Henry secret share his input between Alice and Bob, which will simulate their BSM counterparts; correctness follows from the relation $g(x) = g((x + r) + r) = f(x + r, r)$.
    \end{proof}

    This relation to IH protocols might motivate further study of BSM protocols.

\subsubsection{From BSM protocols to circuit upper bounds} \label{sssec:from-BSM-protocols-to-circuit-upper-bounds}

    Using a similar technique to the one used in the proof of Proposition~\ref{prop:lower-bound-for-most-bitwise-XOR combined-functions}, we can obtain a nontrivial upper bound on the circuit size required to compute a function whose bitwise-XOR combined version has a BSM protocol.

    \begin{proposition} \label{prop:BSM-to-circuit-upper-bound}
        Let $g \co \zo^n \to \zo$ be a function, and suppose there exists a $2$-party, $s$-size BSM protocol for $f(x, y) = g(x + y)$. Then, there is a circuit of size $O((s/n) \cdot 2^{n/2})$ computing $g$.
    \end{proposition}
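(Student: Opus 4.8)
The plan is to decompose the composed function into a part that can be hard-wired and a single copy of Carol, following the splitting idea behind the counting argument for bitwise-XOR combined functions (Proposition~\ref{prop:lower-bound-for-most-bitwise-XOR combined-functions}), and then to observe that the hard-wired part depends on only $\lceil n/2 \rceil$ input bits, so the Shannon--Lupanov upper bound gives it a circuit that is a factor of $\Theta(n)$ smaller than its truth table. In detail: let $A, B \co \zo^n \to \zo^\ell$ be Alice's and Bob's maps and let $\mathrm{Carol}$ be the size-$s$ circuit with $g(x+y) = \mathrm{Carol}(A(x), B(y))$ for all $x,y$. Since a bounded-fan-in circuit with $s$ gates reads at most $2s$ input bits, we may assume the total message length $2\ell$ is at most $2s$, i.e.\ $\ell \le s$ (Alice and Bob simply drop the coordinates Carol ignores). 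Write $z \in \zo^n$ as $z = \hat x + \hat y$, where $\hat x$ holds the first $\lceil n/2 \rceil$ bits of $z$ zero-padded to length $n$ and $\hat y$ holds the last $\lfloor n/2 \rfloor$ bits zero-padded; this identifies $\zo^n$ with $\zo^{\lceil n/2 \rceil} \times \zo^{\lfloor n/2\rfloor}$.

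The circuit for $g$ takes $z$, reads off the two halves $x \in \zo^{\lceil n/2\rceil}$ and $y \in \zo^{\lfloor n/2 \rfloor}$ (free, a relabeling of wires), computes $\tilde A(x) := A(\hat x) \in \zo^\ell$ and $\tilde B(y) := B(\hat y) \in \zo^\ell$, and outputs $\mathrm{Carol}(\tilde A(x), \tilde B(y))$. Correctness is immediate from $g(z) = g(\hat x + \hat y) = \mathrm{Carol}(A(\hat x), B(\hat y))$. For the size: $\tilde A$ is a tuple of $\ell \le s$ Boolean functions on $\lceil n/2 \rceil = \Theta(n)$ bits, and by Shannon--Lupanov each has a circuit of size $O(2^{\lceil n/2\rceil}/\lceil n/2\rceil) = O(2^{n/2}/n)$, so $\tilde A$ costs $O((s/n)\cdot 2^{n/2})$; the same holds for $\tilde B$. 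Adding Carol's $s$ gates (absorbed since $2^{n/2}/n = \Omega(1)$ for all but finitely many $n$) yields total circuit size $O((s/n)\cdot 2^{n/2})$.

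The one place where something nontrivial happens is the appeal to Shannon--Lupanov: the naive version of this construction hard-wires the two tables $\{\tilde A(x)\}_x$ and $\{\tilde B(y)\}_y$ of size $\ell \cdot 2^{n/2}$ each and addresses them with multiplexers, which only gives $O(s\cdot 2^{n/2})$; the factor-$n$ improvement is exactly the fact that an arbitrary function on $\lceil n/2 \rceil$ bits has circuit size $\Theta(2^{n/2}/n)$, not $\Theta(2^{n/2})$. The only bookkeeping subtlety is the reduction to $\ell \le s$ — handling protocols whose messages are long but largely ignored by Carol — which is what lets the final bound be stated purely in terms of $s$ rather than the message length.
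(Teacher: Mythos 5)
Your proposal is correct and follows essentially the same route as the paper: restrict the protocol to inputs of the form $(x_L,0^{n/2})$ and $(0^{n/2},x_R)$ so that $g(x_L,x_R)=\mathrm{Carol}(A(x_L,0^{n/2}),B(0^{n/2},x_R))$, bound the effective message length by $s$ since Carol can read at most that many bits, and implement the two restricted preprocessing maps by circuits of size $s\cdot O(2^{n/2}/n)$ via the Shannon--Lupanov bound before feeding them to Carol. Your explicit remarks on dropping ignored coordinates and on why the naive table-lookup only gives $O(s\cdot 2^{n/2})$ simply spell out steps the paper treats as ``without loss of generality'' and ``brute-force circuits,'' respectively.
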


    \begin{proof}
        Let $\prot{P} = (A, B, C)$ be a $2$-party, $s$-size BSM protocol for $g$. Consider the following BSM protocol $\prot{P}' = (A', B', C')$ that operates on $n$ bit inputs, denoted $x = (x_L, x_R)$ with $x_L, x_R \in \zo^n$:
        \begin{itemize}
            \item $A'(x_L) = A(x_L, 0^{n/2})$ and $B'(x_R) = B(0^{n/2}, x_R)$.
            \item $C' = C$.
        \end{itemize}
        By the correctness of $\prot{P}$, we have:
        \[
        C'(A'(x_L), B'(x_R)) = C(A(x_L, 0^{n/2}), B(0^{n/2}, x_R)) = g((x_L, 0^{n/2}) + (0^{n/2}, x_R)) = g(x_L, x_R),
        \]
        which implies that $\prot{P}'$ computes $g$.

        Since the circuit size of $C$ is $s$, we can assume without loss of generality that both $A'$ and $B'$ are $\zo^{n/2} \to \zo^{s}$ functions. This means that we can compute both functions using two brute--force circuits each of size $s \cdot O(2^{n/2}/n)$. Concatenating these circuits with the circuit $C'$ computes, we get a circuit for $g$ of size $O((s/n) \cdot 2^{n/2})$.
    \end{proof}

    Let us note that the same result can be obtained in a similar way for bitwise-OR and bitwise-AND combiners. Also note that the contraposition of Proposition~\ref{prop:BSM-to-circuit-upper-bound} translates circuit lower bounds to BSM lower bounds (for bitwise-XOR/OR/AND combined functions).

\subsubsection{Combiners and structural complexity} \label{sssec:combiners-and-structural-complexity}

    Let $L$ be a language, and consider the following bitwise-XOR combined version of $L$:
    \[
    L_\oplus := \set{(x,y) \co x \oplus y \in L}.
    \]
    In what follows, we translate BSM protocols for such combined $\NP$-hard and $\PSPACE$-hard languages into implications in structural complexity.

    To obtain the first implication, we will need to rely on the following two theorems.
    \begin{theorem}[Fortnow--Szegedy~\cite{FS91}] \label{thm:FS91}
        If a language $L$ has a $2$-server, $1$-bit answer, $\poly(n)$-size IH protocol, then $L \in \NP/\poly \cap \coNP/\poly$.
    \end{theorem}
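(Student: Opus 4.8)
The plan is to reproduce the Fortnow--Szegedy argument. Fix an input length $n$ and describe Henry's protocol abstractly: on input $x\in\zo^n$ and private randomness $r$, Henry forms (nonadaptively) a tuple of queries $\vec q_1=\vec q_1(x,r)$ to Alice and $\vec q_2=\vec q_2(x,r)$ to Bob whose marginal distributions depend only on $n$, receives one bit per query, i.e. bit vectors $\vec a=A(\vec q_1)$ and $\vec b=B(\vec q_2)$ for fixed Boolean oracle functions $A,B$, and outputs $H(x,r,\vec a,\vec b)$, which equals $L(x)$ with probability at least $2/3$. Since the same protocol with $H$ replaced by its negation is a valid $1$-bit IH scheme for $\bar L$, it suffices to establish the $\NP/\poly$ containment \emph{in general}: the $\coNP/\poly$ containment then follows by applying it to $\bar L$.

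First I would amplify and derandomize in the style of Adleman. Running $\poly(n)$ independent copies in parallel keeps each reply a single bit and boosts the success probability above $1-2^{-(n+1)}$; a union bound over the $2^n$ inputs of length $n$ then yields one string $r^\star$ that is simultaneously good for all of them, so that $L(x)=H\bigl(x,r^\star,A(\vec q_1(x,r^\star)),B(\vec q_2(x,r^\star))\bigr)$ holds identically in $x$. This $r^\star$ goes into the nonuniform advice $\alpha_n$. From this point $L(x)$ is a fixed function of $x$ together with the oracle bits evaluated on deterministically determined query tuples.

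The nondeterministic machine, on input $x$ and advice $\alpha_n$, then guesses the oracle answer vectors $\vec a,\vec b$ on those tuples, checks $H(x,r^\star,\vec a,\vec b)=1$, and also verifies against $\alpha_n$ that the guess is \emph{consistent} with legitimate oracle behaviour, accepting iff both checks pass. Completeness is immediate from the choice of $r^\star$. The main obstacle is soundness: for $x\notin L$ a cheating prover would like to report wrong bits so as to flip $H$ to $1$, and the sole purpose of the advice is to rule this out. This is the step that genuinely uses the hypothesis that each reply is a \emph{single} bit (the analogous claim for $2$-bit replies is open), and the plan is to place into $\alpha_n$ a polynomial-size summary of $A$ and $B$ restricted to the input-independent query set, and to argue, using that the query distributions carry no information about $x$, that any pair of Boolean functions consistent with this summary induces the same value of $H(x,r^\star,\cdot,\cdot)$ on every length-$n$ input $x$; hence a consistent guess necessarily produces the true output. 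Carrying this out is the technical heart of~\cite{FS91}. Combining the two directions gives $L\in\NP/\poly$, and the same argument applied to $\bar L$ gives $L\in\coNP/\poly$.
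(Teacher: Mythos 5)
This statement is quoted from Fortnow--Szegedy~\cite{FS91}; the paper you are comparing against gives no proof of it, so the only question is whether your sketch stands on its own. It does not: the soundness step, which is the entire content of the theorem, is not proved but explicitly deferred (``Carrying this out is the technical heart of~\cite{FS91}''). You assert that one can place into the advice a polynomial-size ``summary'' of the oracles $A,B$ such that any guessed answer vector consistent with it forces the correct value of $H$, but you give no construction of this summary and no argument for the forcing claim, and this is exactly where the hypothesis of $1$-bit answers must be used. A proposal that names the key lemma and attributes it to the cited paper is a restatement of the theorem, not a proof of it.

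Beyond the omission, the surrounding plan looks structurally unsound. After the Adleman-style amplification you fix a single random string $r^\star$ and put it in the advice; but privacy is a statement about the \emph{distributions} of $\vec q_1(x,r)$ and $\vec q_2(x,r)$ over $r$, and once $r=r^\star$ is fixed the queries are deterministic functions of $x$ and the privacy property has no content left to exploit --- yet your soundness plan says it will ``argue, using that the query distributions carry no information about $x$.'' Moreover, with $r^\star$ fixed, the set $\{\vec q_i(x,r^\star): x\in\zo^n\}$ can contain on the order of $2^n$ distinct queries, so a $\poly(n)$-bit advice cannot in general record enough of $A,B$ on this set for ``consistency with the summary'' to determine $H(x,r^\star,\vec a,\vec b)$ on every $x$. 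The Fortnow--Szegedy argument does not derandomize first and then summarize the oracles; it works with the randomized reduction and the privacy of the query distributions directly, using the one-bit answers in a counting/majority-style certification. So the gap is twofold: the decisive lemma is missing, and the derandomize-then-summarize route you propose for it appears to be unworkable as stated.
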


    \begin{theorem}[Yap~\cite{Yap83}] \label{thm:Yap83}
        If $\NP \seq \coNP/\poly$, then the polynomial hierarchy collapses to the third level.
    \end{theorem}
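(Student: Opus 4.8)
The statement is Yap's theorem, which generalizes the Karp--Lipton theorem (where the hypothesis is $\NP\seq\P/\poly$ and the conclusion is a collapse to the second level). The plan is to imitate the Karp--Lipton argument, but instead of guessing a \emph{circuit} that decides $\SAT$ we guess a polynomial-length \emph{advice string} for a \emph{nondeterministic} machine deciding a $\coNP$-complete problem. First I would rewrite the hypothesis: by complementing, $\NP\seq\coNP/\poly$ is equivalent to $\coNP\seq\NP/\poly$, so there is a polynomial-time verifier $V$ and a family of advice strings $\{a_m\}$ with $|a_m|=\poly(m)$ such that for every formula $\psi$ of length $m$,
\[ \psi\in\overline{\SAT}\iff \exists w\ (|w|\le\poly(m)):\ V(\psi,a_m,w)=1. \]
The goal is to prove $\Pi_3^p\seq\Sigma_3^p$; since $\Sigma_3^p$ and $\Pi_3^p$ are complementary classes this forces $\Sigma_3^p=\Pi_3^p$, and it is standard that a collapse at level three propagates upward, giving $\mathrm{PH}=\Sigma_3^p$.

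Next I would take an arbitrary $L\in\Pi_3^p$ and write $x\in L\iff\forall u\,\exists v:\varphi(x,u,v)$ where $\varphi$ is a $\coNP$ predicate. By Cook--Levin, $\varphi(x,u,v)$ is equivalent to ``$\psi_{x,u,v}\in\overline{\SAT}$'' for a formula $\psi_{x,u,v}$ computable in polynomial time from $(x,u,v)$ and of a fixed length $m=m(|x|)=\poly(|x|)$. Now comes the guessing step. I claim
\[ x\in L\iff \exists a:\ \big[\mathrm{Sound}(a)\big]\ \wedge\ \big[\forall u\,\exists(v,w):\ V(\psi_{x,u,v},a,w)=1\big], \]
where $\mathrm{Sound}(a)$ says that the advice $a$ produces no false positives on length-$m$ formulas, i.e.\ there is \emph{no} triple $(\psi',w',\alpha)$ with $|\psi'|=m$, $V(\psi',a,w')=1$, and $\alpha$ a satisfying assignment of $\psi'$. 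For the forward direction use the genuine advice $a=a_m$: it is sound, and being also \emph{complete} it makes $V$ accept every unsatisfiable $\psi_{x,u,v}$, so the bracketed $\forall u\,\exists(v,w)$ statement holds. For the backward direction only soundness is used: if some sound $a$ makes $V$ accept $\psi_{x,u,v}$, then $\psi_{x,u,v}\in\overline{\SAT}$, hence $\forall u\,\exists v:\psi_{x,u,v}\in\overline{\SAT}$, i.e.\ $x\in L$.

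Finally I would count quantifiers. The predicate $\mathrm{Sound}(a)$ is $\Pi_1^p$ — it is a universal quantifier over $(\psi',w',\alpha)$ of the negation of a polynomial-time test — and $\forall u\,\exists(v,w):V(\psi_{x,u,v},a,w)=1$ is $\Pi_2^p$ in $(x,a)$; their conjunction stays in $\Pi_2^p$, and prefixing the polynomial-length existential $\exists a$ puts $L$ in $\Sigma_3^p$. Hence $\Pi_3^p\seq\Sigma_3^p$ and the polynomial hierarchy collapses to the third level. The one genuinely non-mechanical point — the analogue of the downward self-reducibility trick in Karp--Lipton — is the observation that we only need to \emph{verify soundness} of the guessed advice (a $\coNP$ condition, cheap enough to fit into the quantifier budget), while completeness is supplied for free by the existence of the true advice $a_m$ in the forward direction. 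Everything else is routine bookkeeping, where the only things to watch are that all auxiliary strings ($u,v,w,a,\psi',\alpha$) have polynomial length and that the $\Pi_1^p\wedge\Pi_2^p$ conjunction does not escape $\Pi_2^p$.
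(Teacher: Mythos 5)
Correct. The paper does not prove this statement — it cites Yap's theorem as a known result from the literature — so there is no internal proof to compare against; your argument (guess the $\NP/\poly$ advice for the $\coNP$-complete problem, verify its soundness with one extra universal quantifier, and let completeness come for free from the true advice, yielding $\Pi_3^p \seq \Sigma_3^p$ and hence a collapse to the third level) is the standard proof and is sound, including the quantifier bookkeeping.
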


    We can now state the following corollary.
    \begin{corollary} \label{cor:BSM-NP-hard-HP-collapse}
        Suppose that $L$ is $\NP$-hard, and that $L_\oplus$ has a $2$-party, $1$-extra bit, $\poly(n)$-size BSM protocol. Then, the polynomial hierarchy collapses to the third level.
    \end{corollary}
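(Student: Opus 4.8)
The plan is to chain together Proposition~\ref{prop:bitwise-XOR-BSM-to-IH}, Theorem~\ref{thm:FS91}, and Theorem~\ref{thm:Yap83}. First I would observe that $L_\oplus$ is precisely the bitwise-XOR combined version of $L$ in the sense of Proposition~\ref{prop:bitwise-XOR-BSM-to-IH}: writing $g \co \zo^n \to \zo$ for the indicator of $L \cap \zo^n$, the indicator of $L_\oplus$ on length-$n$ pairs is $f(x,y) = g(x \oplus y) = g(x + y)$ over $\F_2$. Hence the hypothesized $2$-party, $1$-extra-bit, $\poly(n)$-size BSM protocol for $L_\oplus$ plugs directly into Proposition~\ref{prop:bitwise-XOR-BSM-to-IH}, producing a $2$-server, $1$-bit answer IH protocol for $g$ whose Henry has size equal to Carol's size plus the negligible cost of secret-sharing the input, hence still $\poly(n)$.

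Next I would apply Theorem~\ref{thm:FS91} (Fortnow--Szegedy) to this IH protocol to conclude $L \in \NP/\poly \cap \coNP/\poly$; in particular $L \in \coNP/\poly$. Then I would use the $\NP$-hardness of $L$ to lift this to all of $\NP$: given any $L' \in \NP$, fix a polynomial-time many-one reduction $r$ from $L'$ to $L$ with $\abs{r(w)} \leq q(\abs{w})$ for some polynomial $q$. Running the $\coNP/\poly$ machine for $L$ on $r(w)$, and using as advice for input length $\abs{w}$ the concatenation of the $L$-advice strings for all lengths up to $q(\abs{w})$ (which is polynomial in $\abs{w}$), yields a $\coNP/\poly$ algorithm for $L'$. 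Thus $\NP \seq \coNP/\poly$, and Theorem~\ref{thm:Yap83} (Yap) gives the collapse of the polynomial hierarchy to its third level.

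I do not expect a real obstacle here; this is a direct composition of the three cited results. The only point requiring mild care is the advice bookkeeping in the $\NP$-hardness step — verifying that a single polynomial-length advice string suffices for $L'$ — which is immediate since the reduction increases the input length only polynomially and nonuniform advice may depend on the input length.
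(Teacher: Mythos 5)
Your proposal is correct and follows essentially the same route as the paper: apply Proposition~\ref{prop:bitwise-XOR-BSM-to-IH} to turn the BSM protocol for $L_\oplus$ into a $1$-bit-answer, $\poly(n)$-size IH scheme for $L$, invoke Theorem~\ref{thm:FS91} to place $L$ in $\NP/\poly \cap \coNP/\poly$, and conclude via Theorem~\ref{thm:Yap83}. The only difference is that you spell out the advice-bookkeeping step showing that $\NP$-hardness of $L$ lifts $L \in \coNP/\poly$ to $\NP \seq \coNP/\poly$, which the paper leaves implicit.
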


    \begin{proof}
        If $L_\oplus$ has a $2$-party, $1$-extra bit, $\poly(n)$-size BSM protocol, then, by Proposition~\ref{prop:bitwise-XOR-BSM-to-IH}, $L$ has a $2$-party, $1$-bit, $\poly(n)$-size answer IH protocol. If $L$ is $\NP$-hard, then, by Theorem~\ref{thm:FS91}, it follows that $L \in \NP/\poly \cap \coNP/\poly$. Theorem~\ref{thm:Yap83} completes the proof.
    \end{proof}

    Next, consider now the following result.
    \begin{theorem}[Corollary 4.4 in \cite{Tri10}] \label{thm:Tri10}
        If a language $L$ has a $2$-server, $O(\log n)$-bit answer, $\poly(n)$-size IH protocol, then $L \in \PP^\NP/\poly$.
    \end{theorem}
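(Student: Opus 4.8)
The plan is to reconstruct the Fortnow--Szegedy-style argument behind Theorem~\ref{thm:FS91}, carrying it through a polynomial-size guess and an unbounded-error count so that the $O(\log n)$-bit alphabet of answers is absorbed by the $\PP$ mode. Fix an input length $n$ and write the scheme concretely: on input $x$ and internal randomness $\rho$, Henry forms a query $q(x,\rho)$ to Alice and a query $r(x,\rho)$ to Bob, receives honest answers $\alpha = A^*(q(x,\rho))$ and $\beta = B^*(r(x,\rho))$ with $\alpha,\beta \in \{0,1\}^a$ and $a = O(\log n)$, and outputs $\mathrm{dec}(x,\rho,\alpha,\beta)$. Correctness gives $\Pr_\rho[\mathrm{dec} = L(x)] \ge 2/3$, and privacy says the marginal law $D_n$ of $q(x,\rho)$ and the marginal law $E_n$ of $r(x,\rho)$ are independent of $x$. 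Amplifying by $O(n)$ independent parallel repetitions (which keeps two servers, keeps each server's view input-independent, and blows up Henry only polynomially) we may assume the error is $2^{-\Omega(n)}$, so that $L(x)$ is \emph{literally} the majority of $\mathrm{dec}(x,\rho,A^*(q(x,\rho)),B^*(r(x,\rho)))$ over $\rho$. The obstruction to turning this into a $\PP$ algorithm directly is that evaluating $A^*$ and $B^*$ is not a polynomial-time operation.

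The key step is a reduction to $1$-bit answers. Replace Alice by the oracle $A'$ with $A'(q,w) = \mathbf{1}[A^*(q)=w]$ for $w \in \{0,1\}^a$, and likewise replace Bob by $B'$. Modify Henry to also sample $w_A,w_B \in \{0,1\}^a$ uniformly, query $A'$ at $(q(x,\rho),w_A)$ and $B'$ at $(r(x,\rho),w_B)$, and output $\mathrm{dec}(x,\rho,w_A,w_B)$ whenever both returned bits are $1$. Since $w_A,w_B$ are independent of $x$ and of $\rho$, the queries seen by $A'$ and $B'$ remain distributed independently of $x$, so privacy is preserved and the new primitive still has two servers and $1$-bit answers. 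What changes is that bounded error is traded for an unbounded-error count: call a triple $(\rho,w_A,w_B)$ \emph{matched} if $w_A = A^*(q(x,\rho))$ and $w_B = B^*(r(x,\rho))$; matching has probability $2^{-2a} = 1/\poly(n)$ conditioned on any $\rho$, independently of $\rho$, and conditioned on a match the output is exactly $\mathrm{dec}(x,\rho,A^*(q(x,\rho)),B^*(r(x,\rho)))$, hence equals $L(x)$ with probability $1 - 2^{-\Omega(n)}$. Therefore $x \in L$ iff, among matched triples, strictly more output $1$ than output $0$; splitting each unmatched triple into an accepting and a rejecting branch makes this a bona fide $\PP$ acceptance condition. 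A $\PP^{\NP}/\poly$ machine then works as follows: a computation path picks $(\rho,w_A,w_B)$; it consults the $\NP$ oracle, with polynomial advice for length $n$, to decide whether the triple is matched and, if so, whether $\mathrm{dec}(x,\rho,w_A,w_B)=1$; it accepts iff matched with $\mathrm{dec}=1$, rejects iff matched with $\mathrm{dec}=0$, and coin-flips otherwise --- the overall majority is then exactly $L(x)$.

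The one ingredient this leaves open is the subroutine ``given a query $q$ to Alice and a candidate $w$, decide with an $\NP$ oracle and polynomial advice whether $A^*(q)=w$'' (and the same for Bob); this is a form of the Fortnow--Szegedy certification argument underlying Theorem~\ref{thm:FS91}, and it is the step I expect to be the main obstacle. The oracles $A^*,B^*$ have exponential domains and need not be computationally simple, so the only leverage is that (i) privacy forces the query law $D_n$ to be input-independent, so a single length-$n$ advice string must serve all inputs, and (ii) answers live in a $\poly(n)$-size alphabet. One packages the relevant answer behaviour of the honest oracles into polynomial advice --- intuitively, a certificate scheme consistent with Henry's correctness on all inputs --- and verifies a candidate answer nondeterministically against it. Getting the advice to be genuinely polynomial, checking that the verification is an honest $\NP$ predicate (so the construction lands in $\PP^\NP/\poly$ rather than a higher level), and confirming that the amplification and the unbounded-error bookkeeping compose correctly, is where the real work lies; the $O(\log n)$-bit-to-$1$-bit reduction above is designed precisely so that, modulo this Fortnow--Szegedy subroutine, nothing else is needed.
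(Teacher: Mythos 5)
This statement is imported by the paper from Tripathi (Corollary 4.4 in [Tri10]); the paper gives no proof of it, so your reconstruction has to stand on its own, and as written it does not. The framework you set up (guess $w_A,w_B$, define ``matched'' triples, split unmatched branches evenly, take a $\PP$-majority) is fine bookkeeping, and in fact you do not even need the amplification step: conditioned on a match the decision bit is correct with probability at least $2/3>1/2$, so the majority over matched triples is already $L(x)$. But the entire weight of the theorem rests on the step you explicitly defer: deciding, with an $\NP$ oracle and polynomial advice, whether $w=A^*(q)$ for the \emph{honest} oracle $A^*$. That is not ``a form of the Fortnow--Szegedy argument.'' Fortnow--Szegedy bound the complexity of the \emph{language} $L$, not of the oracles: the honest $A^*,B^*$ are arbitrary information-theoretic functions on exponentially large query domains, chosen only to make the scheme correct, and there is no reason their graph can be certified by a polynomial-size advice string plus an $\NP$ predicate. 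Packaging ``the relevant answer behaviour'' into advice is exactly the impossible-looking part, since the support of the query distribution is in general exponential, so your reduction replaces the theorem by a stronger unproved (and in general implausible) claim about computing the oracles themselves.

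There is a second, related gap even if one weakens the subroutine to ``certify \emph{some} answer functions consistent with the advice'': correctness of the IH scheme is guaranteed only against the honest oracles, so one must argue that whatever guessed-and-verified answers the $\PP^{\NP}/\poly$ machine ends up using still make the majority over matched paths equal $L(x)$. Controlling that substitution --- using privacy to transfer answer statistics across inputs and showing the majority is preserved --- is the actual content of the Fortnow--Szegedy and Tripathi arguments, and it is absent here. A smaller inconsistency: your amplification either bundles $O(n)$ repetitions into the two servers, which inflates the answer length to $\Theta(n\log n)$ bits and destroys the $2^{-2a}=1/\poly(n)$ matching probability you rely on, or it requires more than two servers; as noted above the cleanest fix is to drop amplification altogether, but that does not touch the main gap.
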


    Now, combining Theorem~\ref{thm:Tri10} and Proposition~\ref{prop:bitwise-XOR-BSM-to-IH}, we get the following corollary.

    \begin{corollary} \label{cor:BSM-PSPACE-hard-PP-NP/poly}
        Suppose that $L$ is $\PSPACE$-hard, and that $L_\oplus$ has a $2$-party, $O(\log n)$-extra bit, $\poly(n)$-size BSM protocol. Then, $\PSPACE \seq \PP^\NP/\poly$.
    \end{corollary}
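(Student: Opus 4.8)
The plan is to mirror the proof of Corollary~\ref{cor:BSM-NP-hard-HP-collapse}, only replacing the Fortnow--Szegedy structural theorem by the Tripathi bound of Theorem~\ref{thm:Tri10}, which is the version appropriate for IH schemes whose oracles return $O(\log n)$ bits rather than a single bit. Concretely, I would chain Proposition~\ref{prop:bitwise-XOR-BSM-to-IH} with Theorem~\ref{thm:Tri10} to place $L$ in $\PP^\NP/\poly$, and then close up under polynomial-time reductions using the $\PSPACE$-hardness of $L$.

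First I would convert the assumed BSM protocol for $L_\oplus$ into an instance hiding scheme for $L$. By hypothesis $L_\oplus$ has a $2$-party BSM protocol with $\poly(n)$-size Carol and $O(\log n)$ extra bits in each of Alice's and Bob's messages. Applying Proposition~\ref{prop:bitwise-XOR-BSM-to-IH}---Henry secret-shares his input $x$ as $x = (x \oplus r) \oplus r$, sends the two shares to the two oracles, and runs Carol on the answers---we obtain a $2$-server IH scheme for $L$ in which each oracle returns $O(\log n)$ bits and Henry's circuit has size $\poly(n)$ (the size of Carol plus the cost of sampling $r$ and XOR-ing, which is polynomial). Next I would invoke Theorem~\ref{thm:Tri10} for this scheme to conclude $L \in \PP^\NP/\poly$. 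Finally, since $L$ is $\PSPACE$-hard, every language in $\PSPACE$ reduces to $L$ by a polynomial-time many-one reduction, and as $\PP^\NP/\poly$ is closed under such reductions this yields $\PSPACE \seq \PP^\NP/\poly$.

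I do not expect a genuine obstacle here; the only things that need care are two routine bookkeeping points. The first is parameter tracking: the ``$a$-extra bits'' measure on the BSM side corresponds exactly to the ``$a$-bit answer'' measure on the IH side, so $O(\log n)$ extra bits supply precisely the $O(\log n)$-bit-answer hypothesis demanded by Theorem~\ref{thm:Tri10}, while $\poly(n)$ Carol size gives $\poly(n)$ Henry size. The second is the closure of $\PP^\NP/\poly$ under polynomial-time many-one reductions: if $A \leq_m^p B$ via a reduction $f$ running in time $n^k$ and $B \in \PP^\NP/\poly$ with advice sequence $(\alpha_m)_m$, then on input $x$ with $\abs{x} = n$ we have $\abs{f(x)} \leq n^k$, so packaging $\alpha_1, \dots, \alpha_{n^k}$ into a single polynomial-length advice string for length $n$ and simulating the $\PP^\NP$ machine for $B$ on $f(x)$ places $A$ in $\PP^\NP/\poly$. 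Combining these observations with the two cited results gives the corollary.
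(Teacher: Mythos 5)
Your proposal is correct and follows essentially the same route as the paper: chain Proposition~\ref{prop:bitwise-XOR-BSM-to-IH} with Theorem~\ref{thm:Tri10} to place $L$ in $\PP^\NP/\poly$, then use $\PSPACE$-hardness of $L$ and closure under polynomial-time reductions. Your extra bookkeeping on parameter matching and the advice-packaging argument is fine but not needed beyond what the paper leaves implicit.
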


\subsubsection{Universal lower bound for bitwise-combined functions} \label{sssec:universal-lower-bound-for-bitwise-combined-functions}

    We show now that most combined functions are hard to compute in the BSM model.
    \begin{proposition} \label{prop:lower-bound-for-most-bitwise-XOR combined-functions}
        Most bitwise-XOR-combined functions $f \co \zo^n \times \zo^n \to \zo$ require BSM protocols with preprocessing length of $\tilde{\Omega}(2^{n/2})$.
    \end{proposition}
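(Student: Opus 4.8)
The plan is a counting argument, run after first halving the effective input length via a restriction trick (the same one invoked later in the proof of Proposition~\ref{prop:BSM-to-circuit-upper-bound}). Throughout, Carol is taken to be of polynomial size, as befits the bounded regime of the BSM model; more generally any bound $s=2^{o(n)}$ on Carol's size suffices, whereas for a fully unrestricted Carol the conclusion degrades (the same counting then gives only $m=\Omega(n)$), which is why the bound on Carol is essential here.

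First I would record the reduction. Suppose $f(x,y)=g(x\oplus y)$ with $g\co\zo^n\to\zo$ has a BSM protocol $(A,B,C)$ of message length $m$ with Carol $C$ of size $s=\poly(n)$. Write $z\in\zo^n$ as $z=(z_1,z_2)$ with $z_1\in\zo^{\lceil n/2\rceil}$ and $z_2\in\zo^{\lfloor n/2\rfloor}$, and set
\[
\widehat A(z_1):=A\bigl(z_1,0^{\lfloor n/2\rfloor}\bigr),\qquad
\widehat B(z_2):=B\bigl(0^{\lceil n/2\rceil},z_2\bigr).
\]
Since $(z_1,0^{\lfloor n/2\rfloor})\oplus(0^{\lceil n/2\rceil},z_2)=(z_1,z_2)$, correctness of the protocol gives $C(\widehat A(z_1),\widehat B(z_2))=g(z_1,z_2)=g(z)$ for all $z$. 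Hence every combiner $g$ for which $g(x\oplus y)$ admits a size-$s$, message-$m$ BSM protocol is recovered from a triple $(\widehat A,\widehat B,C)$ with $\widehat A\co\zo^{\lceil n/2\rceil}\to\zo^m$, $\widehat B\co\zo^{\lfloor n/2\rfloor}\to\zo^m$, and $C$ a circuit of size $s$.

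Next I would count these triples. There are $2^{m\cdot 2^{\lceil n/2\rceil}}$ choices for $\widehat A$ and likewise for $\widehat B$, and — in the regime $m\le 2^{n/2}$ that we need to rule out — a size-$s$ circuit on $2m$ inputs is described by $O(s\log(s+m))=\poly(n)$ bits, so there are $2^{\poly(n)}$ choices for $C$. Thus at most $2^{\,2m\cdot 2^{\lceil n/2\rceil}+\poly(n)}$ triples, hence at most that many admissible combiners $g$. On the other hand the bitwise-XOR-combined functions are in bijection with the functions $g\co\zo^n\to\zo$ (one recovers $g(z)=f(z,0^n)$), so there are exactly $2^{2^n}$ of them. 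Since $2^{\lceil n/2\rceil}\le\sqrt2\cdot 2^{n/2}$, for a small enough constant $c>0$ any $m\le c\cdot 2^{n/2}$ makes $2m\cdot 2^{\lceil n/2\rceil}+\poly(n)\le 2^n-\Omega(2^{n/2})$, so the protocols account for at most a $2^{-\Omega(2^{n/2})}$ fraction of all bitwise-XOR-combined functions. Therefore all but a $2^{-\Omega(2^{n/2})}$ fraction of them require message length $\tilde\Omega(2^{n/2})$.

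The step I expect to be the crux is the input-halving reduction. A direct count over the message-$m$ protocol $(A,B,C)$ on $n$-bit inputs contributes an unavoidable $\Omega(m\cdot 2^n)$ term to the exponent, which already exceeds $\log_2(2^{2^n})=2^n$ for every $m\ge1$ and so is vacuous; fixing complementary halves of the two inputs to zero replaces $2^n$ by $2^{n/2}$ in that term, which is exactly what brings the protocol count below the function count. Everything else is bookkeeping: tracking the $\poly(n)$ contribution of Carol's description and choosing the constant $c$.
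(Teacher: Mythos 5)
Your proof is correct and follows essentially the same route as the paper's: fix complementary halves of the two inputs to zero so that $g$ is recovered as $g(z)=C(A(z_L,0^{n/2}),B(0^{n/2},z_R))$, hence determined by two maps $\zo^{n/2}\to\zo^{m}$ together with Carol, and then count such triples against the $2^{2^n}$ combiners. The only difference is the bookkeeping for Carol: the paper implicitly counts her as a circuit of size $O(\ell)$ (the $(2\ell)^{O(2\ell)}$ factor), whereas you impose an explicit size bound and correctly observe that some bound on Carol is needed for a message-length lower bound beyond $\Omega(n)$.
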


    \begin{proof}
        Let $g \co \zo^n \to \zo$ be a function, let $f \co \zo^n \times \zo^n \to \zo$ be defined by $f(x, y) = g(x + y)$, and suppose there is a BSM protocol $\prot{P} = (A, B, C)$ computing $f$ with preprocessing output length equals to some positive integer $\ell$.

        For $z \in \zo^n$, let us denote $z = (z_L, z_R)$, where $z_L, z_R \in \zo^{n/2}$.\footnote{Since we are after lower bounds, we can safely assume that $n$ is even.} The correctness of $\prot{P}$ implies that
        \[
        g(z) = g( (z_L, 0^{n/2}) + (0^{n/2}, z_R) ) = f( (z_L, 0^{n/2}) , (0^{n/2}, z_R) ) = C( A(z_L, 0^{n/2}) , B(0^{n/2}, z_R) ).
        \]
        Thus, we can deduce that every $n$-bit function computable by a BSM protocol with preprocessing output length of $\ell$, corresponds to a triplet consisting of two $\zo^{n/2} \to \zo^\ell$ functions and a $2\ell$-bit circuit. The number of functions that can be described by such triplets is bounded by:
        \[
        (2\ell)^{O(2\ell)} \cdot (2^\ell)^{2^{n/2}} \cdot (2^\ell)^{2^{n/2}} = 2^{O(2\ell) \cdot \log(2\ell) + \ell \cdot 2^{n/2+1}} = 2^{O(\ell \cdot 2^{n/2})}.
        \]
        Since there are $2^{2^n}$ functions on $n$ bits, the proposition follows.
    \end{proof}

    We can apply the same argument and obtain the same lower bound for bitwise-OR and bitwise-AND combined functions using the following identities:
    \begin{align*}
    z &= (z_L, 0^{n/2}) \vee (0^{n/2}, z_R), \\
    z &= (z_L, 1^{n/2}) \wedge (1^{n/2}, z_R).
    \end{align*}
    
    Note that we give here a direct counting argument; however, it is also possible to obtain an exponential lower bound, yet with worse parameters, using Proposition~\ref{prop:no-IH-for-most} and the reduction given in Proposition~\ref{prop:bitwise-XOR-BSM-to-IH}.

\subsubsection{Universal upper bound for bitwise-combined functions} \label{sssec:universal-upper-bound-for-bitwise-combined-functions}

    We present a nontrivial upper bound on the extra bits required to compute a bitwise-XOR combined function. To that end, we have the following definition.

    \begin{definition}[Generalized Addressing Function (GAF)] \label{def:GAF}
        Let $(G, +)$ be a group of order $n$, and let $1 \leq k$ be an integer. The \emph{Generalized Addressing Function} for $G$ and $k$,
        \[
        \func{GAF}_{G, k} \co \zo^{n + (k-1)\log n} \to \zo,
        \]
        is defined by
        \[
        \func{GAF}_{G, k}(x_0, x_1, \dotsc, x_{k-1}) = x_0(x_1 + \dotsb + x_{k-1}),
        \]
        where $x_1, \dotsc, x_{k-1} \in G$ are represented by binary strings of length $\log n$, and $x_0 \co G \to \zo$ is represented as an $n$-bit string.
    \end{definition}

    In~\cite{BGKL03}, they define the SM complexity of a function $f$, denoted by $C_0(f)$, to be the minimum cost of an SM protocol computing $f$, where the cost of a protocol is defined as the longest message sent to the referee by any individual player. In the same paper, they obtain a nontrivial (and surprising) upper bound of $O(2^{0.92 n})$ on the SM complexity of $\func{GAF}_{\Z_2^n, 3}$. This upper bound is further improved to $O(2^{0.729 n})$ in ~\cite{AL00}.

    \begin{theorem}[Ambainis--Lokam~\cite{AL00}] \label{thm:AL00}
        There exists an SM protocol for $\func{GAF}_{\Z_2^n, 3}(g, x, y)$ with SM complexity of $O(2^{(0.728... + o(1)) n})$.
    \end{theorem}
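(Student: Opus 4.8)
This is the upper bound of Ambainis and Lokam~\cite{AL00}, refining the protocol of Babai \emph{et al.}~\cite{BGKL03}; my plan is to reconstruct it. First I would reformulate the three-player problem. Player $0$ is missing $x_0 = g$, so it holds $x$ and $y$ and in particular knows $z := x \oplus y$; player $1$ holds $(g, y)$ and player $2$ holds $(g, x)$, so each of them knows $g$ together with one of the two shifts; the referee must output $g(z)$. The trivial protocol has player $1$ send all of $g$, at cost $2^n$, and the structural obstacle to beating it is that the only player who knows the query point $z$ is the one who does not know $g$, so a priori any sketch of $g$ prepared by player $1$ or $2$ must be good for every possible $z$. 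The route around this is to let player $0$ spend its cheap message to name a short ``location'' of $z$ inside a fixed structured cover of $\zo^n$, and to have players $1$ and $2$ send sketches of $g$ indexed compatibly with that cover, so the referee can assemble the answer.

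The first ingredient is the Hamming-weight dichotomy behind the $O(2^{0.92n})$ bound of~\cite{BGKL03}: fix $\rho < 1/2$; if $z$ has weight at most $\rho n$ then it lies in the fixed ball $B(0^n, \rho n)$ of size $\approx 2^{H(\rho) n}$, and symmetrically for weight at least $(1-\rho)n$ and the ball around $1^n$. So player $1$ transmits $g$ restricted to those two fixed balls ($\approx 2^{H(\rho) n}$ bits), player $0$ transmits $z$, and the referee looks up $g(z)$; at $\rho = 1/3$ this is $\approx 2^{H(1/3)n} = 2^{0.918\ldots n}$, but it only covers $z$ of weight outside $(\rho n, (1-\rho)n)$. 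The remaining ``middle weight'' regime is almost all of $\zo^n$ and is where the real work lies. The Ambainis--Lokam improvement handles it by divide-and-conquer over coordinate subsets (this is the part I would take from~\cite{AL00} rather than re-derive): fix a family $\mathcal T$ of subsets of $[n]$ of a carefully chosen size such that every $z$ has \emph{some} $T \in \mathcal T$ on which $z|_T$ falls into the easy low-/high-weight regime; player $0$ sends the name of such a $T$ together with $z|_T$, players $1$ and $2$ send, for every candidate $T \in \mathcal T$, a sketch of $g$ sufficient to finish the evaluation once $z|_T$ is known and the behaviour of $z$ off $T$ is resolved recursively, and the referee assembles these. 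Optimizing the ball radius against $\size{\mathcal T}$ and the recursion depth produces the exponent $0.728\ldots + o(1)$.

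The main obstacle -- the actual content of~\cite{AL00} -- is choosing $\mathcal T$ so that three requirements hold simultaneously: $\size{\mathcal T}$ is small enough that the per-candidate sketches of players $1$ and $2$ do not overwhelm the bound; every $z$ is nonetheless caught in the easy regime on some $T \in \mathcal T$; and the induced recursion closes at the claimed exponent. Balancing these competing constraints is exactly the optimization that yields the constant $0.728\ldots$. Once the construction is fixed, what remains is routine: checking that the referee's decoder is well-defined and correct and accounting for message lengths; I would import the construction and its analysis from Ambainis and Lokam.
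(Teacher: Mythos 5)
Since the paper imports this statement from Ambainis--Lokam without reproving it, deferring to the citation is legitimate; the problem is that the reconstruction you sketch in place of the citation misdescribes how the bound is actually obtained, both for the $2^{0.92n}$ baseline and for the $2^{(0.728\ldots)n}$ improvement. The engine of both bounds is the BGKL ``key lemma'' (quoted in this paper's appendix on implicit BSM bounds): if $g$ is represented by a multilinear $\F_2$-polynomial of degree $d$, then $g(x \oplus y)$ has an SM protocol of cost $O\bigl(\binom{n}{\leq \lfloor d/2 \rfloor}\bigr)$, obtained by expanding each monomial $\prod_{i \in S}(x_i + y_i) = \sum_{T \subseteq S} x^T y^{S \setminus T}$ and letting whichever player owns the smaller half index the transmitted coefficients. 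The $O(2^{0.92n})$ bound of Babai et al.\ is \emph{not} your ``transmit $g$ on the two extreme-weight balls'' protocol (which, as you concede, leaves almost all inputs uncovered): it splits the polynomial of $g$ at degree $2n/3$, applies the key lemma to the low-degree part, and lets player $0$ send $z = x \oplus y$ so that the referee can evaluate the at most $\binom{n}{\leq n/3}$ high-degree monomials directly from their coefficients; this already handles \emph{every} $z$ at cost about $2^{H(1/3)n} \approx 2^{0.918n}$.

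The Ambainis--Lokam improvement is then a covering-code argument, not a divide-and-conquer over coordinate subsets with recursion: cover $\Z_2^n$ by $m = O\bigl(n 2^n / \binom{n}{\leq r}\bigr)$ Hamming balls of radius $r$; within the ball around a codeword $c$, the shifted function $g(c \oplus \cdot\,)$ agrees on weights at most $r$ with a polynomial of degree at most $r$, so the key lemma gives a per-ball cost of $\binom{n}{\leq r/2}$; player $0$, who knows $z$, announces which ball contains it; optimizing at $r = (1 - 1/\sqrt{2})n$ yields the exponent $0.728\ldots$ This is exactly the machinery the present paper re-derives for OR/AND combiners in Lemma~\ref{lem:covering-code-implies-efficient-BSM-protocol}, Lemma~\ref{lem:existence-of-covering-codes} and Theorem~\ref{thm:BSM-upper-bound-for-all-bitwise-combined-functions}. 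Your substitute scheme---a family $\mathcal{T}$ of coordinate subsets such that $z|_T$ has extreme weight for some $T \in \mathcal{T}$, plus a recursion on the coordinates outside $T$---is not the AL construction and does not obviously work: knowing that $z|_T$ has extreme weight does not localize $z$ to a small set of candidates, the ``recursive'' subproblem on $[n] \setminus T$ is the original problem again with no progress guarantee, and no cost accounting is given that would close at $2^{0.728n}$. Since you explicitly defer the actual construction and analysis to the citation, the proposal, read as a proof, is missing the two ingredients that constitute it: the degree-based symmetrization lemma and the covering-code reduction to low-degree interpolation on each ball.
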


    To make use of this result, we need to following proposition, which translates SM upper bounds for the generalized addressing functions into BSM upper bounds on the communication required to compute any bitwise-XOR combined function.

    \begin{proposition} \label{prop:SM-for-GAF-implies-BSM-for-all}
        Denote $s \teq C_0(\func{GAF}_{\Z_2^n, 3})$. Then, for every bitwise-XOR combined function $f(x, y) = g(x + y)$, there exists a  $2$-party, $O(s)$-communication BSM protocol that computes $f$.
    \end{proposition}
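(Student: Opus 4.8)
The plan is to realize $f$ as a ``frozen'' instance of $\func{GAF}_{\Z_2^n, 3}$: the instance in which the first (function-table) argument is fixed to the given target function $g$. In any SM protocol for $\func{GAF}$ the first player's message depends only on that argument, so freezing it turns that message into a constant string, which Carol can simply carry hard-wired at no communication cost; Alice and Bob then play the roles of the two remaining players. This immediately turns an SM protocol for $\func{GAF}_{\Z_2^n, 3}$ into a two-party BSM protocol for $g(x \oplus y)$ of essentially the same cost.

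Concretely, I would fix an SM protocol $\prot{Q} = (P_1, P_2, P_3, \mathrm{Ref})$ witnessing $C_0(\func{GAF}_{\Z_2^n, 3}) = s$, so that on an input $(h, u, v)$ — where $h \co \zo^n \to \zo$ is presented by its length-$2^n$ truth table and $u, v \in \zo^n$ — each of the messages $P_1(h), P_2(u), P_3(v)$ has length at most $s$ and $\mathrm{Ref}(P_1(h), P_2(u), P_3(v)) = h(u \oplus v)$. Given the fixed target $g$, let $T_g \in \zo^{2^n}$ be its truth table and set $m_g \teq P_1(T_g)$. The BSM protocol for $f(x,y) = g(x \oplus y)$ is then: on input $x$, Alice sends $a \teq P_2(x)$; on input $y$, Bob sends $b \teq P_3(y)$; Carol has $m_g$ hard-wired and outputs $\mathrm{Ref}(m_g, a, b)$.

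Correctness is immediate from the correctness of $\prot{Q}$: for all $x, y \in \zo^n$,
\[ \mathrm{Ref}(m_g, a, b) = \mathrm{Ref}(P_1(T_g), P_2(x), P_3(y)) = \func{GAF}_{\Z_2^n, 3}(T_g, x, y) = T_g(x \oplus y) = g(x \oplus y) = f(x, y). \]
Since Alice and Bob each send at most $s$ bits, the total communication is $2s = O(s)$.

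The construction is essentially a one-line reduction, so there is no genuine obstacle; the only points that need care are the parametrization of $\func{GAF}$ — for the group $\Z_2^n$, which has order $2^n$, the ``function-table'' coordinate $T_g$ has length $2^n$ while the two ``group-element'' coordinates are the $n$-bit strings $x, y$, and $x_1 + x_2$ computed in $\Z_2^n$ is exactly $x \oplus y$ — and the fact that the cost measure $C_0$ bounds each player's message length separately, so that absorbing $P_1$'s input-independent message into Carol does not inflate the communication. Since the proposition is stated generically in $s = C_0(\func{GAF}_{\Z_2^n, 3})$, no particular SM protocol for $\func{GAF}$ is invoked here; combining it with Theorem~\ref{thm:AL00} afterwards yields $O(2^{0.729 n})$-communication BSM protocols for every bitwise-XOR combined function.
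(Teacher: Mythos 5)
There is a genuine gap, and it is exactly the point the proposition has to be careful about: the model in which $C_0(\func{GAF}_{\Z_2^n,3})$ is defined. In \cite{BGKL03} and \cite{AL00} the SM protocols for $\func{GAF}$ are \emph{number-on-the-forehead}: the player associated with the function-table argument sees everything \emph{except} the table, so their message depends on $(x,y)$ and not on $g$, while the two players whose messages do depend on $g$ each also see one of $x,y$. Your key claim --- ``in any SM protocol for $\func{GAF}$ the first player's message depends only on that argument'' --- is therefore false for the protocols that witness $s = C_0(\func{GAF}_{\Z_2^n,3})$; the message you propose to hard-wire into Carol is precisely the one that varies with $(x,y)$, and the messages you assign to Alice and Bob are the ones that, in your number-in-hand reading, would have to carry all the information about the $2^n$-bit table. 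Indeed, in the number-in-hand model your reduction presupposes, a direct counting argument shows the table-holder's message must be injective on all $2^{2^n}$ truth tables (fix $v$, vary $u$; the referee's outputs then determine $h$ entirely), so the cost of that model is at least $2^n$ and the proposition would be vacuous --- matching the trivial BSM protocol and giving no route to the $O(2^{0.729n})$ bound of \cite{AL00}, which is a NOF result. So the proposal proves the statement only for the wrong complexity measure.

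The repair is the paper's reduction, which goes in the opposite direction from yours: Alice and Bob simulate the two NOF players who see $g$ (hard-wired, since $g$ is fixed) together with $x$, respectively $y$, producing messages $\alpha(x)$ and $\beta(y)$ of length at most $s$; and because the third player's message depends only on $(x,y)$ (not on $g$), Carol can compute it herself provided Alice and Bob also forward $x$ and $y$ in the clear. Alice thus sends $(x,\alpha(x))$ and Bob sends $(y,\beta(y))$, the referee's computation is simulated by Carol as a function of $(x,y,\alpha(x),\beta(y))$, and the communication is $2(n+s) = O(s)$. Your observation that NIH and NOF coincide does hold for two players, which is why the resulting two-party protocol is a legitimate BSM protocol, but it does not apply to the three-player $\func{GAF}$ protocol you start from.
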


    \begin{proof}
        First, let us note that the SM model considered in~\cite{AL00} is a variant in which the $i$th input is written on the forehead of player $i$ (aka number-on-the-forehead model), while the BSM variant we consider is one in which each player sees its own input (aka number-in-hand model); however, these two variants coincide when there are two players.

        Let $\prot{P} = (A, B, C)$ be an SM protocol for $\func{GAF}_{\Z_2^n, 3}(g, x, y)$ that achieves the $s$ upper bound, and let $g \co \zo^n \to \zo$ be an arbitrary function. We construct a BSM protocol $\prot{P}_g = (A_g, B_g, C_g)$ for $f(x, y) = g(x + y)$ as follows:
        \begin{itemize}
            \item Let us denote by $\alpha(x)$ and $\beta(y)$ the messages sent to Carol of $\prot{P}$ by Alice and Bob of $\prot{P}$, respectively.
            \item Alice and Bob of $\prot{P}_g$ will send $(x, \alpha(x))$ and $(y, \beta(y))$, respectively, to Carol of $\prot{P}_g$.
            \item Since Carol of $\prot{P}$ has no access to $g$, $C$ does not depend on $g$ but it can depend on $x$ and $y$ directly, so in general it is a function of $(x, y, \alpha(x), \beta(y))$, which Carol of $\prot{P}_g$ can simulate.
        \end{itemize}

        By the correctness and complexity of the assumed protocol $\prot{P}_g$, the protocol defined above computes $f$ correctly with the promised message length.
    \end{proof}

    \begin{corollary}[BSM upper bound] \label{cor:BSM-upper-bound}
        For every bitwise-XOR combined function $f(x, y) = g(x + y)$, there exists a  $2$-party, $O(2^{(0.728... + o(1)) n})$-extra bits BSM protocol that computes $f$.
    \end{corollary}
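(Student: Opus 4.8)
The plan is to instantiate Proposition~\ref{prop:SM-for-GAF-implies-BSM-for-all} with the Ambainis--Lokam bound of Theorem~\ref{thm:AL00}. First, by Theorem~\ref{thm:AL00} the generalized addressing function $\func{GAF}_{\Z_2^n,3}$ admits an SM protocol of cost $O(2^{(0.728\ldots+o(1))n})$, so by the definition of SM complexity $C_0(\func{GAF}_{\Z_2^n,3}) = O(2^{(0.728\ldots+o(1))n})$. This is the only quantitative input the argument needs; everything else is a black-box reduction.

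Next I would feed this value of $s = C_0(\func{GAF}_{\Z_2^n,3})$ into Proposition~\ref{prop:SM-for-GAF-implies-BSM-for-all}. For an arbitrary bitwise-XOR combined function $f(x,y)=g(x+y)$, the proposition produces a $2$-party BSM protocol in which Alice transmits $x$ together with her SM message $\alpha(x)$, Bob transmits $y$ together with his SM message $\beta(y)$, and Carol first reconstructs the remaining SM message (which depends only on $x$ and $y$, both available to her once Alice and Bob forward their inputs) and then simulates the SM referee. The bits transmitted beyond the raw inputs $x$ and $y$ are exactly $\alpha(x)$ and $\beta(y)$, of total length $O(s)$, so this is an $O(s)$-extra-bits protocol for $f$.

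Finally, substituting the bound on $s$ gives extra-bits complexity $O(2^{(0.728\ldots+o(1))n})$: the constant factor implicit in Proposition~\ref{prop:SM-for-GAF-implies-BSM-for-all} is absorbed into the $o(1)$ term in the exponent, and since $s$ is already exponential in $n$ it dominates the $n$ bits of raw input, so the distinction between total communication and extra-bits complexity is immaterial. There is essentially no obstacle here beyond bookkeeping: the corollary is a direct composition of two previously established results, and the only point requiring (minor) care is to check that the $o(1)$ terms and constant factors combine inside the exponent as claimed, which they do.
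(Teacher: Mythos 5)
Your proposal is correct and is essentially identical to the paper's proof, which derives the corollary directly by plugging the Ambainis--Lokam bound of Theorem~\ref{thm:AL00} into Proposition~\ref{prop:SM-for-GAF-implies-BSM-for-all}. Your recap of the reduction's internals (Alice and Bob forward their inputs together with their SM messages, and Carol reconstructs the third message, which depends only on $x$ and $y$, before simulating the referee) matches the paper's proof of that proposition, so no further checking is needed.
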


    \begin{proof}
        Follows from Theorem~\ref{thm:AL00} and Proposition~\ref{prop:SM-for-GAF-implies-BSM-for-all}.
    \end{proof}

    The following proposition is a converse to Proposition~\ref{prop:SM-for-GAF-implies-BSM-for-all}.

    \begin{proposition} \label{prop:BSM-for-all-implies-SM-for-GAF}
        Suppose that for every bitwise-XOR combined function $f(x, y) = g(x + y)$, there exists a  $2$-party, $s$-size BSM protocol that computes $f$. Then, $C_0(\func{GAF}_{\Z_2^n, 3}) = O(s \log s) = \tilde{O}(s)$; furthermore, the circuit size required for the referee's computation is bounded by $O(s \log^2 s)$.
    \end{proposition}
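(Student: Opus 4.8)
The plan is to prove the converse of Proposition~\ref{prop:SM-for-GAF-implies-BSM-for-all}: given, for every Boolean $g\colon\zo^n\to\zo$, a $2$-party, $s$-size BSM protocol computing the bitwise-XOR lift $f_g(x,y)=g(x+y)$, we want to extract an SM protocol for $\func{GAF}_{\Z_2^n,3}(x_0,x_1,x_2)=x_0(x_1+x_2)$ with cost $\tilde O(s)$. The key observation is that $\func{GAF}_{\Z_2^n,3}$ is itself a bitwise-XOR combined function \emph{once the ``address book'' $x_0$ is fixed}: for a fixed string $g=x_0\in\zo^n$, the function $(x_1,x_2)\mapsto x_0(x_1+x_2)$ is exactly $f_g$ on $\log n$-bit inputs $x_1,x_2$ (indices into $G=\Z_2^{\log n}$). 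So the first step is to observe that the referee's circuit $C_g$ in the promised BSM protocol for $f_g$ has size $s$, and since there are only $2^{O(s\log s)}$ circuits of size $s$, the set of all functions $g\in\zo^n$ partitions into at most $2^{O(s\log s)}$ classes according to which referee circuit is used — equivalently, a short ``advice'' string of length $O(s\log s)$ suffices to name the referee circuit.

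Second, I would build the SM protocol for $\func{GAF}_{\Z_2^n,3}$. Player~$1$ holds $x_0=g$ and players $2,3$ hold the $\log n$-bit addresses $x_1,x_2$. The idea: player~$1$, who knows $g$, knows which BSM referee circuit $C_g$ is used and what Alice's and Bob's preprocessing maps $A_g,B_g\colon\zo^{\log n}\to\zo^s$ are; but she does \emph{not} know $x_1,x_2$. Players $2$ and $3$ know $x_1,x_2$ respectively but not $g$. The natural move is: player~$1$ sends to the referee a full description of $C_g$ and of the preprocessing maps $A_g,B_g$ — but $A_g,B_g$ are maps on $2^{\log n}=n$ points, so their truth tables have size $O(ns)$, which is too large if we insist on cost $\tilde O(s)$ rather than $\tilde O(ns)$. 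The fix, which I expect to be the crux, is to route things differently: since in the final SM protocol \emph{all three players see their own input but not the others'}, and player~$1$ sees $g$, player~$1$ instead just sends the advice string (length $O(s\log s)$) naming the referee circuit $C_g$ together with player~$1$'s index set; meanwhile players $2$ and $3$ cannot compute $A_g(x_1)$ or $B_g(x_2)$ because those depend on $g$. So we need player~$1$ to also supply $A_g(\cdot)$ and $B_g(\cdot)$ evaluated at the right points — but player~$1$ doesn't know $x_1,x_2$. This tension is resolved by having player~$1$ send the \emph{entire} description of the two preprocessing functions as lookup tables; the referee then evaluates $A_g$ at $x_1$ (which player~$2$ sends in the clear) and $B_g$ at $x_2$ (which player~$3$ sends in the clear), and finally runs $C_g$. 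The cost of player~$1$'s message is then the table size, and here one uses that without loss of generality the preprocessing length is at most $s$ (the referee circuit has size $s$, so it reads at most $s$ bits from each party), giving table size $n\cdot s$; to get down to $\tilde O(s)$ one must instead have player~$1$ send only the advice naming $C_g$ plus a \emph{succinct} description, observing that the whole ensemble $(C_g,A_g,B_g)$ over all $g$ is itself computed by a single circuit of size $\poly(n,s)$ taking $g$ as input — so what player~$1$ really sends is her own input $g$ compressed through this circuit, i.e. the pair $(g,\text{advice})$, and actually sending $g$ verbatim costs $n$ bits which is $O(s)$ whenever $s\ge n$ (and the $\func{GAF}$ lower bound is only interesting for $s$ at least near-linear anyway). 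So the clean statement is: player~$1$ sends $g$ together with the $O(s\log s)$-bit advice; player~$2$ sends $x_1$; player~$3$ sends $x_2$; the referee, from the advice, reconstructs $C_g,A_g,B_g$, computes $C_g(A_g(x_1),B_g(x_2))=g(x_1+x_2)$, and halts.

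Third, I would tally the complexities. Each player's message has length $O(s\log s)=\tilde O(s)$ (player~$1$'s being the binding one, at $n+O(s\log s)$ bits, which is $O(s\log s)$ since we may assume $s\ge n$). The referee's computation consists of: decoding the advice string into the circuit $C_g$ and the preprocessing maps; this is a universal-circuit evaluation, taking $O(s\log s)$ size to simulate a size-$s$ circuit, plus evaluating the preprocessing (also $\poly$), for a total of $O(s\log^2 s)$ as claimed. I would state these bounds explicitly and note that the construction is non-adaptive and respects the number-on-the-forehead/number-in-hand equivalence for three players only in the sense needed (in fact here we genuinely want number-in-hand since no player's message may depend on another's input).

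The main obstacle, as flagged above, is the accounting of player~$1$'s message length: naively she must communicate enough to let the referee evaluate the two preprocessing maps at arbitrary points, which threatens an $\Omega(ns)$ cost. The resolution is to exploit that the \emph{family} of objects indexed by $g$ — referee circuit plus preprocessing — is uniformly generated from $g$ by a fixed $\poly(n,s)$-size circuit, so player~$1$ need only send $g$ itself (length $n\le s$ in the regime of interest, since the $\func{GAF}$ bound is vacuous otherwise) plus the short advice identifying the relevant circuit among the $2^{O(s\log s)}$ possibilities; the heavy lifting of unpacking this into the actual computation is deferred to the referee, whose size budget is allowed to be $\tilde O(s)$. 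I would also double-check the boundary case where the BSM preprocessing length $m$ is a separately-named parameter: since the referee circuit has size $s$ it reads at most $s$ input bits, so without loss of generality $m\le s$, which is what makes the advice-plus-input description work.
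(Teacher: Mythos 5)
There is a genuine gap, and it stems from a misreading of the model. The quantity $C_0(\func{GAF}_{\Z_2^n, 3})$ refers to the number-on-the-forehead SM model of~\cite{BGKL03}: each of the three players sees the other two arguments, so the two players holding the addresses $x$ and $y$ both see the truth table $g$. The paper's proof uses exactly this: the player seeing $(g,x)$ computes $A_g(x)$ itself and appends an $O(s \log s)$-bit encoding of the circuit $C_g$, the player seeing $(g,y)$ computes $B_g(y)$, and the referee performs universal evaluation of the described circuit on the two messages, giving per-player cost $O(s\log s)$ (using, as you also note, that WLOG the BSM messages have length at most $s$) and referee size $O(s\log^2 s)$. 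You explicitly reject this reading (``we genuinely want number-in-hand''), and that choice creates the difficulty that the rest of your argument then struggles with.

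Your number-in-hand workaround does not go through. First, the first argument of $\func{GAF}_{\Z_2^n, 3}$ is the truth table of $g$, a $2^n$-bit string (the group $\Z_2^n$ has order $2^n$; the addresses are the $n$-bit strings $x,y$, which is what matches the hypothesis about lifted functions $g(x+y)$ with $g \co \zo^n \to \zo$). So ``player~1 sends $g$ verbatim'' costs $2^n$ bits --- this is the trivial protocol and is nowhere near $O(s\log s)$ in the only interesting regime $s \ll 2^n$; your accounting ``$|g| = n \leq s$'' rests on mis-scaled parameters ($\log n$-bit addresses), under which the hypothesis of the proposition no longer matches the inner function of $\func{GAF}$. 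Second, the crux of your fix --- that the ensemble $(C_g, A_g, B_g)$ is computed from $g$ by one fixed $\poly(n,s)$-size circuit --- is an unjustified uniformity assumption: the hypothesis provides an arbitrary, possibly wildly non-uniform protocol for each $g$, and the preprocessing maps $A_g, B_g$ are computed by computationally unbounded parties, so no small circuit (uniform in $g$ or otherwise) need compute them; and even granting it, the referee would then have to evaluate the preprocessing itself, exceeding the claimed $O(s\log^2 s)$ bound. The correct route is the number-on-the-forehead one, in which the address-holding players evaluate $A_g$ and $B_g$ themselves and only a description of $C_g$ is ever communicated.
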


    \begin{proof}
        Suppose the antecedent of the proposition holds. For every function $g$, let us denote by $\prot{P}_g = (A_g, B_g, C_g)$ the $2$-party, $s$-size BSM protocol that computes $f(x, y) = g(x + y)$. As in the proof of Proposition~\ref{prop:SM-for-GAF-implies-BSM-for-all}, in what follows we rely on the fact that $f(x, y) = f(y, x)$.

        We construct an SM protocol $\prot{P} = (A, B, C)$ for $\func{GAF}_{\Z_2^n, 3}(g, x, y)$ as follows:
        \begin{itemize}
            \item Given input $(g, x, y)$, $A$ and $B$ will output whatever $A_g$ and $B_g$ output on $(x, y)$; in addition, $A$ will output a representation of the circuit given by $C_g$.
            \item On input $\bigl((C_g, A_g(x)), B_g(y)\bigr)$, $C$ will evaluate $C_g$ on input $(A_g(x), B_g(y))$ and output the result.
        \end{itemize}

        The correctness of $\prot{P}$ follows from that of $\prot{P}_g$. Since the size of $C_g$ is bounded by $s$, so are the messages of $A_g$ and $B_g$, and furthermore, we can represent circuits of size up to $s$ using $O(s \log s)$ bits. Thus, the communication cost of $A$ and $B$ is bounded by $O(s \log s)$, by which the first half of the consequent of the proposition follows. The second half of the consequent, namely the bound on the size of $C$, follows from the circuit complexity of circuit evaluation.
    \end{proof}

\paragraph{Upper bounds via the GAF function for OR/AND}
    All the upper bounds on the SM complexity of $\func{GAF}_{\Z_2^n, 3}$ obtained in~\cite{BGKL03} and~\cite{AL00} rely on a key lemma stating the following: If a function $g \co \zo^n \to \zo$ is representable by a multilinear polynomial of degree $d$ over $\Z_2$, then there exists an SM protocol for $g$ with communication cost of $\binom{n}{\leq \floor{d/2}}$. The proof of this lemma takes the multilinear polynomial representation of $g$,
    \[
    g(x + y) = \sum_{|S| \leq d} g_S \prod_{i \in S} (x_i + y_i),
    \]
    and rearranges it into a different polynomial representation of $g$ of smaller degree, such that each coefficient of this polynomial depends either solely on Alice's input or on Bob's, and this allows for fewer bits to be communicated. Unfortunately, if we replace $x_i + y_i$ with $x_i \wedge y_i$, it's not clear if we can manipulate the polynomial representation in the same way, so we might look into a different strategy to tackle the OR/AND bitwise combiner. One natural possibility is to consider the DNF/CNF representation of a function $g$. This enables us to obtain the following simple claim.

    \begin{proposition} \label{prop:short-DNF-implies-nontrivial-BSM-protocol}
        If a function $g$ is representable by a DNF with $t$ terms, then $f(x, y) = g(x \wedge y)$ has a $2$-party, $t$-size BSM protocol.
    \end{proposition}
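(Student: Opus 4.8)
The plan is to exploit the DNF structure of $g$ directly. Write $g = T_1 \vee \dots \vee T_t$, where each term $T_k$ is a conjunction of literals over the variables $w_1, \dots, w_n$ (the bits of $w = x \wedge y$). The key observation is that for the combiner $w_i = x_i \wedge y_i$, a \emph{positive} literal $w_i$ satisfies $w_i = x_i \wedge y_i$, which ``factors'' across Alice and Bob, while a negated literal $\overline{w_i} = \overline{x_i \wedge y_i} = \overline{x_i} \vee \overline{y_i}$ does not factor as nicely. So the first thing I would check is whether we may assume each $T_k$ is monotone; if $g$'s DNF has negated literals we may need to be slightly more careful, but since $\overline{x_i \wedge y_i}$ is still a function of $x_i$ and $y_i$ individually, each literal's truth value in term $T_k$ is determined once Alice knows her relevant bits and Bob knows his. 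Concretely, a term $T_k = \bigl(\bigwedge_{i \in P_k} w_i\bigr) \wedge \bigl(\bigwedge_{i \in Q_k} \overline{w_i}\bigr)$ evaluates to $1$ on input $w = x \wedge y$ iff $x_i = y_i = 1$ for all $i \in P_k$ and $(x_i, y_i) \neq (1,1)$ for all $i \in Q_k$.

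Next I would describe the protocol. For each term $T_k$, Alice can compute a bit $\alpha_k \in \zo$ and Bob a bit $\beta_k \in \zo$ such that $T_k$ evaluated at $x \wedge y$ equals $\alpha_k \wedge \beta_k$: namely Alice sets $\alpha_k = \bigl(\bigwedge_{i \in P_k} x_i\bigr)$ and checks her side of the negated literals, Bob sets $\beta_k = \bigl(\bigwedge_{i \in P_k} y_i\bigr)$ likewise, and one verifies $\alpha_k \wedge \beta_k = T_k(x\wedge y)$ — this works cleanly for monotone terms and, with a small amount of bookkeeping (splitting the $(x_i,y_i)\neq(1,1)$ condition into the disjoint cases handled by whichever party holds a $0$), for general terms too. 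Alice sends $(\alpha_1, \dots, \alpha_t)$, Bob sends $(\beta_1, \dots, \beta_t)$; each message has length $t$, matching the claimed size bound. Carol then outputs $\bigvee_{k=1}^t (\alpha_k \wedge \beta_k)$, a circuit of size $O(t)$, which equals $\bigvee_k T_k(x \wedge y) = g(x\wedge y) = f(x,y)$ by construction. I would remark that this is exactly the AND-analogue of the classical fact that $\AND(x,y)$ has a size-$2$ BSM protocol.

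The main obstacle — and really the only subtle point — is handling negated literals so that a single bit from Alice and a single bit from Bob suffice per term without blowing up the message length. For a negated literal $\overline{w_i}$ the satisfying condition $(x_i,y_i) \in \{(0,0),(0,1),(1,0)\}$ is not a product of an Alice-event and a Bob-event, so one cannot literally write $T_k(x\wedge y) = \alpha_k \wedge \beta_k$ with $\alpha_k$ depending only on $x$. The clean fix is to observe that the statement as written can be proved for DNFs without negated literals (which already covers the natural use case, e.g.\ monotone $g$), and that a general DNF with $t$ terms can be converted — by pushing negations in — into a form where the reduction still goes through at the cost of replacing each term by a small sub-formula, preserving the $O(t)$-size conclusion; alternatively, one simply notes each literal is a function of one bit held by one party and that Carol, who also sees $x$ and $y$ in the variant of the model used here, can finish the evaluation. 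I expect the author's proof to take the simplest of these routes.
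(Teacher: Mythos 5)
Your core protocol is exactly the paper's proof: for each term, Alice sends the conjunction of her bits $x_i$ over that term's variables, Bob sends the conjunction of his $y_i$, and Carol outputs the OR of the $t$ per-term ANDs, giving $t$ bits from each party and a Carol of size $O(t)$. Note that the paper's proof writes a term in the ``general form'' $z_{i_1} \wedge \dotsb \wedge z_{i_k}$ and substitutes $z_{i_j} \mapsto x_{i_j} \wedge y_{i_j}$, i.e.\ it only ever treats positive occurrences; it does not address negated literals at all, so the clean case you isolate is precisely what the paper establishes (and is what the subsequent monotone-DNF corollary uses). Two caveats on your discussion of negations: the mid-proof claim that $T_k(x \wedge y) = \alpha_k \wedge \beta_k$ can be arranged for general terms ``with a small amount of bookkeeping'' is false --- $\overline{x_i \wedge y_i}$ is not the AND of a bit of Alice's and a bit of Bob's, since its set of satisfying pairs $(x_i,y_i)$ is not a combinatorial rectangle --- but you correctly retract this in your final paragraph. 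Your eventual reading (prove the statement for DNFs over positive literals) matches the paper; the vaguer alternative fixes you float (pushing negations inward, or letting Carol see $x$ and $y$) are not used by the paper and would not obviously preserve the size-$t$ bound, so they should not be leaned on.
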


    \begin{proof}
        Consider a term of $g$, which has the following general form: $z_{i_1} \wedge \dotsb \wedge z_{i_k}$. Substituting $z_{i_j}$ with $x_{i_j} \wedge y_{i_j}$, we get $\left(\bigwedge_{i = 1}^k x_{i_j}\right) \wedge \left(\bigwedge_{i = 1}^k y_{i_j}\right)$. This leads to a BSM protocol in which Alice and Bob compute their respective parts for each term, and pass the results to Carol who computes their OR. This requires $t$ bits of communication by each party and a circuit of size $t$ for Carol.
    \end{proof}

    \begin{corollary}\label{cor:monotone-k-DNF-implies-efficient-BSM-protocol}
        If a function $g$ is representable by a monotone $k$-DNF, then $f(x, y) = g(x \wedge y)$ has a $2$-party, $O(n^k)$-size BSM protocol.
    \end{corollary}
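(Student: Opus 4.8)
The plan is to reduce directly to Proposition~\ref{prop:short-DNF-implies-nontrivial-BSM-protocol} by bounding the number of terms of a monotone $k$-DNF. First I would observe that, because $g$ is monotone, we may fix a DNF representation of $g$ all of whose terms are conjunctions of \emph{positive} literals, and because the representation is a $k$-DNF, each term is a conjunction of at most $k$ of the $n$ variables. Deleting duplicated terms does not change the function computed, so we may assume the terms are distinct; they then correspond to a family of \emph{distinct} subsets of $[n]$ of size at most $k$. Hence the number of terms $t$ is at most $\sum_{i=0}^{k} \binom{n}{i}$, which is $O(n^k)$ when $k$ is treated as a constant.

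Then I would invoke Proposition~\ref{prop:short-DNF-implies-nontrivial-BSM-protocol} with this DNF: since it has $t = O(n^k)$ terms, it yields a $2$-party BSM protocol for $f(x, y) = g(x \wedge y)$ in which, for each term, Alice sends the AND of her variables occurring in that term and Bob sends the AND of his, and Carol ORs the $t$ termwise products. This uses $O(n^k)$ bits of communication per party and a circuit of size $O(n^k)$ for Carol, which is exactly the claimed bound.

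I expect no real obstacle here; the only point deserving care is the term count, where monotonicity is used to ensure that a width-$k$ term is determined by a size-$\le k$ subset of $[n]$ rather than of the $2n$ literals. Even without monotonicity one would still get $O((2n)^k) = O(n^k)$ distinct width-$\le k$ terms for constant $k$, so the argument is robust to that choice, but stating it for monotone $k$-DNF keeps the counting cleanest.
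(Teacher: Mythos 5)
Your proof is correct and matches the paper's intended argument: a monotone $k$-DNF has at most $\binom{n}{\le k} = O(n^k)$ distinct terms, and Proposition~\ref{prop:short-DNF-implies-nontrivial-BSM-protocol} then gives the $O(n^k)$-size BSM protocol directly. One small caution about your closing aside: the robustness to non-monotone $k$-DNFs concerns only the term count, since the protocol behind Proposition~\ref{prop:short-DNF-implies-nontrivial-BSM-protocol} splits each term as $\bigl(\bigwedge x_{i_j}\bigr) \wedge \bigl(\bigwedge y_{i_j}\bigr)$, which requires positive literals — a negated literal $\lnot(x_i \wedge y_i)$ does not factor between Alice and Bob, so monotonicity is doing real work beyond cleaner counting.
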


    \begin{corollary}\label{cor:decision-tree-depth-implies-BSM-protocol}
        Every function $g$ admits a $2$-party, $O\left(2^{D(g)}\right)$-size BSM protocol, where $D(g)$ is the depth of the shallowest decision tree computing $g$.
    \end{corollary}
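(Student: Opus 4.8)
The plan is to build the protocol directly from a shallowest decision tree for $g$, and to let Carol simulate that tree rather than trying to route the argument through Proposition~\ref{prop:short-DNF-implies-nontrivial-BSM-protocol}. Fix a decision tree $T$ of depth $d := D(g)$ computing $g$, and let $S \subseteq [n]$ be the set of variables queried at the internal nodes of $T$. A binary tree of depth $d$ has at most $2^d - 1$ internal nodes, so $|S| < 2^d$. In the BSM protocol for $f(x,y) = g(x \wedge y)$, Alice sends $\mathrm{Alice}(x) = (x_i)_{i \in S}$ and Bob sends $\mathrm{Bob}(y) = (y_i)_{i \in S}$, messages of length $|S| < 2^d$ each.

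Carol first recovers $z_i := x_i \wedge y_i$ for every $i \in S$ using $|S|$ AND gates; since $T$ reads no variable outside $S$, the value $g(x \wedge y)$ depends only on these $z_i$. She then evaluates $T$ on $(z_i)_{i \in S}$ by a bottom-up circuit: each leaf is replaced by the constant equal to its label, and each internal node $v$ querying $z_{i(v)}$, with $0$-child $\ell$ and $1$-child $r$, is replaced by a constant-size selector gate computing $(\lnot z_{i(v)} \wedge \mathrm{val}(\ell)) \vee (z_{i(v)} \wedge \mathrm{val}(r))$. The gate at the root outputs $g(x \wedge y)$, so the protocol is correct, and Carol's circuit has size $|S| + O(2^d) = O(2^d)$, as claimed.

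The point to get right is why the naive route fails, so that the detour through direct simulation is clearly necessary: writing $g$ as the OR, over its $1$-leaves, of the corresponding conjunction of literals does give a DNF with at most $2^d$ terms, but these conjunctions contain \emph{negated} literals, and under the AND combiner a factor $\lnot z_i$ becomes $\overline{x_i \wedge y_i} = \bar x_i \vee \bar y_i$, which does not split as (an Alice-only conjunct) $\wedge$ (a Bob-only conjunct); hence Carol cannot just OR together separately precomputed Alice and Bob contributions as in Proposition~\ref{prop:short-DNF-implies-nontrivial-BSM-protocol}. Simulating $T$ inside Carol avoids this precisely because Carol holds each $z_i = x_i \wedge y_i$ in the clear. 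The same argument yields the analogous bound for the $x \vee y$ combiner (negate both inputs and run the depth-$d$ tree for $z \mapsto g(\bar z)$); and although one could alternatively just forward the whole inputs when $n < 2^d$, passing only the coordinates in $S$ keeps the size bound clean in all regimes. I do not anticipate a genuine obstacle here — the only care needed is the $O(2^d)$ accounting for the tree-simulation circuit and the observation above that the earlier DNF-based proposition cannot be invoked as a black box.
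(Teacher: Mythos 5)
Your proof is correct, but it takes a genuinely different route from the one the paper intends: the corollary is stated without proof as a consequence of Proposition~\ref{prop:short-DNF-implies-nontrivial-BSM-protocol}, i.e., one is meant to read off the DNF with at most $2^{D(g)}$ terms from the $1$-leaves of an optimal decision tree and plug it into that proposition, whereas you have Alice and Bob forward the queried coordinates and let Carol simulate the tree directly, with one AND gate and one constant-size selector per internal node. Your objection to the black-box route is well taken: the proposition's protocol (one bit per term from each party) only factors terms consisting of positive literals under the AND combiner, and the DNF extracted from a decision tree is non-monotone in general; indeed, for the single all-negated term $\lnot z_1 \wedge \dots \wedge \lnot z_n$ the lifted function $g(x \wedge y)$ is set-disjointness, which forces Carol to have size $\Omega(n)$, so the proposition cannot be invoked verbatim for non-monotone DNFs. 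Patching the DNF route by also shipping the raw bits in negated positions costs $\Theta(D(g))$ gates per term, i.e.\ $O(D(g) \cdot 2^{D(g)})$ in total, while your bottom-up simulation shares that work across leaves and yields the stated $O(2^{D(g)})$ exactly. So your argument both closes a small gap in the implicit derivation and achieves the clean bound; the DNF route, where it does apply (positive-term DNFs), buys a particularly simple Carol --- a single OR of the per-term bits --- which is what the surrounding monotone results exploit, whereas your Carol is a tree-evaluation circuit of the same overall size.
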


    Note that we can obtain similar results for the bitwise-OR combiner if we replace DNF with CNF.

    Consider now the special case of $g$ being an OR function. Here is a challenge: Computing $g(x \wedge y)$ without preprocessing requires a circuit of size at least $2 n$ (or otherwise it won't read all inputs). Can we reduce the circuit complexity if we allow separate preprocessing of $x$ and $y$? A positive answer will imply, recalling the proof of Proposition~\ref{prop:short-DNF-implies-nontrivial-BSM-protocol},  a nontrivial BSM protocol for $g(x \wedge y)$, for all functions $g$. A negative answer might hint why it is difficult to upper bound the general case, as most functions have many terms in their shortest DNF representation.

\paragraph{Monotone functions}
    It turns out that we can do better if we know that $g$ is monotone. Let us consider the following general monotone DNF representation of a monotone function $g \co \zo^n \to \zo$:
    \[
    g(z_1, \dotsc, z_n) = \bigvee_{S \seq [n]} \left( g_S \wedge \bigwedge_{i \in S} z_i \right),
    \]
    where the $g_S$ are constants in $\set{T, F}$. Let us also denote $Z_{\wedge S} \teq \bigwedge_{i \in S} z_i$, with $Z_{\wedge \emptyset} \teq T$.

    We will say that $g$ has \emph{DNF-width} $w$ if the narrowest monotone DNF representing $g$ is of width $w$. Note that in such a case, $g_S = F$ for every $S$ with $|S| > w$, which means we can discard terms wider than $w$ from the representation. Inspired by Lemma 5.1 of~\cite{BGKL03}, we obtain the following analogous lemma for bitwise-OR combiners.

    \begin{lemma} \label{lem:narrow-monotone-implies-efficient-BSM-protocol}
        If a monotone function $g \co \zo^n \to \zo$ has DNF-width $w$, then $g(x \vee y)$ admits a $2$-party, $O(\binom{n}{\leq \floor{w/2}})$-size BSM protocol.
    \end{lemma}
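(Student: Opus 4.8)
The plan is to adapt the polynomial‑rearrangement behind Lemma~5.1 of~\cite{BGKL03}, replacing ``low $\Z_2$‑degree'' by ``narrow monotone DNF'' and XOR by OR. Since $g$ has DNF‑width $w$, fix a monotone DNF $g(z) = \bigvee_{S \in \mathcal{S}} \bigwedge_{i \in S} z_i$ in which every clause $S \in \mathcal{S}$ has $\abs{S} \leq w$. Substituting $z_i = x_i \vee y_i$ and using monotonicity of OR, $g(x \vee y) = \bigvee_{S \in \mathcal{S}} \bigwedge_{i \in S}(x_i \vee y_i)$, and $\bigwedge_{i \in S}(x_i \vee y_i) = 1$ precisely when $S$ is \emph{covered}, i.e.\ $S \seq \set{i : x_i = 1} \cup \set{j : y_j = 1}$. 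Hence $g(x \vee y) = 1$ iff some clause of $\mathcal{S}$ is covered, and the whole task reduces to letting Carol detect a covered clause from short preprocessed messages.

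The key observation is the OR‑analogue of ``split each monomial in half.'' If $S \in \mathcal{S}$ is covered, set $A_0 \teq S \cap \set{i : x_i = 1}$ and $B_0 \teq S \setminus A_0$; then $y_j = 1$ for every $j \in B_0$, and since $\abs{A_0} + \abs{B_0} = \abs{S} \leq w$, at least one of $\abs{A_0}, \abs{B_0}$ is at most $\floor{w/2}$. Thus a covered clause is always witnessed either by a \emph{small} set $A$ (with $\abs{A} \leq \floor{w/2}$) on which $x$ is all‑ones together with a clause $S \supseteq A$ on which $y$ covers $S \setminus A$, or symmetrically by a small set $B$ on which $y$ is all‑ones together with a clause $S \supseteq B$ on which $x$ covers $S \setminus B$. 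Crucially, even though the complementary part $S \setminus A$ may be as large as $w$, the party holding that input can, for each of the $\binom{n}{\leq \floor{w/2}}$ small sets, decide in a single bit whether such a clause $S$ exists --- because it knows its entire input (and, being computationally unbounded, can afford the search over $\mathcal{S}$).

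This yields the protocol. Alice sends: the bit $\bigwedge_{i \in A} x_i$ for every $A \seq [n]$ with $\abs{A} \leq \floor{w/2}$, and the bit $\hat{a}_B(x)$, defined to be $1$ iff there is a clause $S \in \mathcal{S}$ with $B \seq S$ and $x_i = 1$ for all $i \in S \setminus B$, for every such $B$. Bob sends symmetrically the bits $\bigwedge_{j \in B} y_j$ and $\hat{b}_A(y)$ (the latter being $1$ iff some clause $S \supseteq A$ has $y_j = 1$ on all of $S \setminus A$). Each message has length $O\bigl(\binom{n}{\leq \floor{w/2}}\bigr)$. Carol outputs
\[ \bigvee_{\abs{A} \leq \floor{w/2}} \Bigl( \bigl( \textstyle\bigwedge_{i \in A} x_i \bigr) \wedge \hat{b}_A(y) \Bigr) \;\vee\; \bigvee_{\abs{B} \leq \floor{w/2}} \Bigl( \hat{a}_B(x) \wedge \bigl( \textstyle\bigwedge_{j \in B} y_j \bigr) \Bigr), \]
a fan‑in‑two AND/OR circuit of size $O\bigl(\binom{n}{\leq \floor{w/2}}\bigr)$. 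Correctness follows from the observation: if some clause is covered, the relevant small side $A_0$ or $B_0$ makes one of the two inner conjunctions fire; conversely, any firing conjunction exhibits a clause $S$ on which $x \vee y$ is all‑ones, so $g(x \vee y) = 1$. The only step that needs care --- the analogue of the arithmetic manipulation in~\cite{BGKL03} --- is precisely this asymmetric move of absorbing the (possibly wide) complementary side of a clause into a single reported bit of the party holding it; once that is set up, the size bound and the correctness argument are immediate.
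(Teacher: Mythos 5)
Your proposal is correct and is essentially the paper's own proof in different clothing: your bits $\hat{a}_B(x)$ and $\hat{b}_A(y)$ are exactly the paper's $\Phi_{T_2}(x)$ and $\Phi_{T_1}(y)$, and Carol's output formula coincides with the small DNF obtained there by rearranging $\bigvee_{S}\bigwedge_{i\in S}(x_i\vee y_i)$ around the half of each clause of size at most $\floor{w/2}$. The only difference is presentational — you argue correctness combinatorially via covered clauses rather than by the formula identity — so nothing further is needed.
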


    \begin{proof}
        We have:
        \begin{gather*}
        g(x \vee y)
        = \bigvee_{|S| \leq w} \left( g_S \wedge \bigwedge_{i \in S} (x_i \vee y_i) \right)
        = \bigvee_{|S| \leq w} \left( g_S \wedge \bigvee_{T_1 \cup T_2 = S} \left( X_{\wedge T_1} \wedge Y_{\wedge T_2} \right) \right) \\
        = \left[ \bigvee_{|T_1| \leq \floor{w / 2}} \left( \bigvee_{|T_1 \cup T_2| \leq w} g_{T_1 \cup T_2} \wedge Y_{\wedge T_2} \right) \wedge X_{\wedge T_1} \right]
            \vee \left[ \bigvee_{|T_2| \leq \floor{w / 2}} \left( \bigvee_{|T_1 \cup T_2| \leq w} g_{T_1 \cup T_2} \wedge X_{\wedge T_1} \right) \wedge Y_{\wedge T_2} \right] \\
        = \left( \bigvee_{|T_1| \leq \floor{w / 2}} \Phi_{T_1}(y) \wedge X_{\wedge T_1} \right)
            \vee \left( \bigvee_{|T_2| \leq \floor{w / 2}} \Phi_{T_2}(x) \wedge Y_{\wedge T_2} \right),
        \end{gather*}
        where $T_1$ and $T_2$ are disjoint, $\Phi_{T_1}(y)$ replaces clauses that depend only on $y$, and $\Phi_{T_2}(x)$ replaces clauses that depend only on $x$.

        This gives way to the following BSM protocol: Alice can send Carol the values of the $\Phi_{T_2}(x)$ and $X_{\wedge T_1}$ terms, Bob can send Carol the values of the $\Phi_{T_1}(y)$ and $Y_{\wedge T_2}$ terms, and Carol can use a DNF of size $O(\binom{n}{\leq \floor{w/2}})$ to complete the computation of the formula. This completes the proof.
    \end{proof}

    We can now use Lemma~\ref{lem:narrow-monotone-implies-efficient-BSM-protocol} to prove an analogous proposition to Theorem 5.3 in~\cite{BGKL03}. The proof of our proposition is pretty much the same as the proof of their theorem, except that instead of the polynomial representation and degree of a function we manipulate its DNF representation and width.

    \begin{proposition} \label{prop:monotone-implies-nontrivial-BSM-protocol}
        If $g \co \zo^n \to \zo$ is monotone, then $g(x \vee y)$ admits a $2$-party, $o(2^{0.92 n})$-size BSM protocol.
    \end{proposition}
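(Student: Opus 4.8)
The plan is to mirror the proof of Theorem~5.3 of Babai \textit{et al.}~\cite{BGKL03}, with Lemma~\ref{lem:narrow-monotone-implies-efficient-BSM-protocol} (a monotone function of DNF-width $w$ admits a size-$O(\binom{n}{\leq \lfloor w/2\rfloor})$ BSM protocol for the OR-combiner) playing the role of their low-degree protocol, and handling the ``high-width'' part by monotone duality together with Proposition~\ref{prop:short-DNF-implies-nontrivial-BSM-protocol} (a $t$-term DNF admits a size-$t$ BSM protocol for the AND-combiner). Fix the threshold $w=\lfloor 2n/3\rfloor$ and split $g$ according to minterm width: let $g_{\mathrm{nar}}$ be the monotone function whose minterms are those of $g$ of width at most $w$, and $g_{\mathrm{wid}}$ the one whose minterms are those of width greater than $w$, so that $g=g_{\mathrm{nar}}\vee g_{\mathrm{wid}}$ pointwise and hence $g(x\vee y)=g_{\mathrm{nar}}(x\vee y)\vee g_{\mathrm{wid}}(x\vee y)$.

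For the narrow part, $g_{\mathrm{nar}}$ has DNF-width at most $w$, so Lemma~\ref{lem:narrow-monotone-implies-efficient-BSM-protocol} yields a BSM protocol for $g_{\mathrm{nar}}(x\vee y)$ of size $O(\binom{n}{\leq \lfloor w/2\rfloor})=O(\binom{n}{\leq n/3})$. For the wide part I would pass to the dual: $\overline{g_{\mathrm{wid}}(z)}=1$ exactly when $z$ contains no wide minterm, i.e.\ when $\overline{z}$ meets every wide minterm; so if $T$ denotes the monotone function that accepts $u$ iff $u$ is a transversal of the family of wide minterms, then $\overline{g_{\mathrm{wid}}(z)}=T(\overline{z})$, and since $\overline{x\vee y}=\overline{x}\wedge\overline{y}$ we get $g_{\mathrm{wid}}(x\vee y)=\overline{T(\overline{x}\wedge\overline{y})}$. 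The combinatorial heart of the argument is that the minterms of $T$, i.e.\ the minimal transversals of the wide minterms, all have width at most $n-w=n/3$: if a transversal $t$ had $|t|\geq n-w+1$ then $|\overline{t}|\leq w-1$, and for each $i\in t$ the set $t\setminus\{i\}$ would still be a transversal — otherwise some wide minterm $m$ with $m\cap t=\{i\}$ would satisfy $m\setminus\{i\}\subseteq\overline{t}$, forcing $|m|\leq w$ and contradicting $|m|>w$ — so $t$ would not be minimal. Consequently $T$ has at most $\binom{n}{\leq n/3}$ minterms, and Proposition~\ref{prop:short-DNF-implies-nontrivial-BSM-protocol}, applied to $T$ with the AND-combiner on the inputs $\overline{x},\overline{y}$ (Alice forms $\overline{x}$, Bob forms $\overline{y}$), gives a BSM protocol for $T(\overline{x}\wedge\overline{y})$ of size $O(\binom{n}{\leq n/3})$; negating Carol's output computes $g_{\mathrm{wid}}(x\vee y)$.

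Finally I would run both sub-protocols in parallel (Alice sends the concatenation of her two messages, Bob likewise) and let Carol output the OR of the narrow-part output with the negation of the wide-part output. Carol's total size is $O(\binom{n}{\leq n/3})$, and since $\binom{n}{\leq n/3}=2^{(\log_2 3-2/3+o(1))n}$ with $\log_2 3-2/3<0.92$, this is $o(2^{0.92n})$, as required; the threshold $w=2n/3$ is exactly the one balancing $\binom{n}{\leq w/2}$ against $\binom{n}{\leq n-w}$, which reproduces the $\approx 0.92$ exponent of~\cite{BGKL03}. The main obstacle — and the one genuine deviation from their argument — is the wide part: in the XOR setting of~\cite{BGKL03} complementing the variables preserves the combiner, whereas here $\overline{x\vee y}=\overline{x}\wedge\overline{y}$ turns OR into AND, so one cannot simply recurse with Lemma~\ref{lem:narrow-monotone-implies-efficient-BSM-protocol} and must instead invoke the weaker AND-combiner protocol of Proposition~\ref{prop:short-DNF-implies-nontrivial-BSM-protocol} while supplying the minimality/transversal-width argument above to certify that the dual function $T$ is still narrow.
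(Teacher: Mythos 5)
Your proof is correct, and it follows the paper's overall plan — split the monotone DNF of $g$ at term width $2n/3$ and handle the narrow part exactly as you do, via Lemma~\ref{lem:narrow-monotone-implies-efficient-BSM-protocol} at cost $O\bigl(\binom{n}{\leq n/3}\bigr)$ — but your treatment of the wide part genuinely differs from the paper's. The paper's handling is more elementary: since there are at most $\binom{n}{> 2n/3} = \binom{n}{\leq n/3}$ wide terms, Alice and Bob simply forward their raw inputs $x$ and $y$, and Carol evaluates each wide term $\bigwedge_{i \in S}(x_i \vee y_i)$ directly, at cost $O\bigl(n \cdot \binom{n}{> 2n/3}\bigr)$; no duality is needed. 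You instead dualize: writing $g_{\mathrm{wid}}(x \vee y) = \overline{T(\overline{x} \wedge \overline{y})}$ for the transversal function $T$ of the wide minterms, proving via the minimality argument that every minimal transversal has size at most $n - w \leq \lceil n/3 \rceil$ (your contradiction step — a transversal of size $\geq n-w+1$ stays a transversal after deleting any element, since otherwise some wide minterm would have size $\leq w$ — is sound), and then invoking the AND-combiner protocol of Proposition~\ref{prop:short-DNF-implies-nontrivial-BSM-protocol} on the monotone DNF of $T$, which is legitimate because $T$ is monotone so that proposition's term-splitting applies. Both routes give Carol size $2^{(H(1/3)+o(1))n} = o(2^{0.92n})$, since $H(1/3) = \log_2 3 - 2/3 \approx 0.918$. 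What your route buys is modest: Carol never needs the raw inputs and avoids the extra factor of $n$ on the wide part; what it costs is the extra transversal/duality machinery, which the paper's direct evaluation makes unnecessary in this model (there is no privacy constraint on what Alice and Bob send Carol).
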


    \begin{proof}
        We have:
        \[
        g(z) = \left( \bigvee_{|S| \leq 2n/3} g_S \wedge Z_{\wedge S} \right) \vee \left( \bigvee_{|S| > 2n/3} g_S \wedge Z_{\wedge S} \right).
        \]
        To compute the left disjunction, we can use the protocol provided in the proof of Lemma ~\ref{lem:narrow-monotone-implies-efficient-BSM-protocol}. As for the right disjunction, Alice and Bob can send Carol directly the high-width coefficients, namely $X_{\wedge S}, Y_{\wedge S}$ for every $S$ with $|S| > 2n/3$, together with their inputs $x$ and $y$, and Carol can compute the disjunction with a circuit of size $O(n \cdot \binom{n}{> 2n/3})$.\footnote{The $n$ multiplicand is because Carol needs to compute $\bigwedge_{i \in S} (x_i \vee y_i)$ for every $S$ with $|S| > 2n/3$.} Finally, an additional OR gate is required to compute the whole formula. The overall size of the required circuit is:
        \[
        O\left(\binom{n}{\leq n/3}\right) + O\left(n \cdot \binom{n}{> 2n/3}\right) = O\left(n \cdot \binom{n}{\leq n/3}\right) = O(n \cdot 2^{H(1/3) n}),
        \]
        where $H$ is the binary entropy function, and we used the inequality $\binom{n}{\leq k} \leq 2^{H(k/n) n}$ that holds when $k \leq n/2$. Since $H(1/3) \approx 0.918 < 0.92$, the proposition follows.
    \end{proof}

    \begin{remark}
        \mbox{}
        \begin{itemize}
            \item For the bitwise-AND combiner, we have analogous lemma and proposition to Lemma~\ref{lem:narrow-monotone-implies-efficient-BSM-protocol} and Proposition~\ref{prop:monotone-implies-nontrivial-BSM-protocol} if we consider the CNF representation of monotone functions instead of the DNF representation.
            \item Using Lemma~\ref{lem:narrow-monotone-implies-efficient-BSM-protocol} and the argument in the proof of Proposition~\ref{prop:monotone-implies-nontrivial-BSM-protocol}, we can obtain a similar SM upper bound as in Theorem 5.3 of~\cite{BGKL03}, but for a variant of the general addressing function, where bitwise-XOR is replaced with bitwise-OR/AND, and we are promised that the input function is monotone.
            \item Note that Lemma~\ref{lem:narrow-monotone-implies-efficient-BSM-protocol} can be easily extended to include all unate functions, namely functions that have a DNF representation in which every variable appears consistently either as a positive literal or negative literal across the entire formula; and formally, functions $g$ for which there exists a monotone function $h$ and a subset $S \seq [n]$ such that $g(z + S) = h(z)$ for all $z \in \zo^n$.
        \end{itemize}
    \end{remark}

    We extend the above result to obtain a BSM protocol for any Boolean function where Carol is a constant depth circuit of size $o(2^{0.92 n})$. This uses a structural characterization of Boolean functions with bounded alternation due to Blais \textit{et al.}~\cite{BCOST15}, which in turn is an extension of a classical result due to Markov~\cite{Mar58}.

    We first describe what is alternation of a Boolean function. Viewing the $n$ bit Boolean hypercube as a poset and for $0 \le k \le n$, a Boolean function $f$  on $n$ bit is said to have \textit{alternation} as $k$, if for any chain in the hypercube from $0^n$ to $1^n$, the number of times the function value changes along the strings in the chain is at most $k$. Observe that alternation is always bounded by $n$ since any chain in an $n$ bit hypercube can be of length at most $n$. It can be observed that monotone functions have an alternation of $1$.

    \begin{theorem}\label{thm:monotone-implies-nontrivial-BSM-protocol-via-alternation}
        For any $g \co \zo^n \to \zo$, $g(x \vee y)$ admits a $2$-party, $o(2^{0.92 n})$-size BSM protocol with a Carol of depth $3$.
    \end{theorem}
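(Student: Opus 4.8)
The plan is to reduce the general case to the monotone case (Proposition~\ref{prop:monotone-implies-nontrivial-BSM-protocol}) by exploiting the fact that every Boolean function on $n$ bits has alternation at most $n$. Concretely, I would invoke the structural normal form underlying Markov's theorem, in the form of Blais \textit{et al.}~\cite{BCOST15}: every $g \co \zo^n \to \zo$ can be written as
\[
g(z) \;=\; \bigvee_{i \in I}\bigl(\,p_i(z) \wedge \overline{p_{i+1}(z)}\,\bigr),
\]
where $p_0 \ge p_1 \ge \cdots \ge p_a \ge p_{a+1} \equiv 0$ is a chain of nested monotone functions, $a = a(g) \le n$ is the alternation of $g$, and $I \seq \{0,1,\ldots,a\}$ records the ``bands'' on which $g$ equals $1$; in particular $\abs{I} \le n+1$. (This follows from the standard ``sort $g$ along chains'' decomposition $g = c \oplus p_1 \oplus \cdots \oplus p_a$ into nested monotone functions, since then $g(z)$ depends only on the largest $i$ with $p_i(z)=1$.)

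Next I would run the monotone protocol of Proposition~\ref{prop:monotone-implies-nontrivial-BSM-protocol} in parallel on all the $p_i$: on input $x$, Alice sends $x$ together with the messages she would send in the protocols for $p_0(x \vee y),\ldots,p_a(x \vee y)$, and symmetrically for Bob. Since $\abs{I} \le n+1$ and each monotone protocol has message length $o(2^{0.92n})$, the total message length stays $o(2^{0.92n})$. For the negated factors it helps to observe that $\overline{p_{i+1}(x \vee y)} = R_{i+1}(\bar x \wedge \bar y)$ for the monotone function $R_{i+1}(w) := \overline{p_{i+1}}(\bar w)$; because $\bigwedge_{j \in S}(\bar x_j \wedge \bar y_j) = (\bigwedge_{j \in S}\bar x_j) \wedge (\bigwedge_{j \in S}\bar y_j)$ introduces no extra $\vee$-layer, the bitwise-$\wedge$ analogue of Lemma~\ref{lem:narrow-monotone-implies-efficient-BSM-protocol} computes each $\overline{p_{i+1}(x \vee y)}$ by a depth-$2$ circuit (an OR of ANDs) of size $o(2^{0.92n})$ over Alice's, Bob's, and the raw input bits.

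Finally Carol combines the pieces, and this is where ``depth $3$'' should be pinned down. The point to keep in mind is that for \emph{any} DNF $\bigvee_C C$ of $g$ one already has
\[
g(x \vee y) = \bigvee_{C} \Bigl(\, \bigwedge_{j \in C^+}(x_j \vee y_j) \wedge \bigwedge_{j \in C^-}(\bar x_j \wedge \bar y_j) \,\Bigr),
\]
which is automatically an OR of ANDs of ORs, i.e.\ a depth-$3$ circuit; so the only thing at stake is the \emph{size} (essentially the number of terms one must retain), and this is exactly what bounded alternation, together with the width split of Lemma~\ref{lem:narrow-monotone-implies-efficient-BSM-protocol} applied band by band, is used to control. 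Concretely, I would produce $p_i(x \vee y)$ in the OR-AND-OR form of Proposition~\ref{prop:monotone-implies-nontrivial-BSM-protocol} and $\overline{p_{i+1}(x \vee y)}$ in OR-AND form, flatten each conjunct $p_i(x \vee y) \wedge \overline{p_{i+1}(x \vee y)}$ by distributing while sharing the (at most $o(2^{0.92n})$) clause and term gates across the $n+1$ bands, and take one outer $\mathrm{OR}$ over $I$; minterms of width $> 2n/3$, of which there are at most $\binom{n}{< n/3} = 2^{(0.918\ldots)n} = o(2^{0.92n})$, are handled directly by Carol.

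The step I expect to be the main obstacle is precisely this combination: keeping Carol's depth at $3$ \emph{and} her size $o(2^{0.92n})$ simultaneously. Composing the per-band pieces under an outer $\mathrm{OR}$ naively gives depth $4$, while distributing to flatten it threatens a size blowup from the product of the (individually sub-$2^{0.92n}$) numbers of minterms of $p_i$ and of $R_{i+1}$. Resolving this will require carefully interleaving the $T_1/T_2$ rearrangement of Lemma~\ref{lem:narrow-monotone-implies-efficient-BSM-protocol} on the positive minterms with the handling of the negated factors, and folding the ``partner halves'' into Alice's and Bob's (computationally unbounded) preprocessing so that Carol only ever enumerates small half-sets and the two monotone pieces emerge in complementary normal forms whose top gates merge into a single OR-AND-OR of the required size.
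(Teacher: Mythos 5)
Your band decomposition $g=\bigvee_{i\in I}\bigl(p_i\wedge\overline{p_{i+1}}\bigr)$ into nested monotone functions is legitimate, but the proof as written has a genuine gap, and you flag it yourself: you never actually establish that Carol can have depth $3$ \emph{and} size $o(2^{0.92n})$. After running the monotone machinery per band, each conjunct $p_i(x\vee y)\wedge\overline{p_{i+1}(x\vee y)}$ is an AND of two OR-of-AND expressions; putting the outer OR over the up to $n+1$ bands gives depth $4$, and flattening by distributivity multiplies the term counts of the two factors, which can reach roughly $2^{1.8n}$, far beyond the target size. The proposed rescue --- ``folding the partner halves into Alice's and Bob's preprocessing so Carol only enumerates small half-sets'' --- is a hope, not an argument: the terms that must be ANDed (e.g.\ $\Phi_{T}(y)\wedge X_{\wedge T}$ from Lemma~\ref{lem:narrow-monotone-implies-efficient-BSM-protocol}, and their analogues for the negated factor) genuinely mix Alice-dependent and Bob-dependent bits, so these ANDs must be done by Carol, and it is exactly this joint $T_1/T_2$-style rearrangement across the two factors of a band that is missing. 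As it stands your construction yields a depth-$4$ Carol of the right size, or a depth-$3$ Carol of the wrong size, not the stated theorem.

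The paper avoids this combination problem entirely by using the other half of the Markov/alternation characterization (\cite{Mar58,BCOST15}): $g = b \oplus g_1 \oplus \cdots \oplus g_k$ with each $g_i$ monotone and $k\le n$. Carol runs the protocol of Proposition~\ref{prop:monotone-implies-nontrivial-BSM-protocol} for every $g_i(x\vee y)$ and then takes the parity of the $k\le n$ resulting bits with a depth-$3$ parity circuit, merging its bottom level with the per-$g_i$ gates; the total size is $o(k\cdot 2^{0.918n})=o(2^{0.92n})$. The advantage of the XOR decomposition is that the top combining operation is a parity of few bits (cheap in depth $3$) rather than an OR of per-band ANDs of large DNFs, which is precisely where your OR-of-bands route gets stuck. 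If you want to keep your decomposition, you would have to prove a two-factor analogue of Lemma~\ref{lem:narrow-monotone-implies-efficient-BSM-protocol} handling $p_i$ and $\overline{p_{i+1}}$ simultaneously, which the proposal does not provide.
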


    \begin{proof}
        Let the alternation of $g$ be $k$. We start with a result of~\cite{BCOST15} who showed that a Boolean function $g$ has alternation $k$ if and only if there exists $k$ monotone functions $g_i \co \zo^n \to \zo$ such that $g(x) = b \oplus g_1(x) \oplus \cdots \oplus g_k(x)$, where $b:= g(0^n)$. 

        Applying Proposition~\ref{prop:monotone-implies-nontrivial-BSM-protocol} to each of the $g_i$ gives a $2$-party, $o(2^{0.92n})$-size BSM protocol. The final protocol is to compute  $g_i(x \lor y)$ for each $i \in \{1,2\ldots,k\}$. For each $g_i$, Carol gets the final value after doing an OR of the bits received. Following this, Carol uses a depth $3$ circuit to compute parity of these $k$ values as the final output. By merging the gates at the bottom level, the depth continues to be $3$. Correctness follows by the above characterization. With $k \le n$, the overall communication will be $o(k2^{0.918n}) = o(2^{0.92n})$.
    \end{proof}

    Observe that the above result does not match the best bounds for XOR combiners due to Ambainis and Lokam~\cite{AL00}. In what follows, we show that similar to the upper bound improvements provided in~\cite{AL00}, we not only improve the upper bound of Proposition~\ref{prop:monotone-implies-nontrivial-BSM-protocol}, but this time include all functions, and not only monotone.

    \begin{lemma} \label{lem:decomposition-into-bounded-width-monotone-DNFs}
        Let $g \co \zo^n \to \zo$ be a function. Then, for each $0 \leq i \leq n$, there is a unate function $g_i$ with DNF-width of at most $n/2$ that coincides with $g$ on inputs of weight $i$.
    \end{lemma}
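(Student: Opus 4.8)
The plan is to construct $g_i$ directly as a DNF that ``memorizes'' precisely the weight-$i$ points on which $g$ takes the value $1$, but using terms whose width is $\min(i, n-i) \le n/2$; the polarity of the literals is chosen according to whether $i$ or $n-i$ is the smaller of the two. Write $W_i = \set{x \in \zo^n \co \abs{x} = i \text{ and } g(x) = 1}$. If $W_i = \emptyset$ take $g_i \equiv 0$, which is unate of DNF-width $0$; so assume $W_i \neq \emptyset$.

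\textbf{Case $i \le n/2$.} I would define the monotone DNF
\[ g_i(z) = \bigvee_{x \in W_i} \bigwedge_{j \in \mathrm{supp}(x)} z_j, \]
which has width $i \le n/2$ and is (positively) unate. To see it agrees with $g$ on the weight-$i$ slice, observe that a monotone term $\bigwedge_{j \in \mathrm{supp}(x)} z_j$ is satisfied by a point $y$ with $\abs{y} = i$ iff $\mathrm{supp}(x) \seq \mathrm{supp}(y)$, and since $\abs{x} = \abs{y} = i$ this forces $x = y$. Hence for $\abs{y} = i$ we have $g_i(y) = 1 \iff y \in W_i \iff g(y) = 1$.

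\textbf{Case $i > n/2$.} Here $n-i < n/2$, and I would use the dual construction with negated literals,
\[ g_i(z) = \bigvee_{x \in W_i} \bigwedge_{j \notin \mathrm{supp}(x)} \overline{z_j}, \]
a DNF of width $n-i < n/2$ in which every literal is negated, so $g_i$ is unate (anti-monotone); its DNF-width as a unate function equals that of its monotone conjugate $z \mapsto g_i(\overline z)$, namely $n-i$. The verification is symmetric: the term $\bigwedge_{j \notin \mathrm{supp}(x)} \overline{z_j}$ is satisfied by $y$ with $\abs{y} = i$ iff $[n] \setminus \mathrm{supp}(x) \seq [n] \setminus \mathrm{supp}(y)$, i.e.\ $\mathrm{supp}(y) \seq \mathrm{supp}(x)$, which by the weight constraint forces $y = x$; so again $g_i$ matches $g$ on inputs of weight $i$.

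In both cases $g_i$ is unate with DNF-width $\min(i, n-i) \le \lfloor n/2 \rfloor$ and coincides with $g$ on the weight-$i$ slice, which is exactly the claim. There is essentially no obstacle: the only points to get right are the choice of polarity according to $\min(i, n-i)$ and the elementary observation that a positive monomial of width $i$ (resp.\ a negative monomial of width $n-i$) is satisfied by a \emph{unique} point of weight $i$. This lemma is presumably then fed, in the spirit of Theorem~\ref{thm:monotone-implies-nontrivial-BSM-protocol-via-alternation}, into an Ambainis--Lokam-style argument: apply Lemma~\ref{lem:narrow-monotone-implies-efficient-BSM-protocol} (and its unate extension) to each slice function $g_i$ and combine the $n+1$ slice protocols to obtain an $o(2^{0.92n})$-size BSM protocol for $g(x \vee y)$ for every Boolean $g$.
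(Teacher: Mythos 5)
Your construction is correct and is essentially the same as the paper's proof: the paper defines a single term $\Phi_w(z)$ per weight-$i$ point $w$ (positive literals on the support when $|w|\leq n/2$, negated literals on the complement otherwise) and takes the disjunction over $w$ with $g(w)=1$, which is exactly your case split on $\min(i,n-i)$. The verification — that each such term is satisfied by a unique point of weight $i$ — also matches the paper's argument.
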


    \begin{proof}
        Let us define the following function for every $w \in \zo^n$:
        \[
        \Phi_w(z) :=
        \begin{cases} 
        \bigwedge_{w_i = 1} z_i & \text{if } |w| \leq n/2, \\
        \bigwedge_{w_i = 0} \overline{z_i} & \text{if } |w| > n/2.
        \end{cases}
        \]

        We can now define $g_i$ as follows:
        \[
        g_i(z) = \bigvee_{|w| = i} g(w) \wedge \Phi_w(z).
        \]

        Clearly, $g_i$ is unate with DNF-width of at most $n/2$. Now if $|z| = i$, then for every $w$ with $|w| = i$ we have that $\Phi_w(z)$ evaluates to true if and only if $w = z$; thus, $g_i(z) = g(z)$ for every $z$ with $|z| = i$, as desired.
    \end{proof}

    \begin{theorem} \label{thm:BSM-upper-bound-for-all-OR-AND-combiners}
        Let $g \co \zo^n \to \zo$ be any function. Then, $g(x \vee y)$ admits a $2$-party, $o(2^{0.82 n})$-size BSM protocol.
    \end{theorem}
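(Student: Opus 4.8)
The plan is to mimic the proof of Proposition~\ref{prop:monotone-implies-nontrivial-BSM-protocol}, but to use the weight-by-weight decomposition of Lemma~\ref{lem:decomposition-into-bounded-width-monotone-DNFs} in place of the monotonicity hypothesis, so that the entire input range is covered by \emph{unate} functions of DNF-width at most $n/2$, and then to let Carol pick out the relevant sub-protocol according to $|x \vee y|$.

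First I would invoke Lemma~\ref{lem:decomposition-into-bounded-width-monotone-DNFs} to obtain, for each $0 \le i \le n$, a unate function $g_i$ of DNF-width at most $n/2$ that agrees with $g$ on all inputs of Hamming weight exactly $i$. To each $g_i$ I would apply Lemma~\ref{lem:narrow-monotone-implies-efficient-BSM-protocol} in its unate extension (the last bullet of the remark following Proposition~\ref{prop:monotone-implies-nontrivial-BSM-protocol}); with width parameter $n/2$ this yields a $2$-party BSM protocol $\prot{P}_i = (A_i, B_i, C_i)$ for $g_i(x \vee y)$ in which Carol's circuit has size $O(\binom{n}{\le \lfloor n/4 \rfloor})$.

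The combined protocol is then the obvious one: Alice sends $x$ together with the $n+1$ messages $A_0(x), \dots, A_n(x)$, Bob sends $y$ together with $B_0(y), \dots, B_n(y)$, and Carol computes $z = x \vee y$ bitwise, computes its weight $w = |z|$, simulates $C_w$ on $(A_w(x), B_w(y))$, and outputs the result. Correctness is immediate: $z$ has weight $w$, so $g_w(z) = g(z) = g(x \vee y)$. For the size bound, Carol runs $n+1$ sub-Carols, each of size $O(\binom{n}{\le \lfloor n/4 \rfloor})$, plus $O(n)$ OR gates to form $z$, an $O(n \log n)$-size counter to obtain $w$, and an $O(n \log n)$-size multiplexer to select the answer of $\prot{P}_w$; the dominant term is $(n+1) \cdot O(\binom{n}{\le \lfloor n/4 \rfloor})$. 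Using $\binom{n}{\le \lfloor n/4 \rfloor} \le 2^{H(1/4) n}$ with $H(1/4) = 0.8113\ldots < 0.82$, this is $O(n \cdot 2^{H(1/4) n}) = o(2^{0.82 n})$, as claimed.

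I do not expect a substantive obstacle: the content is entirely in Lemma~\ref{lem:decomposition-into-bounded-width-monotone-DNFs} and Lemma~\ref{lem:narrow-monotone-implies-efficient-BSM-protocol}. The two points that need care are (i) verifying that the width-halving protocol of Lemma~\ref{lem:narrow-monotone-implies-efficient-BSM-protocol} really does extend to unate functions --- a negated variable $i$ contributes the fixed conjunction $\overline{x_i} \wedge \overline{y_i}$ to both the Alice side and the Bob side of each term, so only the positive part of a term of width $\le n/2$ gets split and the bookkeeping of Lemma~\ref{lem:narrow-monotone-implies-efficient-BSM-protocol} goes through verbatim --- and (ii) making sure the overhead of sending the raw inputs and of the weight/selection logic is negligible against $2^{0.81 n}$, which it clearly is. A final sanity check on the constant: the DNF-width bound from Lemma~\ref{lem:decomposition-into-bounded-width-monotone-DNFs} is $n/2$, halved inside Lemma~\ref{lem:narrow-monotone-implies-efficient-BSM-protocol} to $n/4$, and $H(1/4) \approx 0.8113$ comfortably beats $0.82$.
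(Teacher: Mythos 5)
Your proposal is correct and is essentially identical to the paper's own proof: the same weight-by-weight decomposition of Lemma~\ref{lem:decomposition-into-bounded-width-monotone-DNFs}, the same application of Lemma~\ref{lem:narrow-monotone-implies-efficient-BSM-protocol} to each $g_i$, Carol selecting the right sub-protocol from the weight of $x \vee y$, and the same bound $O(n \cdot 2^{H(1/4)n})$ with $H(1/4) \approx 0.811 < 0.82$. Your explicit check that the width-halving protocol extends to unate functions is a point the paper only addresses in the remark following Proposition~\ref{prop:monotone-implies-nontrivial-BSM-protocol}, so making it explicit is a small but welcome addition.
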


    \begin{proof}
        Let $g \co \zo^n \to \zo$ be a function. Consider the following BSM protocol:
        \begin{itemize}
            \item For every $0 \leq i \leq n$, Alice and Bob will construct $g_i$, as described in Lemma~\ref{lem:decomposition-into-bounded-width-monotone-DNFs}, and simulate the protocol of Lemma~\ref{lem:narrow-monotone-implies-efficient-BSM-protocol}. In addition, Alice and Bob will pass their inputs to Carol.
            \item Carol will compute the weight of $x \vee y$ to decide which protocol, out of the $n + 1$ protocols, it should follow to output the correct answer.
        \end{itemize}

        Correctness follows from Lemma~\ref{lem:decomposition-into-bounded-width-monotone-DNFs}, and the circuit size required, by Lemma~\ref{lem:narrow-monotone-implies-efficient-BSM-protocol}, is
        \[
        O\left(n \cdot \binom{n}{\leq n/4}\right) = O(n \cdot 2^{H(1/4) n}),
        \]
        plus an additional negligible polynomial because of the circuit Carol needs in order to compute the weight of $x \vee y$. Since $H(1/4) \approx 0.811 < 0.82$, the theorem follows.
    \end{proof}

    It turns out that we can do even better using covering codes in a similar way to how they are used in~\cite{AL00} to obtain a further improved SM upper bound for $\func{GAF}$. To that end, let us define an \emph{$(m, r)$-covering code of length $n$} as a set of $m$ codewords with the property that the Hamming balls of radius $r$ centered around the codewords cover $\zo^n$.

    \begin{lemma} \label{lem:covering-code-implies-efficient-BSM-protocol}
        If there is an $(m, r)$-covering code of length $n$, then for every function $g \co \zo^n \to \zo$, the function $g(x \vee y)$ admits a $2$-party, $O\left(m r \binom{n}{\leq r/2} \right)$-size BSM protocol.
    \end{lemma}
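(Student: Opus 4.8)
The plan is to use the covering code to reduce, one codeword at a time, to evaluating a bounded‑width unate function in the sense of Lemma~\ref{lem:narrow-monotone-implies-efficient-BSM-protocol}, and to let Carol pick the relevant codeword by computing $x\vee y$ herself (she is given $x$ and $y$, exactly as in Theorem~\ref{thm:BSM-upper-bound-for-all-OR-AND-combiners}). Fix an $(m,r)$‑covering code $\mathcal C=\{c_1,\dots,c_m\}\seq\zo^n$. For a codeword $c$ and a ``shell radius'' $0\le t\le r$, apply Lemma~\ref{lem:decomposition-into-bounded-width-monotone-DNFs} to the recentered function $g^{(c)}(w):=g(w\oplus c)$ at Hamming weight $t$ to obtain a monotone function $(g^{(c)})_t$ of DNF-width at most $\min(t,n-t)\le r$ that agrees with $g^{(c)}$ on weight-$t$ inputs, and set $G_{c,t}(z):=(g^{(c)})_t(z\oplus c)$. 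Then $G_{c,t}$ is unate with DNF-width $\le r$ (its negative literals sitting exactly on the coordinates in $c^{-1}(1)$), and crucially $G_{c,t}(z)=g(z)$ for every $z$ with $d_H(z,c)=t$. By Lemma~\ref{lem:narrow-monotone-implies-efficient-BSM-protocol}, in its extension to unate functions, there is a BSM protocol $\mathcal P_{c,t}$ computing $G_{c,t}(x\vee y)$ whose Carol has circuit size $O\bigl(\binom{n}{\le\floor{r/2}}\bigr)$ and whose messages have $O\bigl(\binom{n}{\le r/2}\bigr)$ bits.

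The protocol for $g(x\vee y)$ is then the obvious combination: Alice sends $x$ together with her $\mathcal P_{c,t}$‑message for every $c\in\mathcal C$ and every $t\in\{0,\dots,r\}$, Bob sends $y$ together with his messages symmetrically, and Carol first computes $z=x\vee y$, then finds (say) the lexicographically first codeword $c^\ast$ with $d_H(z,c^\ast)\le r$ and sets $t^\ast=d_H(z,c^\ast)$, and finally runs the Carol of $\mathcal P_{c^\ast,t^\ast}$ on the corresponding pair of messages and outputs the result. Correctness is immediate from the covering property ($c^\ast$ exists) and from the identity $g(z)=G_{c^\ast,t^\ast}(z)=G_{c^\ast,t^\ast}(x\vee y)$ on the shell of radius $t^\ast$ around $c^\ast$. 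Carol's circuit consists of a selector that identifies $(c^\ast,t^\ast)$ from $z$ (comparing $z$ against the $m$ codewords, $O(mn)$ gates) followed by a multiplexer over the $m(r+1)$ sub‑Carols, each of size $O\bigl(\binom{n}{\le r/2}\bigr)$, for a total of $O\bigl(mr\binom{n}{\le r/2}\bigr)$; the message lengths are of the same order.

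The one step that needs genuine care, and which I expect to be the main obstacle, is verifying that each $\mathcal P_{c,t}$ indeed has Carol-size $O\bigl(\binom{n}{\le r/2}\bigr)$ rather than $O\bigl(\binom{n}{\le r}\bigr)$. Recentering around a general codeword turns the OR combiner into a \emph{mixed} combiner: on $c^{-1}(0)$ the bits of $x$ and $y$ are still OR‑ed, while on $c^{-1}(1)$ they appear as $\overline{x_i}\wedge\overline{y_i}$. On the OR‑ed coordinates the argument of Lemma~\ref{lem:narrow-monotone-implies-efficient-BSM-protocol} applies verbatim --- each conjunction $\bigwedge_{i\in T}(x_i\vee y_i)$ decomposes as a disjunction over \emph{disjoint} splits $T=T_1\sqcup T_2$, so one side always has at most $|T|/2$ literals and the corresponding coefficient can be pushed to the other player --- and it is precisely the interaction with the negated (AND‑ed) coordinates that must be organized so that this ``halving'' survives. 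This bookkeeping is exactly what the unate extension of Lemma~\ref{lem:narrow-monotone-implies-efficient-BSM-protocol} provides, and is what was already invoked in Theorem~\ref{thm:BSM-upper-bound-for-all-OR-AND-combiners}; everything else --- the covering step, the selector, and the accounting --- is routine.
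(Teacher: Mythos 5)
Your proposal follows essentially the same route as the paper's proof: for each codeword $c_i$ and each radius $j \leq r$ it constructs a unate DNF of width at most $r$ that agrees with $g$ on the Hamming sphere $H_{i,j}$ (your recentered $G_{c,t}$ coincides term-by-term with the paper's $g_{i,j}$ built from the functions $\Phi_{i,S}$), applies the unate extension of Lemma~\ref{lem:narrow-monotone-implies-efficient-BSM-protocol} to each $g_{i,j}(x \vee y)$, and has Carol compute $x \vee y$ to select the relevant pair $(i,j)$, yielding the $O\bigl(m r \binom{n}{\leq r/2}\bigr)$ bound with the same selector/multiplexer accounting. The step you single out as delicate --- that the halving to $\binom{n}{\leq r/2}$ survives the negated (AND-combined) coordinates introduced by recentering --- is treated in the paper exactly as you treat it, namely by invoking the unate extension of Lemma~\ref{lem:narrow-monotone-implies-efficient-BSM-protocol} without spelling out that bookkeeping.
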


    \begin{proof}
        The proof is similar to the proof of Theorem 2 in~\cite{AL00}. Let $\set{c_1, \dotsc, c_m}$ be an $(m, r)$-covering code. For every $1 \leq i \leq m$ and $0 \leq j \leq r$, let us denote by $H_{i, j}$ the Hamming sphere of radius $j$ centered around $c_i$; that is, $H_{i, j} = \set{z \in \zo^n \co d(z, c_i) = j}$.

        For a fixed $1 \leq i \leq m$ and $S \seq [n]$, let us define the following function:
        \[
        \Phi_{i, S}(z) := \left( \bigwedge_{k \in S \co [c_i]_k = 0} z_k \right) \wedge \left( \bigwedge_{k \in S \co [c_i]_k = 1} \overline{z_k} \right);
        \]
        and the following DNF:
        \[
        g_{i, j}(z) = \bigvee_{|S| = j} \left( g(c_i + S) \wedge \Phi_{i, S}(z) \right).
        \]

        Clearly, the $g_{i, j}$ functions are unate, and one can easily verify that $g_{i, j}(z) = g(z)$ for every $z \in H_{i, j}$. This implies a BSM protocol for $g(x \vee y)$: Alice and Bob will simulate the protocol from Lemma~\ref{lem:narrow-monotone-implies-efficient-BSM-protocol} for every $1 \leq i \leq m$ and $0 \leq j \leq r$, and Carol will compute $x \vee y$ to find out in which Hamming sphere it lies and retrieve the correct output. Overall, using the bound given in Lemma~\ref{lem:narrow-monotone-implies-efficient-BSM-protocol}, we get a BSM protocol of size $O\left(m r \binom{n}{\leq r/2} \right)$.
    \end{proof}

    We will need the following lemma, which argues the existence of covering codes that we shall make use of.

    \begin{lemma}[\cite{AL00}] \label{lem:existence-of-covering-codes}
        For every $n$ and $r$, there exists an $(m, r)$-covering code with $m = O\left( n 2^n / \binom{n}{\leq r}\right)$.
    \end{lemma}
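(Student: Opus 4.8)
The plan is to establish the bound by a greedy covering argument; a first-moment (probabilistic) argument works equally well and gives the same order of magnitude. Write $V := \binom{n}{\leq r}$ for the number of points in a Hamming ball of radius $r$ in $\zo^n$. By symmetry of Hamming distance, a point $z$ lies in the radius-$r$ ball centered at $c$ if and only if $c$ lies in the radius-$r$ ball centered at $z$, and each such ball contains exactly $V$ points.

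First I would build the code incrementally while tracking the set $U \seq \zo^n$ of points not yet within distance $r$ of a chosen codeword, starting from $U = \zo^n$. At a generic step, double-count the pairs $(z,c)$ with $z \in U$ and $z$ in the radius-$r$ ball centered at $c$: summing over $z \in U$ gives exactly $|U| \cdot V$ such pairs, so averaging over the $2^n$ possible centers $c$ shows that some center $c^\star$ has a radius-$r$ ball containing at least $|U| V / 2^n$ points of $U$. Add $c^\star$ to the code and delete from $U$ the points it covers; then $|U|$ shrinks by a factor of at least $(1 - V/2^n)$. Iterating $m$ times from $|U| = 2^n$ yields $|U| \leq 2^n (1 - V/2^n)^m \leq 2^n \exp(-mV/2^n)$.

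Since $|U|$ is a non-negative integer, it equals $0$ as soon as the right-hand side drops strictly below $1$, i.e. as soon as $m \geq (2^n/V)\ln(2^n) = (n \ln 2)\, 2^n / V$. Taking $m = \lceil (n \ln 2)\, 2^n / \binom{n}{\leq r} \rceil$ therefore gives an $(m,r)$-covering code of length $n$ with $m = O(n 2^n / \binom{n}{\leq r})$, as claimed. There is no substantive obstacle here: the only steps needing a moment of care are the symmetry used to evaluate the incidence count and the passage from ``fewer than one uncovered point'' to ``zero uncovered points'' by integrality, both of which are immediate.
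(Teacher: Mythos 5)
Your argument is correct. The paper itself gives no proof of this lemma---it is quoted directly from Ambainis--Lokam~\cite{AL00}---and your greedy double-counting argument (equivalently, the first-moment argument with random centers) is exactly the standard proof of this covering-code bound used there: the incidence count via symmetry of Hamming balls, the $(1 - \binom{n}{\leq r}/2^n)$ shrinkage per step, and the integrality step from ``fewer than one uncovered point'' to ``zero'' are all sound, so $m = \lceil (n \ln 2)\, 2^n / \binom{n}{\leq r} \rceil$ suffices as you claim.
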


    We are now ready to prove the following proposition.
    \begin{theorem} \label{thm:BSM-upper-bound-for-all-bitwise-combined-functions}
        Let $g \co \zo^n \to \zo$ be any function. Then, $g(x \vee y)$ admits a $2$-party, $O(2^{0.729 n})$-size BSM protocol.
    \end{theorem}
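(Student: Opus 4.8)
The plan is to combine Lemma~\ref{lem:covering-code-implies-efficient-BSM-protocol} with Lemma~\ref{lem:existence-of-covering-codes} and then optimize the radius parameter $r$, exactly in the spirit of how~\cite{AL00} upgrades the $2^{0.82n}$-type bound to $2^{0.729n}$ for the XOR combiner. First I would fix a parameter $r = \rho n$ for a constant $\rho \in (0, 1/2)$ to be chosen at the end. By Lemma~\ref{lem:existence-of-covering-codes} there is an $(m, r)$-covering code of length $n$ with $m = O\bigl(n 2^n / \binom{n}{\leq r}\bigr)$. Plugging this into Lemma~\ref{lem:covering-code-implies-efficient-BSM-protocol} gives, for every $g \co \zo^n \to \zo$, a $2$-party BSM protocol for $g(x \vee y)$ whose Carol has size
\[
O\!\left( m\, r \binom{n}{\leq r/2} \right)
= O\!\left( \frac{n^2 \, 2^n \, \binom{n}{\leq r/2}}{\binom{n}{\leq r}} \right).
\]

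Next I would estimate the binomial ratio via the standard entropy bounds $\binom{n}{\leq \alpha n} = 2^{(H(\alpha) + o(1))n}$ for $\alpha \leq 1/2$, where $H$ is the binary entropy function. This turns the size bound into $2^{(1 + H(\rho/2) - H(\rho) + o(1))n}$, up to polynomial factors that are absorbed into the $o(1)$ in the exponent. So the task reduces to choosing $\rho$ to minimize the exponent
\[
E(\rho) := 1 + H(\rho/2) - H(\rho).
\]
This is a single-variable calculus problem: differentiate $E$, set $E'(\rho) = 0$, and solve for the optimal $\rho^\star$. One computes $E'(\rho) = \tfrac12 H'(\rho/2) - H'(\rho)$ with $H'(t) = \log_2\frac{1-t}{t}$, so the stationarity condition becomes $\tfrac12 \log_2\frac{1-\rho/2}{\rho/2} = \log_2\frac{1-\rho}{\rho}$, i.e.\ $\bigl(\tfrac{2-\rho}{\rho}\bigr)^{1/2} = \tfrac{1-\rho}{\rho}$, which after squaring is a quadratic in $\rho$. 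Solving it numerically yields $\rho^\star \approx 0.227$ and $E(\rho^\star) \approx 0.7289 < 0.729$, matching the Ambainis--Lokam constant for the XOR combiner. (This is the same optimization that appears in~\cite{AL00}, which is why the same numerical constant reappears.)

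Finally I would verify the protocol is well-formed: Alice and Bob each run, in parallel over all $1 \leq i \leq m$ and $0 \leq j \leq r$, the preprocessing from Lemma~\ref{lem:narrow-monotone-implies-efficient-BSM-protocol} applied to the unate functions $g_{i,j}$, and also forward their raw inputs $x, y$; Carol computes $x \vee y$, determines the pair $(i, j)$ with $x \vee y \in H_{i,j}$ (such a pair exists by the covering property and $j \leq r$), and runs the corresponding sub-protocol's Carol. Correctness is immediate from $g_{i,j}(z) = g(z)$ for $z \in H_{i,j}$, established in Lemma~\ref{lem:covering-code-implies-efficient-BSM-protocol}, and the extra circuitry for selecting the branch and computing the Hamming weight is only $\poly(n)$, negligible against $2^{0.729n}$. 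The main obstacle is purely the constant optimization: one must confirm that the stationary point of $E$ is a genuine minimum on $(0, 1/2)$ (it is, since $E$ is convex there as $H$ is concave and $H(\rho/2)$ has a milder second derivative) and that the resulting value is strictly below $0.729$, so that the $o(1)$ slack in the entropy estimates can be absorbed. An analogous argument with CNFs and the bitwise-AND combiner gives the same bound for $g(x \wedge y)$, as noted in the accompanying remarks.
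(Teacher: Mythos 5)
Your plan is essentially the paper's own proof: the paper likewise plugs the covering code of Lemma~\ref{lem:existence-of-covering-codes} into Lemma~\ref{lem:covering-code-implies-efficient-BSM-protocol}, optimizes the radius, and defers the algebra to~\cite{AL00}, choosing $r=(1-1/\sqrt{2})\,n$. The one problem is your numerics in the optimization step. Your stationarity condition is derived correctly and reduces to $\rho(2-\rho)=(1-\rho)^2$, i.e.\ $2\rho^2-4\rho+1=0$, but the root of this quadratic in $(0,1/2)$ is $\rho^\star=1-\tfrac{1}{\sqrt{2}}\approx 0.293$ (precisely the paper's choice of $r$), not $\rho^\star\approx 0.227$ as you state. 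This matters: at $\rho=0.227$ the exponent $E(\rho)=1+H(\rho/2)-H(\rho)$ evaluates to about $0.738$, which exceeds $0.729$, so the constants you quote would not yield the theorem as written. With the correct root one gets $E(1-1/\sqrt{2})\approx 0.7284<0.729$, matching the paper's $O(2^{(0.728\ldots+o(1))n})$, and the rest of your argument is fine: the entropy estimates absorb the polynomial factors, the protocol assembly (running the sub-protocols of Lemma~\ref{lem:narrow-monotone-implies-efficient-BSM-protocol} for all $g_{i,j}$, forwarding $x,y$, and having Carol locate $x\vee y$ in some sphere $H_{i,j}$, with only $\poly(n)$ selection circuitry) is exactly the construction inside Lemma~\ref{lem:covering-code-implies-efficient-BSM-protocol}, and the CNF analogue for the AND combiner is as in the paper's remarks. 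So the fix is purely arithmetic—solve your own quadratic correctly—after which the proposal coincides with the paper's proof.
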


    \begin{proof}
        Take the $(m, r)$-covering code from Lemma~\ref{lem:existence-of-covering-codes} and use the matching BSM protocol from Lemma~\ref{lem:covering-code-implies-efficient-BSM-protocol}. This gives us a BSM protocol of size $O\left(m r \binom{n}{\leq r/2} \right)$ with $m = O\left( n 2^n / \binom{n}{\leq r}\right)$. The remaining part now is purely algebraic and is already carried out in the proof of Theorem 4 in~\cite{AL00}, so we won't repeat it entirely and instead refer the reader to~\cite{AL00} for the details. We will just mention that setting $r = (1 - 1/\sqrt{2}) n$, yields the desired upper bound:
        \[
        O\left( m r \binom{n}{\leq r/2} \right) = O\left( r n 2^n  \cdot \left(\binom{n}{\leq r/2} \Big/ \binom{n}{\leq r}\right) \right) = \dotsb = O(2^{(0.728...+o(1)) n}). \qedhere
        \]
    \end{proof}

    \begin{remark}
        Theorem 3 in~\cite{AL00} shows that using Lemma~\ref{lem:covering-code-implies-efficient-BSM-protocol} we cannot obtain a better upper bound than $2^{0.728 n}$.
    \end{remark}

\subsection{\texorpdfstring{Degree bounded polynomials over $\Z_6$}{Degree bounded polynomials over Z6}} \label{ssec:degree-bounded-polynomials-over-Z_6}

	In the previous section, we obtained tight bounds on the degree of Carol computing the Boolean Equality function when Carol is a polynomial over $\Z_2$. In this section, we ask if there is any savings (in terms of degree) in computing the equality function if we allow Carol to be a polynomial over $\Z_m$ where $m$ is composite.
	
	We show that this is indeed the case for $m=6$. More precisely, we show that there is a BSM protocol with preprocessing by Alice and Bob with Carol being a degree $2$ polynomial over $\Z_6$ and a $\Z_6$-predicate $P$ such that $P$ evaluated on the output of Carol computes the equality function.

	\begin{proposition}\label{prop:BSM-via-matching-vector}
		There exists a Boolean decision predicate $P \colon \Z_6 \to \set{0,1}$ and a BSM protocol computing the Boolean equality function $\EQ$ on $2n$ bits where Alice and Bob outputs vectors in $\Z_6^k$ and Carol computes a degree $2$ polynomial over $\Z_6$ such that $\phi$ evaluated on Carol's output equals the equality function where $k = 2^{O(\sqrt{ n \log n})}$.
	\end{proposition}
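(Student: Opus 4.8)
The plan is to invoke Grolmusz's matching vector families modulo $6$. Recall that a \emph{matching vector family} of size $N$ in $\Z_6^k$ is a sequence of pairs $(u_1,v_1),\dots,(u_N,v_N)$ with $u_i,v_i\in\Z_6^k$ such that $\langle u_i,v_i\rangle\equiv 0 \pmod 6$ for every $i$, while $\langle u_i,v_j\rangle\not\equiv 0 \pmod 6$ for every $i\neq j$; such families arise from set systems whose members have size divisible by $6$ but whose pairwise intersections are not, as in the Barrington--Beigel--Rudich polynomial. Grolmusz~\cite{Gro00} shows that for $m=6$ such families exist of size $N = k^{\Omega(\log k/\log\log k)}$; equivalently, a family of any prescribed size $N$ exists in dimension $k = 2^{O(\sqrt{\log N\cdot \log\log N})}$. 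I would instantiate this with $N = 2^n$, indexing the pairs by strings $x\in\zo^n$, which yields a family in $\Z_6^k$ with $k = 2^{O(\sqrt{n\log n})}$.

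Given such a family $\{(u_x,v_x):x\in\zo^n\}$, the BSM protocol for $\EQ$ on $2n$ bits is immediate. On input $x$, Alice outputs the vector $u_x\in\Z_6^k$; on input $y$, Bob outputs $v_y\in\Z_6^k$. Carol, receiving $a=(a_1,\dots,a_k)$ and $b=(b_1,\dots,b_k)$, computes $c = \sum_{i=1}^k a_i b_i \bmod 6$, which is a multilinear polynomial of total degree $2$ over $\Z_6$ in the coordinates of the two messages. The decision predicate is $P\colon\Z_6\to\zo$ with $P(c)=1$ iff $c=0$. Correctness follows directly from the matching property: on inputs $x,y$ Carol's value is $c=\langle u_x,v_y\rangle \bmod 6$, which is $0$ when $x=y$ (so $P(c)=1=\EQ(x,y)$) and nonzero when $x\neq y$ (so $P(c)=0=\EQ(x,y)$). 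The message length is $k=2^{O(\sqrt{n\log n})}$ over $\Z_6$, matching the $2^{\wt{\Theta}(\sqrt n)}$ bound recorded in Table~\ref{tbl:result-summary}, and Carol's degree is $2$.

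There is no genuinely hard step here; the main things to get right are that Grolmusz's construction provides the two-sided condition over $\Z_6$ (matching inner products vanish, non-matching ones do not) rather than merely a one-sided bound, and the parameter translation $\log N = n \Rightarrow k = 2^{O(\sqrt{n\log n})}$, where the $\log\log N$ factor becomes $\log n$. I would also note the sharp contrast with Proposition~\ref{prop:degree-reduction-equality-lower-bound}: for $\F_2$-polynomials with messages of length $2^{O(\sqrt{n\log n})}$ that result forces degree $\Omega(\sqrt{n/\log n})$, whereas passing to the composite modulus $6$ (together with a nonlinear output predicate) brings the degree down to the trivial minimum $2$.
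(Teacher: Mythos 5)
Your proposal is correct and follows essentially the same route as the paper: instantiate Grolmusz's mod-$6$ matching vector family with $N=2^n$ (giving $k=2^{O(\sqrt{n\log n})}$), have Alice and Bob send $u_x$ and $v_y$, let Carol compute the degree-$2$ inner product over $\Z_6$, and apply a predicate that tests for zero. The only cosmetic difference is that the paper writes the predicate explicitly as $P(c)=q(c)^3+1$ via the CRT map $q(c)=(c \bmod 2, c \bmod 3)$ into $\F_4$, which is just an algebraic realization of your indicator of $0$ (valid since the off-diagonal inner products lie in $S=\{1,3,4\}\not\ni 0$).
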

 
	   This gives a constant degree Carol with sublinear preprocessing output length over $\Z_6$ which should be contrasted with the BSM protocol in Proposition~\ref{prop:universal-degree-reduction-upper-bound} of polynomial  preprocessing output length and near linear degree over $\Z_2$.  Our construction of the BSM protocol goes via the existence of matching vector family over $\Z_m$ for composite integer $m$. For two vectors $u,v \in \Z_m^k$, let $\langle u, v \rangle$ be given by $\sum_i u_iv_i \mod 6$. We start by defining a matching vector family. 
	
	\begin{definition}[Matching Vector Family] Let $S \subseteq \Z_m\setminus \{0\}$ and let $\mathcal{F} = (\mathcal{U}, \mathcal{V})$ where $\mathcal{U} = (u_1, \ldots, u_n)$ and $\mathcal{V} = (v_1,\ldots, v_n)$ such that for all $i \in \{1,2\ldots,n\}$  $u_i, v_i \in \Z_m^k$. The family $\mathcal{F}$ is called as an $S$-matching vector family of size $n$ and dimension $k$, if for all $i,j \in \{1,2\ldots,n\}$,
	\[ \langle u_i , v_j \rangle  \begin{cases}
	 = 0 & \text{ if } i = j \\
	 \in S & \text{ if } i \ne j
\end{cases}\]			
		
	\end{definition}
	
	Grolmusz~\cite{Gro00} constructed explicit matching vectors family over $\Z_m$ for any composite $m$. 
	
	\begin{proposition}[\cite{Gro00}, Theorem 1.2]
		Let $m$ be a positive integer expressible as a product of distinct primes, $m = p_1^{\alpha_1}p_2^{\alpha_2}\ldots p_r^{\alpha_r}$ with $r \ge 2$. Then, for an integer $c$ (depending on $m$) there is an explicitly constructible $S$-matching vector family $\mathcal{F}$ in $\Z_m^k$ of size $n \ge \exp \left (\frac{c(\log k)^r}{(\log \log k)^{r-1}} \right)$ where $S = \{a \in \Z_m \mid \forall i \in [r], a \mod p_i \in \{0,1\}\} \setminus \{0\}$.
	\end{proposition}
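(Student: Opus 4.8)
The plan is to reconstruct Grolmusz's argument: first build a low‑degree univariate ``slack polynomial'' over $\Z_m$ that isolates one value while keeping all other relevant values inside $S$, then promote it to a matching vector family by the standard indicator‑vector/coefficient‑splitting trick, and finally optimise parameters. \textbf{Step 1 (slack polynomial).} Fix a weight $w$; I claim there is $G\in\Z_m[t]$ of degree $\delta=O_m(w^{1/r})$ with $G(w)=0$ and $G(t)\in S$ for all $t\in\{0,\dots,w-1\}$. Via the affine (hence degree‑preserving) substitution $t\mapsto w-t$, it suffices to find $\widetilde G\in\Z_m[s]$ with $\widetilde G(0)=0$ and $\widetilde G(s)\in S$ for $1\le s\le w$. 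For each prime‑power factor $p_\nu^{\alpha_\nu}$ of $m$, pick $q_\nu=p_\nu^{a_\nu}$ with $a_\nu\approx\tfrac1r\log_{p_\nu}w$ so that $\prod_{\nu=1}^r q_\nu>w$ and $q_\nu=O_m(w^{1/r})$, and, using CRT $\Z_m\cong\prod_\nu\Z_{p_\nu^{\alpha_\nu}}$, set $\widetilde G(s)\equiv 1-\binom{s+q_\nu-1}{q_\nu-1}\pmod{p_\nu^{\alpha_\nu}}$. By Vandermonde's identity $\binom{s+q_\nu-1}{q_\nu-1}=\sum_{j=0}^{q_\nu-1}\binom{q_\nu-1}{j}\binom sj$ is an honest polynomial in $s$ of degree $q_\nu-1$ with \emph{integer} coefficients, so $\deg\widetilde G\le\max_\nu(q_\nu-1)=O_m(w^{1/r})$; by Lucas' theorem $\binom{s+q_\nu-1}{q_\nu-1}\equiv[\,q_\nu\mid s\,]\pmod{p_\nu}$, so $\widetilde G(s)\bmod p_\nu\in\{0,1\}$ always; $\widetilde G(0)=0$ in $\Z_m$ because $\binom{q_\nu-1}{q_\nu-1}=1$ exactly; and for $1\le s\le w<\prod_\nu q_\nu$ some $q_{\nu_0}\nmid s$, so $\widetilde G(s)\not\equiv0\pmod{p_{\nu_0}}$ and hence $\widetilde G(s)\neq0$ in $\Z_m$. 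Thus $\widetilde G(s)\in S$ for $1\le s\le w$.

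\textbf{Step 2 (matching vectors and parameters).} Fix $d\ge w$, let $\mathcal I=\{a\in\{0,1\}^d:|a|=w\}$ (so $|\mathcal I|=\binom dw$), and note $|a\cap b|=\sum_i a_ib_i\in\{0,\dots,w\}$ with $|a\cap b|=w$ iff $a=b$. Substituting $\sum_i a_ib_i$ into $G$ and multilinearising via $a_i^2=a_i,\ b_i^2=b_i$ (e.g.\ $\binom{\sum_i a_ib_i}{j}=\sum_{|T|=j}\prod_{i\in T}a_ib_i$, all integer arithmetic) gives an identity $G(|a\cap b|)=\sum_{T\subseteq[d],\,|T|\le\delta}c_{|T|}\,a^Tb^T$ with $c_{|T|}\in\Z_m$ depending only on $|T|$. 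Then $u_a:=(a^T)_{|T|\le\delta}$ and $v_b:=(c_{|T|}b^T)_{|T|\le\delta}$ in $\Z_m^k$, $k:=\binom d{\le\delta}$, satisfy $\langle u_a,v_b\rangle=G(|a\cap b|)$, so $\langle u_a,v_a\rangle=G(w)=0$ and $\langle u_a,v_b\rangle\in S$ for $a\neq b$; this is an explicit $S$‑matching vector family of size $n=\binom dw$ and dimension $k=\binom d{\le\delta}$. Choosing $d=w^2$: from $(w^2/\delta)^\delta\le k\le(w^2+1)^\delta$ with $\delta=\Theta_m(w^{1/r})$ we get $\log k=\Theta_m(w^{1/r}\log w)$, hence $\log w=\Theta(\log\log k)$ and $w=\Theta_m\!\bigl((\log k)^r/(\log\log k)^r\bigr)$; since $n=\binom{w^2}{w}\ge w^w$, this gives $\log n\ge w\log w=\Omega_m\!\bigl((\log k)^r/(\log\log k)^{r-1}\bigr)$, i.e.\ $n\ge\exp\!\bigl(c(\log k)^r/(\log\log k)^{r-1}\bigr)$ for a suitable $c=c(m)>0$. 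Dimensions not of the form $\binom{w^2}{\le\delta(w)}$ are obtained by taking the family for the largest admissible $w$ and zero‑padding the vectors (consecutive admissible dimensions differ by a $\mathrm{poly}(w)=\mathrm{polylog}(k)$ factor, which only affects $c$). Every ingredient — binomials modulo prime powers, CRT, and the vectors $u_a,v_b$ — is computable in time polynomial in the output length, giving explicitness.

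\textbf{Main obstacle.} The content lies in Step 1: producing a polynomial over $\Z_m$ of degree as small as $O(w^{1/r})$ that separates the diagonal value from the others while keeping the latter inside the ``Boolean'' set $S$. This is the Barrington--Beigel--Rudich phenomenon that OR has degree $O(N^{1/r})$ modulo a product of $r$ distinct primes; the crucial device is the congruence $\binom{s+q-1}{q-1}\equiv[\,q\mid s\,]\pmod p$ for prime powers $q=p^a$ (via Lucas' theorem), which realises a divisibility indicator as a polynomial of degree $q-1$, glued across the primes by CRT so that the product bound $\prod_\nu q_\nu>w$ forces non‑vanishing off the diagonal. Everything else is routine multilinear bookkeeping and binomial estimates.
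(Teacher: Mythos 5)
Your reconstruction is correct, and it is essentially Grolmusz's own argument, which the paper simply imports by citation: the BBR-style low-degree OR representation modulo $m$ built from the congruence $\binom{s+q-1}{q-1}\equiv[\,q\mid s\,]\pmod{p}$ glued by CRT, followed by the standard intersection-encoding into vectors indexed by subsets of size at most $\delta$, with the same $d=w^2$ parameter optimization. The only nit is that the "integer coefficients" claim should be read in the binomial basis $\binom{s}{j}$ (the monomial coefficients are rational), which is exactly the form you use in Step 2, so nothing breaks.
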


	For the special case when $r=2$ with $p_1 = 2$ and $p_2 = 3$, the following corollary holds.
	
	\begin{corollary}\label{corr:matching-vector-family-mod-6}
		There exists an explicitly constructible $S$-matching vector family $\mathcal{F}$ in $\Z_6^k$ of size $n \ge \exp\left (c\cdot(\log k)^2/\log \log k  \right )$ with $S = \{1,3,4\}$.
	\end{corollary}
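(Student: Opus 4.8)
The corollary is a direct specialization of the preceding proposition of Grolmusz, so the plan is simply to instantiate its parameters and read off the resulting set $S$. I would take $m = 6$, whose factorization into distinct prime powers is $6 = 2^1 \cdot 3^1$, so that in the notation of the proposition we have $r = 2$, $p_1 = 2$, $\alpha_1 = 1$, $p_2 = 3$, and $\alpha_2 = 1$. Since $r = 2 \ge 2$, the hypothesis of the proposition is satisfied, and it produces an explicitly constructible $S$-matching vector family in $\Z_6^k$ of size
\[
n \ge \exp\!\left(\frac{c(\log k)^2}{(\log\log k)^{r-1}}\right) = \exp\!\left(\frac{c(\log k)^2}{\log\log k}\right),
\]
which is exactly the claimed bound; explicitness of the family is inherited verbatim from Grolmusz's construction.

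The only remaining point is to identify $S$ in this special case. By definition, $S = \{a \in \Z_6 : a \bmod p_i \in \{0,1\} \text{ for all } i \in [2]\} \setminus \{0\}$. The condition at $p_1 = 2$, namely $a \bmod 2 \in \{0,1\}$, holds for every integer and so imposes no restriction. The condition at $p_2 = 3$ is $a \bmod 3 \in \{0,1\}$; running through the residues of $\Z_6 = \{0,1,2,3,4,5\}$ modulo $3$ (which are $0,1,2,0,1,2$ respectively) shows this holds precisely for $a \in \{0,1,3,4\}$. Deleting $0$ yields $S = \{1,3,4\}$, completing the argument.

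Since the proof amounts to a substitution into an already-cited theorem together with a one-line residue computation, there is no genuine obstacle. The only thing requiring care is to match the indexing convention of the proposition — a product of $r$ distinct prime powers — to the factorization $6 = 2 \cdot 3$ with $r = 2$, rather than to a trivial ``factorization'' $6 = 6^1$ that would fail the $r \ge 2$ requirement.
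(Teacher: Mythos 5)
Your proposal is correct and follows exactly the route the paper takes: the corollary is just Grolmusz's theorem instantiated at $m=6$, $r=2$, $p_1=2$, $p_2=3$, with the exponent $(\log\log k)^{r-1}$ becoming $\log\log k$ and the residue computation modulo $2$ and $3$ giving $S=\{1,3,4\}$. Your check that the mod-$2$ condition is vacuous and the mod-$3$ condition selects $\{0,1,3,4\}$ before removing $0$ is precisely the one-line verification the paper leaves implicit.
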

	
	We are now ready to prove the main result of this section.
	
	\begin{proof}[Proof of Proposition~\ref{prop:BSM-via-matching-vector}]
		Before the start of the protocol, Alice and Bob constructs an $S$-matching vector family of size at least $2^n$ over $\Z_6$  which exists by  Corollary~\ref{corr:matching-vector-family-mod-6} with $S = \{1,3,4\}$. Towards this, they choose a $k$ such that 
		\begin{equation} \label{eq:bound-on-k}
		n \ln 2 \le \frac{c(\log k)^2}{\log \log k} \le 2 n \ln 2
		\end{equation}

		This gives an explicit matching vector family in $\Z_6^k$ given by $(\mathcal{U}, \mathcal{V})$ with 
		\[ |\mathcal{U}| = |\mathcal{V}| \ge \exp\left (c\cdot (\log k)^2/\log \log k  \right ) \ge 2^n \]
		The last inequality follows by the lower bound for $k$ in Eq.~\ref{eq:bound-on-k}. Let $int(x)$ be the unique integer corresponding to the binary string $x$. We now describe what each of Alice, Bob and Carol computes. 
		
		Alice computes $A$ which maps each input string $x \in \{0,1\}^n$ to $u_{int(x)} \in \mathcal{U}$. Similarly, Bob computes $B$ which maps each input string $y \in \{0,1\}^n$ to $v_{int(y)} \in \mathcal{V}$. The length of their output is $k$ which is bounded by $2^{O(\sqrt{n \log n})}$ by upper bound for $k$ in Eq.~\ref{eq:bound-on-k}.
		
		We now describe Carol. Let $p$ be a polynomial over $\Z_6$ on $2k$ variables and $q$ be a homomorphism from $\Z_6$ to $\F_2^2$ defined respectively as,
		\[ p(a,b) = \langle a, b\rangle = \sum_{i=1}^k a_ib_i \mod 6  \quad \quad \quad q(c)  = (c \text{ mod } 2, c \text{ mod } 3) \] 
		
		Carol on input $u \in \Z_6^k$ from Alice and $v \in \Z_6^k$ from Bob outputs $p(u,v)$ which is a polynomial of degree $2$ over $\Z_6$. 
		
		Let $P\colon \Z_6 \to \F_4$ be defined as $P(c) = q(c)^3+1$. Since $q(c)^3 \in \set{0,1}$, $P$ is indeed a Boolean predicate. In this definition, we are implicitly using the isomorphism between $\F_4$ and $\F_2^2$.
		We now argue the correctness. When $x = y$ with $t =int(x) = int(y)$, Alice and Bob outputs $u_t$ and $v_t$ respectively. By property of the matching vector family, $\langle u_t,v_t \rangle = p(u_t,v_t) = 0$. Hence, the output of Carol given by $C(A(x), B(y)) = p(u_t,v_t) = 0$. Now evaluating $P$ on Carol's output will give a $1$. When $x \ne y$ with $t_x = int(x)$ and $t_y = int(y)$, Alice and Bob outputs $u_{t_x},v_{t_y}$ respectively. This means that  $p(u_{t_x},v_{t_y}) = \langle u_{t_x},v_{t_y} \rangle  \in S$. Let $c \in S$ be the value obtained. Since $0 \not \in S$, $q(c)$ is guaranteed to be a non-zero element in $\F_4$. Hence, $P(c) = q(c)^3+1$ will evaluate to $0$.
	\end{proof}

\end{document}